\setlist[enumerate]{leftmargin=.5in}
\setlist[itemize]{leftmargin=.5in}
\crefname{algocf}{alg.}{algs.}
\Crefname{algocf}{Algorithm}{Algorithms}
\newtheorem{condition}{Condition}
\newtheorem{remark}{Remark}
\def\dd{\text{d}}
\title{Fixed-budget simulation method for growing cell populations\thanks{\funding{
This work was funded by {the National Natural Science Foundation of China (12471475 \& 11971405 to Da Zhou), the Fundamental Research Funds for the Central Universities (20720240151 \& 20720230023 to Da Zhou), and the Xiamen University Overseas Study Program (to Shaoqing Chen). }}}}
\author{Shaoqing Chen\thanks{Shaoqing Chen and Da Zhou are with the School of Mathematical Sciences, Xiamen University, Xiamen 361005, China.}
\and Zhou Fang\thanks{Zhou Fang is with the Academy of Mathematics and Systems Science, Chinese Academy of
Sciences, Beijing 100190, China. (e-mail: {zhfang@amss.ac.cn})
He was with the Department of Biosystems Science and Engineering, ETH Zurich, Klingelbergstrasse 48, 4056 Basel, Switzerland.}
\and {Zheng Hu\thanks{{Zheng Hu is with Key Laboratory of Quantitative Synthetic Biology, Shenzhen Institute of Synthetic Biology, Shenzhen Institute of Advanced Technology, Chinese Academy of Sciences, Shenzhen 518000, China}}}
\and Da Zhou\footnotemark[2]
}
\begin{document}

\maketitle

\begin{abstract}
    Investigating the dynamics of growing cell populations is crucial for unraveling key biological mechanisms in living organisms, with many important applications in therapeutics and biochemical engineering. 
    Classical agent-based simulation algorithms are often inefficient for these systems because they track each individual cell, making them impractical for fast (or even exponentially) growing cell populations.
    To address this challenge, we introduce a novel stochastic simulation approach based on a Feynman-Kac-like representation of the population dynamics. 
    This method, named the Feynman-Kac-inspired Gillespie's Stochastic Simulation Algorithm (FKG-SSA), always employs a fixed number of independently simulated cells for Monte Carlo computation of the system, resulting in a constant computational complexity regardless of the population size. 
    Furthermore, we theoretically show the statistical consistency of the proposed method, indicating its accuracy and reliability.
    Finally, a couple of biologically relevant numerical examples are presented to illustrate the approach. 
    Overall, the proposed FKG-SSA effectively addresses the challenge of simulating growing cell populations, providing a solid foundation for better analysis of these systems.
\end{abstract}

\begin{keywords}
    Feynman-Kac formula, stochastic simulation algorithm, growing cell populations, chemical reaction network, Feynman-Kac-inspired Gillespie's Stochastic Simulation Algorithm (FKG-SSA)
\end{keywords}

\begin{AMS}
  60J22, 65C05, 92-08, 92B05
\end{AMS}

\section{Introduction}

Mathematical modeling and analysis have long been recognized as powerful tools for investigating living systems, providing insights into key biological mechanisms across various domains, such as oncology \cite{chen2024frequency, lakatos2020evolutionary, luksza2017neoantigen}, biological rhythms \cite{ goldbeter2002computational, song2024s, novak2008design}, and epidemiology \cite{ross1916application,kermack1927contribution}.
In studying cell populations, two major types of processes require detailed analysis: intracellular chemical reactions and cellular events (e.g., cell division, mutation, and death).
Over the past half-century, the dynamics of intracellular chemical reactions have been extensively investigated through theoretical analysis \cite{feinberg1987chemical,horn1972general, li2016systematic1, ke2019complex, fang2019lyapunov} and numerical simulations \cite{gillespie1977exact,gillespie2001approximate,rathinam2003stiffness,haseltine2002approximate, fang2023convergence,zechner2014uncoupled, fang2024advanced}.
Furthermore, rational engineering strategies for living cells have also been successfully proposed from this perspective \cite{ma2009defining,aoki2019universal,gardner2000construction,elowitz2000synthetic}.
In contrast, the research on exploring the combining effects of these two key types of processes has only recently gained momentum \cite{anderson2023stochastic,duso2020stochastic,lunz2021beyond,aditya2022using,piho2024feedback, jia2023coupling, gupta2022frequency}. 
Many studies have shown that cellular events play a significant role in regulating the cell population dynamics and that they can fundamentally change the system behaviors depending on their presence \cite{lunz2021beyond,piho2024feedback,jia2023coupling}. 
These facts underscore the need for more systematic and in-depth studies of cell populations that consider both major process types.

Cell population systems are often complicated due to the inherent non-linearity and randomness in intracellular chemical reactions and cellular events \cite{mcadams1997stochastic,arkin1998stochastic,fedoroff2002small}. 
For such complicated systems, numerical simulation is a promising approach for gaining insights through quantitative analysis. 
Following this line, many numerical approaches have been proposed for dividing/growing cell populations.
One popular approach is the agent-based simulation algorithm (see e.g., \cite{gorochowski2012bsim,izaguirre2004compucell,matyjaszkiewicz2017bsim,jafari2021multiscale,li2024computational,Piho2024agent} and also a recent review \cite{pleyer2023agent}), which exhaustively tracks intracellular chemical reactions and cellular events of every individual cell. 
This agent-based approach has been successfully applied to analyze many types of cell population systems (see the aforementioned
references).
However, it also suffers high computational costs due to the exhaustive tracking of individual cells, making them impractical for fast or even exponentially growing cell populations. 
Finite-state projection (FSP) \cite{munsky2006finite,piho2024feedback,ruess2023stochastic,lunz2021beyond}
is another popular approach for computing population systems, which directly utilizes numerical solvers to the differential equations characterizing the expected population dynamics.
This approach is generally accurate and reliable when the differential equation is truncated to a reasonably large size for computation.
However, its computational complexity often scales exponentially with the number of chemical species, rendering it intractable for complicated systems consisting of many chemical species.
To mitigate computational cost, one can also use the moment-closure approach \cite{duso2020stochastic,bronstein2018variational}, which closes and tracks the moment dynamics of the aforementioned differential equation.
While computationally efficient, moment closure lacks theoretical guarantees for its performance, and, therefore, its result is not always reliable. 
Overall, there is still a lack of efficient and reliable computation approaches for complicated cell population systems consisting of intracellular chemical reactions and cellular events.

To address this computational challenge, we propose a novel fixed-budget simulation algorithm for growing cell populations, which avoids tracking all the individual cells while still providing accurate and reliable numerical results. 
The method is motivated by the observation that the exhaustive tracking of all the cells is not necessary when only intracellular reactions are present. 
In such cases, the mean population dynamics is characterized by the master equation of the stochastic intracellular reaction processes.
Therefore, it can be effectively solved by the well-known Gillespie's stochastic simulation algorithm (SSA) \cite{gillespie1977exact}, which approximates the solution using a fixed number of simulated cells. 
The number of simulated cells can be significantly less than the actual cell population, ensuring computational efficiency. 
Moreover, its accuracy is also guaranteed by the law of large numbers.

Inspired by this observation, we first derived a Feynman-Kac-like formula for the dynamics of the mean cell populations consisting of both intracellular reactions and cellular events. 
The Feynman-Kac-like forintermula provides a probabilistic interpretation of the solution of the mean population dynamics through a modified stochastic cell system.
This enables us to further derive a stochastic simulation algorithm for the system that only requires simulating a fixed number of the modified cell systems.
We name this method the Feynman-Kac-inspired Gillespie's Stochastic Simulation Algorithm (FKG-SSA).
Again, the number of simulated cells in the FKG-SSA can be significantly less than the actual cell population size, allowing for a much reduced computational cost.
Furthermore, we show the convergence of the FKG-SSA to the exact solution through a series of rigorous theoretical analyses, which demonstrates the reliability of the approach. 
Notably, the theoretical results are not trivial consequences of the law of large numbers due to the interactions among the simulated samples. 
Moreover, the interacting term is not globally Lipschitz, adding more challenges to the analysis. 
In this paper, we prove the non-trivial convergence result by adopting some sophisticated techniques from the mean-field system analysis \cite{buckdahn2009mean, antonelli2002rate,bossy1997stochastic}.
The efficiency and accuracy of our approach are also illustrated by several biologically relevant numerical examples, which show that our method can be orders of magnitude more efficient than the agent-based simulation method while maintaining the same accuracy level. 
Overall, our FKG-SSA effectively addresses the computational challenge associated with growing cell populations.

The remainder of this paper is organized as follows. 
\Cref{section modeling} first briefly reviews the modeling of stochastic cell population systems and their mean population dynamics, following the literature \cite{duso2020stochastic,ruess2023stochastic}.
Then, in \Cref{section the algorithm}, we proposed a fixed-budget simulation algorithm (i.e., the FKG-SSA) for the growing cell populations from the perspective of the Feynman-Kac formula.
A couple of numerical examples are presented in \Cref{section numerical examples} to illustrate the accuracy and efficiency of our algorithm.
Finally, \Cref{section conclusion} concludes the paper.
For better readability, we provide most mathematical proofs in the appendix.

Some frequently used notations are summarized as follows. 
$\left(\Omega, \mathcal F, \{\mathcal F_t\}_{t\geq 0}, \mathbb P\right)$ is the filtered probability space, where $\Omega$ is the sample space, $\mathcal F$ is the sigma algebra on $\Omega$, $\{\mathcal F_t\}_{t\geq 0}$ is the filtration, and $\mathbb P$ is the probability measure. 
$\mathbbold{1}_{a}(x)$ is the  indicator function, which equals $1$ if $x=a$, and $0$ otherwise.
``$\wedge$" denotes the minimum of two quantities (i.e., $a\wedge b = \min(a,b)$).
$L^1\left(\mathbb Z^d_{\geq 0}\right)$ is the space of the functions from $\mathbb Z^d_{\geq 0}$ to $\mathbb R$ with finite $L^1$-norm.
The inner product $<\cdot, \cdot>$ for two functions is defined by $<f, g> \triangleq  \sum_{x\in\mathbb Z^{d}_{\geq 0}} f(x)g(x)$.
For a bounded operator $A$, we define the operator exponential by $e^{A}\triangleq \sum_{n=0}^{\infty} \frac{A^{n}}{n!}$, which is again a bounded operator.
The notations $\|\cdot\|_1$, $\|\cdot\|_2$, and $\|\cdot\|_\infty$ indicate the $L^1$-norm, $L^2$-norm, and the $L^{\infty}$-norm, respectively.

\section{Growing cell population modeling via chemical reaction network theory}\label{section modeling}

\subsection{Stochastic modeling of cell populations}

Here, we introduce the cell population modeling presented in \cite{duso2020stochastic,ruess2023stochastic}.
We consider the population systems in which individual cells involve stochastic chemical reactions, cell division, and death events.  
Also, we consider cell influx/migration from the environment to the considered system, which brings in new cells. 
In such a cell population, individual cells can have very different internal states.
Following \cite{duso2020stochastic,ruess2023stochastic}, we denote the internal state of the $i$-th cell as $\textbf{x}_{i}(t) = (\textbf{x}_{1,i}(t),\dots, \textbf{x}_{d, i}(t)) \in\mathbb Z_{\geq 0}^{d}$ with $d$ the number of the considered chemical species and $\textbf{x}_{k, i}(t)$ the number of $k$-th species in this cell at the time point $t$.
At the population level, we term $Y_t(x)$ (with $x\in\mathbb Z_{\geq 0}^d$) as the number of cells having the internal state $x$ at time $t$, i.e., $Y_t(x) = \# \{\textbf{x}_{i}(t)\big| \textbf{x}_{i}(t) = x\}$ with $\#$ indicating the cardinality of the set.
Thus, the function-valued variable $Y_t$ (from $\mathbb Z_{\geq 0}^d$ to $\mathbb Z_{\geq 0}$) represents the state of the cell population system at time $t$.
With these system-state notations, the system dynamics is provided in the following paragraphs; a summary is presented in \Cref{table:1}.

For intracellular chemical reaction processes, we consider that each cell has $r$ reactions:
\begin{align*}
	\nu_{1,j} S_1 + \dots + \nu_{d,j}S_d \to \nu'_{1,j} S_1 + \dots + \nu'_{d,j}S_d && j=1,\dots,r,
\end{align*}
where $S_1,\dots,S_d$ are $d$ different chemical species, and $\nu_{i,j}$ and $\nu'_{i,j}$ are the stoichiometric coefficients representing the numbers of molecules consumed and produced in the associated reaction.
Due to the low molecular counts, the firings of the reactions are usually modeled by Markovian jumps.
Specifically, each individual cell can change its state from $\textbf{x}_{i}(t)$ to $\textbf{x}_{i}(t) + \zeta_j$ at a rate of $\lambda^\text{react}_{j}(\textbf{x}_{i}(t)) $ (for $j=1,\dots, r$), where $\zeta_j \triangleq \nu'_{\cdot, j}- \nu_{\cdot, j}$ and $\lambda^\text{react}_{j}(x) $ is a non-negative function.
Then, at the population level, these chemical reactions cause the system state to change from $Y_t$ to $Y_t + \mathbbold 1_{x+\zeta_j} - \mathbbold 1_{x}$ at the rate $Y_t(x) \lambda^\text{react}_{j}(x)$ for any $x\in\mathbb Z^{d}_{\geq 0}$ and $j\in\{1,\dots,r\}$.
Here, $\mathbbold 1_{x}$ is the indicator function, which equals $1$ if its argument is $x$, and $0$ otherwise.
Such transitions are listed in the first row of \Cref{table:1}.

Then, we consider the cell death, division, and influx.
For the $i$-th cell, we consider that its death/remove rate is $\lambda^\text{death}(\textbf{x}_{i}(t))$ which depends on its internal state $\textbf{x}_{i}(t)$.
Then, at the population level, cell death can change the system state from $Y_t$ to $Y_t - \mathbbold 1_{x}$ at the rate of $Y_t(x)\lambda^\text{death}(x)$ for any $x\in \mathbb Z_{\geq 0}^d$.
For the division process, we consider that each cell has the division rate $\bar \lambda^\text{div}(\textbf{x}_{i}(t))$, and, after the division, the two daughter cells are assigned, in a specific order, the internal states $x'$ and $x''$ with probability $p^\text{div}(x', x''| \textbf{x}_{i}(t))$.
Thus, at the population level, cell division can alter the system state from $Y_t$ to $Y_{t} -  \mathbbold 1_{x} +  \mathbbold 1_{x'} +  \mathbbold 1_{x''}$ at the rate $Y_t(x) \left[\lambda^\text{div}(x, x', x'')+\lambda^\text{div}(x, x'', x')\right]$ (with $\lambda^\text{div}(x, x', x'')\triangleq  \bar \lambda^\text{div}(x) p^\text{div}(x', x'', | x)$)
for any $x$, $x'$, and $x''$ in $\mathbb Z^d_{\geq 0}$.
Finally, we consider that the system can add one cell in state $x$ (i.e., the state switches from $Y_t$ to $Y_t + \mathbbold 1_{x}$) at the rate of $\lambda^{in}(x)$ due to the cell migration from the environment to the considered cell population. 
The transitions caused by these cellular events are listed in the last three rows of \Cref{table:1}.

\begin{table}
\centering
\caption{Summary of all the transitions in the cell population system. Here, $y$ is any function from $\mathbb Z_{\geq 0}^d$ to $\mathbb Z_{\geq 0}$, the notation $\mathbbold 1_{a}$ represents the indicator function, and $x$, $x'$, and $x''$ are the states in $\mathbb Z^{d}_{\geq 0}$.}\label{tab1}%
\begin{tabular}{@{}lll@{}}
\toprule
Event &  Change of $Y_t$ & Rate\\
\midrule
$j$-th chemical reaction   &  $y \rightarrow y + \mathbbold 1_{x+\zeta_{j}} - \mathbbold 1_{x}$   & $y(x)\lambda^\text{react}_{j}(x)$   \\
cell death    & $y \rightarrow y - \mathbbold 1_{x}$  & $y(x)\lambda^\text{death}(x)$\\ 
cell division \& mutation  &  $y \rightarrow y - \mathbbold 1_{x} + \mathbbold 1_{x'} + \mathbbold 1_{x''}$ & $y(x) \left[\lambda^\text{div}(x, x', x'')+\lambda^\text{div}(x, x'', x')\right]$ \\
cell influx  &  $y \rightarrow y + \mathbbold 1_{x}$ & $\lambda^\text{in}(x)$ \\
\bottomrule
\end{tabular}
\label{table:1}
\end{table}

Given the discussion above, we can provide a dynamical equation for $Y_t$ as follows:
\begin{align}\label{eq. stochastic system}
    Y_t =& Y_0 + \sum_{x\in\mathbb Z^{d}_{\geq 0}} \sum_{j=1}^r \left(\mathbbold 1_{x+\zeta_j} - \mathbbold 1_{x} \right) R^\text{react}_{x,j}\left(\int_{0}^t Y_s(x)\lambda_j^\text{react}(x) \dd s \right) \\
    &- \sum_{x\in\mathbb Z^{d}_{\geq 0}}
    \mathbbold 1_{x} R^\text{death}_{x}\left(\int_{0}^t Y_s(x)\lambda^\text{death}(x) \dd s \right)\notag
    +  \sum_{x\in\mathbb Z^{d}_{\geq 0}}
    \mathbbold 1_{x} R^\text{in}_{x}\left(\int_{0}^t \lambda^\text{in}(x) \dd s \right)\notag
    \\
    &+ \sum_{x\in\mathbb Z^{d}_{\geq 0}}
    \sum_{x'\in\mathbb Z^{d}_{\geq 0}}
    \sum_{x''\in\mathbb Z^{d}_{\geq 0}}
    \left(\mathbbold 1_{x'} + \mathbbold 1_{x''} - \mathbbold 1_{x} \right) 
    R^\text{div}_{x,x',x''}\left(\int_{0}^t Y_s(x)\lambda^\text{div}(x, x', x'') \dd s \right) \notag
\end{align}
where $Y_0$ is the initial condition, and $\{R^\text{react}_{x,j}(t), R^\text{death}_{x}(t), R^\text{div}_{x,x',x''}(t), R^\text{in}_{x}(t) \}_{x,x',x'',j}$
are independent unit rate Poisson processes. 
In this equation, the last term in the first line represents chemical reaction processes, the second line represents cell death and influx, and the last line represents cell division events. 
In general, $Y_t$ is a Continuous-Time Markov Chain (CTMC) with the state in the functional space $L^1\left(\mathbb Z_{\geq 0}^d\right)$, where $L^1\left(\mathbb Z^d_{\geq 0}\right)$ is the set of the functions from $\mathbb Z^d_{\geq 0}$ to $\mathbb R$ with finite $L^1$-norm.
Due to the discrete and non-negativity nature of cell numbers, $Y_t(x)$ should always be a non-negative integer. 
Finding conditions for the non-explosivity of this stochastic process is challenging. 
Some results have been reported in \cite{anderson2023stochastic}.
For simplicity, this paper assumes that all the rate functions $\lambda^\text{react}_j(x)$, $\lambda^\text{death}(x)$, 
$\bar \lambda^\text{div}(x)$
and
$\lambda^{in}(x)$ are bounded in either the $L^1$- or $L^2$-norm, along with some other mild conditions (see \Cref{Assumption 1}). 
Under these conditions, $Y_t$ 
is almost surely non-explosive (see \Cref{proposition non-explosivity}). 

\begin{condition}\label{Assumption 1}
    $Y_0$ is a given non-negative integer-valued function in $L^1(\mathbb Z_{\geq 0}^d)$, and the rate functions satisfy
    \begin{itemize}
        \item $\lambda^\text{react}_{j}(x) = 0 $ when $x+\zeta_j \notin \mathbb Z^{d}_{\geq 0}$ or $x\notin \mathbb Z^{d}_{\geq 0}$.
        \item $\lambda^\text{react}_j(x)$, $\lambda^\text{death}(x)$, $\lambda^\text{div}(x,x',x'')$,
        $\bar \lambda^\text{div}(x)$ ($= \sum_{x'\in\mathbb Z_{\geq 0}^{d}} \sum_{x'' \in \mathbb Z_{\geq 0}^d}\lambda^\text{div}(x,x',x'')$)
        and $\lambda^{in}(x)$
        are non-negative and upper-bounded.
        \item $\lambda^{in}(x)$ has a finite $L^1$-norm.
    \end{itemize}
\end{condition}

\begin{proposition}\label{proposition non-explosivity}
    Under \Cref{Assumption 1}, the process $Y_t$ is almost surely non-explosive, and $\mathbb E\left[ \|Y_t\|_1\right] < +\infty$ for any $t\geq 0$. 
\end{proposition}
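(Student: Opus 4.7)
The plan is to control the total population size $N_t \triangleq \|Y_t\|_1 = \sum_{x\in\mathbb Z^d_{\geq 0}} Y_t(x)$ by a localization--Gronwall argument. Observe that, among the four event types in \Cref{tab1}, chemical reactions are mass-preserving (each shifts probability from $x$ to $x+\zeta_j$ but leaves the total count unchanged), cell death decreases $N_t$ by $1$, influx increases $N_t$ by $1$, and a division event increases $N_t$ by $1$ (one cell is replaced by two). Consequently $N_t$ is dominated by a linear birth process with bounded immigration.

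First, I would introduce the explosion times $\tau_n \triangleq \inf\{t\geq 0 : N_t \geq n\}$ and work on the stopped process $N_{t\wedge \tau_n}$, which lives in a state space of cell populations whose size is at most $n$ and is therefore well-defined with only finitely many active transitions per unit time (since $\lambda^\text{react}_j$, $\lambda^\text{death}$, $\bar\lambda^\text{div}$ are uniformly bounded and $\|\lambda^\text{in}\|_1<\infty$). Summing equation \eqref{eq. stochastic system} over $x$ and applying this sum termwise (the reaction increments $\mathbbold 1_{x+\zeta_j}-\mathbbold 1_x$ vanish after summation, while the division increments contribute a net $+1$), one obtains
\begin{align*}
N_{t\wedge \tau_n} = N_0 &- \sum_{x} R^\text{death}_x\Bigl(\int_0^{t\wedge\tau_n} Y_s(x)\lambda^\text{death}(x)\,\dd s\Bigr) + \sum_{x} R^\text{in}_x\Bigl(\int_0^{t\wedge\tau_n}\lambda^\text{in}(x)\,\dd s\Bigr) \\
&+ \sum_{x,x',x''} R^\text{div}_{x,x',x''}\Bigl(\int_0^{t\wedge\tau_n} Y_s(x)\lambda^\text{div}(x,x',x'')\,\dd s\Bigr).
\end{align*}
Taking expectation (the Poisson martingale parts have zero mean after localization, and Tonelli applies to the non-negative compensator sums), dropping the non-positive death term, and using $\sum_{x',x''}\lambda^\text{div}(x,x',x'')=\bar\lambda^\text{div}(x)\leq \|\bar\lambda^\text{div}\|_\infty$ together with $\|\lambda^\text{in}\|_1<\infty$, I would get
\begin{equation*}
\mathbb E[N_{t\wedge\tau_n}] \leq \|Y_0\|_1 + \|\lambda^\text{in}\|_1\, t + \|\bar\lambda^\text{div}\|_\infty \int_0^t \mathbb E[N_{s\wedge\tau_n}]\,\dd s.
\end{equation*}

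Gronwall's inequality then yields a bound $\mathbb E[N_{t\wedge\tau_n}] \leq \bigl(\|Y_0\|_1 + \|\lambda^\text{in}\|_1 t\bigr) e^{\|\bar\lambda^\text{div}\|_\infty t}$ that is uniform in $n$. Markov's inequality gives $\mathbb P(\tau_n \leq t) \leq \mathbb E[N_{t\wedge\tau_n}]/n \to 0$, so $\tau_n\uparrow\infty$ a.s., which is precisely non-explosivity of $Y_t$. Finally, $N_{t\wedge\tau_n}\to N_t$ a.s.\ on the full-measure event $\{\tau_\infty=\infty\}$, and Fatou's lemma transfers the uniform bound to $\mathbb E[\|Y_t\|_1]<+\infty$.

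The main obstacle I foresee is simply justifying the termwise manipulations for the infinite-dimensional system: ensuring that the doubly/triply indexed sums of Poisson integrals can be interchanged with expectation, and that on $[0,\tau_n]$ the equation \eqref{eq. stochastic system} really determines a well-posed pathwise construction. Both are handled by the stopping at $\tau_n$ (which makes all compensators finite since only at most $n$ coordinates of $Y_s$ are non-zero at any time, while influx contributes a bounded total rate $\|\lambda^\text{in}\|_1$) and by Tonelli's theorem for non-negative integrands. Once these bookkeeping points are cleared, the rest of the argument is a standard Gronwall estimate.
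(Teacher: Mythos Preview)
Your proposal is correct and follows essentially the same approach as the paper's proof: localize at the hitting times $\tau_n=\inf\{t:\|Y_t\|_1\geq n\}$, obtain a Gronwall-type integral inequality for $\mathbb E[\|Y_{t\wedge\tau_n}\|_1]$ using the boundedness of $\bar\lambda^{\mathrm{div}}$ and $\|\lambda^{\mathrm{in}}\|_1$, deduce $\mathbb P(\tau_n\leq t)\to 0$ via Markov, conclude non-explosivity, and finish with Fatou. The only cosmetic differences are that the paper invokes Dynkin's formula where you sum the Poisson representation directly, and the paper routes the final step through a (somewhat redundant) Borel--Cantelli argument whereas your monotone-limit argument for $\tau_n\uparrow\infty$ is slightly cleaner.
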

\begin{proof}
    The proof is in the appendix. 
\end{proof}

\subsection{Mean dynamics of cell population systems}

In many biological studies, scientists are often interested in investigating the mean dynamics of $Y_t$, particularly when the population size is large and the randomness can be accurately averaged out.
We denote the mean state by $n_t\triangleq  \mathbb E\left[ Y_t \right]$, or equivalently $n_t(x) \triangleq \mathbbold E\left[Y_t(x)\right]$ representing the expected number of cells in the state $x$ at time $t$.
Based on \eqref{eq. stochastic system}, the process $n_t$ satisfies the differential equation 
\begin{align}
    \frac{\dd }{\dd t} n_t(x)
    &= \sum_{j=1}^{r}\left[
    \lambda_{j}^{\text{react}}(x-\zeta_{j})n_t(x-\zeta_j)-\lambda_{j}^{\text{react}}(x)n_t(x)\right] 
    - \lambda^{\text{death}}(x)n_t(x) \notag\\
    &\quad - n_t(x)\sum_{x'\in\mathbb Z_{\geq 0}^d}\sum_{x''\in\mathbb Z_{\geq 0}^d}\lambda^{\text{div}}(x, x',x'') 
    + \sum_{\tilde{x}\in\mathbb Z_{\geq 0}^d}\sum_{\tilde{x}''\in\mathbb Z_{\geq 0}^d}\lambda^{\text{div}}(\tilde{x},x,\tilde{x}'')n_t(\tilde x) \notag \\
    &\quad + \sum_{\tilde{x}\in\mathbb Z_{\geq 0}^d}\sum_{\tilde{x}'\in\mathbb Z_{\geq 0}^d}
    \lambda^\text{div}(\tilde{x},\tilde{x}',x)n_t(\tilde x)
    + \lambda^{\text{in}}(x)
    \qquad\qquad\qquad \quad
    \forall (t,x)\in[0,+\infty) \times \mathbb Z^d_{\geq 0}.
    \notag 
\end{align}
This system is essentially an infinite-dimensional ODE system. 
By defining a reaction-related operator $A_\text{react}$ and a division-relevant operator $A_\text{div}$ as
\begin{align*}
    &[A_\text{react} f] (x) = \sum_{j=1}^{r}
    \lambda_{j}^{\text{react}}(x-\zeta_{j})f(x-\zeta_j)-\lambda_{j}^{\text{react}}(x)f(x) 
    \\
    &[A_\text{div} f] (x) = - f(x)\sum_{x'}\sum_{x''}\lambda^{\text{div}}(x, x',x'')
    + \sum_{\tilde{x}} f(\tilde x) 
     \sum_{\tilde{x}''}\left[\lambda^{\text{div}}(\tilde{x},x,\tilde{x}'')
    + \lambda^{\text{div}}(\tilde{x},\tilde{x}'', x)
    \right]
\end{align*}
for all $f\in L^1\left(\mathbb Z^{d}_{\geq 0}\right) $, we can rewrite this dynamical equation for $n_t$ by 
\begin{align}\label{eq n_t dynamics}
    \frac{\dd }{\dd t} n_t
    = \left(A_\text{react} - \lambda^\text{death} + A_\text{div}\right) n_t + \lambda^\text{in}    \qquad \forall t\geq 0 
    && \text{and}&&
    \quad n_0= \mu
\end{align}
with $\mu$ the non-negative initial condition defined by $\mu(x) = \mathbb E\left[ Y_0(x) \right]$.
Under \Cref{Assumption 1}, this differential system has a unique solution (see \Cref{proposition uniqueness of the mean dynamics}).

\begin{proposition}\label{proposition uniqueness of the mean dynamics}
    Under \Cref{Assumption 1}, 
    the differential system \eqref{eq n_t dynamics} has a unique solution satisfying $n_t \in L^1\left(\mathbb Z^{d}_{\geq 0}\right)$ for all $t\geq 0$, and this solution is given by $\mathbb E\left[Y_t\right]$.
\end{proposition}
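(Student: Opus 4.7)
My plan has two main components: first, to establish that the right-hand side of \eqref{eq n_t dynamics} defines a well-posed linear ODE on the Banach space $L^1(\mathbb Z^d_{\geq 0})$, which yields existence and uniqueness of an $L^1$-valued solution; second, to verify that $t \mapsto \mathbb E[Y_t]$ is precisely this solution. The cornerstone of both steps is the observation that, under \Cref{Assumption 1}, the operators $A_\text{react}$, $A_\text{div}$, and the multiplication operator by $\lambda^\text{death}$ are bounded linear operators on $L^1(\mathbb Z^d_{\geq 0})$, and that $\lambda^\text{in}$ has finite $L^1$-norm.

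First, I would prove the operator-norm bounds. For $A_\text{react}$ and for multiplication by $\lambda^\text{death}$, the uniform bounds on the rates yield $\|A_\text{react} f\|_1 \leq 2\sum_{j=1}^r \|\lambda_j^\text{react}\|_\infty \|f\|_1$ and $\|\lambda^\text{death} f\|_1 \leq \|\lambda^\text{death}\|_\infty \|f\|_1$ by the triangle inequality and a translation of the summation index in $\mathbb Z^d_{\geq 0}$. For $A_\text{div}$, after swapping the order of the triple sum, each state $\tilde x$ contributes at most $3\bar \lambda^\text{div}(\tilde x)|f(\tilde x)|$ to $\|A_\text{div} f\|_1$ (one disappearing cell and two emerging daughters), giving $\|A_\text{div} f\|_1 \leq 3\|\bar\lambda^\text{div}\|_\infty \|f\|_1$. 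Therefore $L \triangleq A_\text{react} - \lambda^\text{death} + A_\text{div}$ is a bounded operator on $L^1(\mathbb Z^d_{\geq 0})$, the series $e^{tL} = \sum_{n\geq 0} (tL)^n/n!$ converges in operator norm, and classical Banach-space ODE theory provides the unique solution $n_t = e^{tL}\mu + \int_0^t e^{(t-s)L}\lambda^\text{in}\,\dd s$, which indeed lies in $L^1$ for every $t\geq 0$ because $\mu$ and $\lambda^\text{in}$ do.

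Next, I would show that $\mathbb E[Y_t]$ also solves \eqref{eq n_t dynamics}. Starting from \eqref{eq. stochastic system}, I evaluate at a fixed $x$ and take expectations, using the martingale/compensator identity $\mathbb E[R(\int_0^t f_s\,\dd s)] = \mathbb E[\int_0^t f_s\,\dd s]$ valid for each unit-rate Poisson process once the compensator has finite expectation; the bound $\mathbb E[\|Y_s\|_1] < +\infty$ from \Cref{proposition non-explosivity}, combined with the rate bounds, supplies this integrability. I then regroup the sums over $x'$ and $x''$ to recover the action of $A_\text{div}$, justifying every interchange of summation and expectation via Fubini--Tonelli with the dominating weight $\|\bar\lambda^\text{div}\|_\infty \mathbb E[\|Y_s\|_1]$. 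Differentiating the resulting integral identity in $t$ yields \eqref{eq n_t dynamics} with $n_t$ replaced by $\mathbb E[Y_t]$, and the uniqueness from the first step forces $\mathbb E[Y_t] = n_t$.

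The main obstacle I anticipate is controlling the triple-summation term coming from cell division, where a naive bound on $|f(\tilde x)| \sum_{x,\tilde x''}[\lambda^\text{div}(\tilde x, x, \tilde x'') + \lambda^\text{div}(\tilde x, \tilde x'', x)]$ is not obviously absolutely convergent; the key observation is that this inner double sum is exactly $2\bar\lambda^\text{div}(\tilde x)$, which is uniformly bounded by \Cref{Assumption 1}, so the full triple sum collapses to a quantity dominated by $\|\bar\lambda^\text{div}\|_\infty \|f\|_1$. The same mechanism legitimizes Fubini--Tonelli in the stochastic step, turning what looks like a delicate rearrangement into a routine bounded-operator computation.
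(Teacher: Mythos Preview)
Your proposal is correct and follows essentially the same route as the paper: both arguments rest on the observation that $A_\text{react}-\lambda^\text{death}+A_\text{div}$ is a bounded operator on $L^1(\mathbb Z^d_{\geq 0})$ under \Cref{Assumption 1}, yielding uniqueness, together with the verification that $\mathbb E[Y_t]$ solves \eqref{eq n_t dynamics}. The paper is terser---it dismisses the stochastic verification as a ``straightforward calculation'' and obtains uniqueness directly via Gr\"onwall on $\|\rho_1(t)-\rho_2(t)\|_1$ rather than appealing to the full Banach-space ODE existence/uniqueness machinery---but the substantive content is the same, and your more explicit treatment of the $A_\text{div}$ bound and the Fubini--Tonelli justification in the expectation step simply fills in details the paper leaves implicit.
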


\begin{proof}
    The proof is in the appendix.
\end{proof}

Due to certain technical reasons (which will be detailed in \Cref{section algorithm for the whole system}), we impose some constraints on the support of the function $\lambda^\text{in}(x)$ and the initial condition $\mu(x)$.

\begin{condition}\label{assumption 2}
    $\lambda^\text{in}(x)$ has finite support, and this support is a subset of the support of ${\mu(x)}$.
\end{condition}

\subsection{Computational challenge for the mean population dynamics $n_t$}
\label{section current challenges}

Numerical simulation approaches for system \eqref{eq n_t dynamics} include 
the Finite State Projection (FSP) \cite{munsky2006finite, piho2024feedback,ruess2023stochastic,lunz2021beyond}, agent-based simulation approach \cite{gorochowski2012bsim,izaguirre2004compucell,jafari2021multiscale,li2024computational,Piho2024agent, pleyer2023agent}, and the moment closure method \cite{duso2020stochastic}.
As demonstrated in the introduction, these methods all suffer many challenging in simulating growing cell populations.
Specifically, the FSP requires solving \eqref{eq n_t dynamics} on a reasonably large truncated space of $\mathbb Z_{\geq 0}^{d}$, which can lead to huge computational complexity when many chemical species are involved (i.e., $d$ is large).
The moment closure is much more efficient, as it only tracks the first few moments of $n_t$; however, its results lack theoretical guarantees, raising concerns about its reliability.
The agent-based simulation approach \cite{piho2024feedback} is a sampling method that applies the stochastic simulation algorithm (SSA) \cite{gillespie1977exact} to the CTMC \eqref{eq. stochastic system} and approximates $n_t$ using the mean of the simulated samples. 
This approach requires tracking all the individual cells and, therefore, can also be computationally inefficient for rapidly growing cell populations. 
In particular, the computational complexity of the SSA scales with the product of the number of jumps over the whole time period and the number of possible jump directions \cite[Section III]{sanft2015constant}, both of which grow linearly with the size of cell populations in our problem. 
Consequently, the agent-based method has a quadratic complexity with respect to the population size, making it impractical for even medium large cell populations. 
Overall, there is still a lack of efficient and reliable computation approaches for \eqref{eq n_t dynamics}.

\section{Fixed-budget simulation algorithm via the Feynman-Kac representation}\label{section the algorithm}

This section is devoted to developing a reliable fixed-budget simulation algorithm for the population system \eqref{eq n_t dynamics}.
This task is not impossible, and
such algorithms already exist for systems involving only chemical reactions.
In this case, the cell population size is fixed, and the equation \eqref{eq n_t dynamics} becomes the well-known Chemical Master Equation (CME) 
$\frac{\dd }{\dd t} n_t =  A_\text{react} n_t$ with  $n_0 = \mu$ \cite{anderson2015stochastic},
whoese normalized solution characterizes the distribution of the stochastic process 
\begin{align}\label{eq. X(t)}
    X(t) = X(0) + \sum_{j=1}^r \zeta_j \tilde  R_{j}\left(\int_{0}^t \lambda^\text{react}_j(X(s)) \dd s \right)
    &&  X(0) \sim {\mu (x)}/{\|\mu\|_1}
\end{align}
with $\{\tilde R_{j}(t)\}_{j=1,\dots, r}$ independent unit rate Poisson process. 
In other words, the solution of the CME has the representation $n_t(x) = \|\mu\|_1\, \mathbb P\left( 
X(t) = x \right)$, suggesting that the CME can be numerically solved by generating simulated samples for $X(t)$.
In this approach, one can set a fixed size of the simulation samples for $X(t)$ regardless of the actual cell population size in the system, thereby resulting in a fixed-budget simulation algorithm.

When other cellular events exist, the representation  $n_t(x) = \|\mu\|_1\, \mathbb P\left( 
X(t) = x \right)$ no longer holds.
In the following, we employ the Feynman-Kac formula to provide similar probabilistic representations of $n_t$ and provide fixed-budget simulation algorithms for the population dynamics \eqref{eq n_t dynamics}.
In \Cref{section reactions + death}---\ref{section algorithm for the whole system}, we gradually take the other cellular events (cell death, division, and influx) into consideration to develop the algorithm. 
In \Cref{section algorithm + restarting}, we introduce a resampling/restarting strategy to the algorithm for improved accuracy.

\subsection{Algorithm for systems with only chemical reactions and cell death}
\label{section reactions + death}
When the population system only evolves chemical reactions and cell death, equation \eqref{eq n_t dynamics} becomes 
\begin{align}\label{eq. n_t with reaction + death}
    \frac{\dd }{\dd t} n_t
    = \left(A_\text{react} - \lambda^\text{death} \right) n_t
    && \text{with } n_0 = \mu. 
\end{align}
For this equation, a probabilistic representation of $n_t$ was ``magically" presented in \cite[Lemma 1]{rathinam2021state} (also see \eqref{eq. solution of n_t reaction + death} later in this section), when investigating filtering problems. 
To gain a better understanding of this representation and inspire the derivation of a similar formula for the general dynamics \eqref{eq n_t dynamics}, we revisit it from the  Feynman-Kac perspective \cite{kac1949distributions,karandikar1987feynman}. 

First, we consider the adjoint equation of \eqref{eq. n_t with reaction + death} expressed as
\begin{align}\label{eq. adjoint equation reaction + death}
    \left\{
    \begin{array}{l}
        \dot 
        \phi_t
        =  -\left(A^*_\text{react} - \lambda^\text{death} \right) \phi_t \\
          \phi_T(x) = g(x)
    \end{array}
    \right.
    && \text{for all } t\in [0, T] \text{~and~} x\in\mathbb Z^d_{\geq 0}
\end{align}
where $T$ is a given terminal time, $g(x)$ is a given bounded function on $\mathbb Z^d_{\geq 0}$, and 
$A^*_\text{react}$ is the adjoint operator of $A_\text{react}$ given by 
\begin{align*}
    \left[A^*_\text{react} f\right] (x)
    = \sum_{j=1}^{r}\left[
    f(x+v_{k})-f(x)\right] 
    \lambda_{j}^{\text{react}}(x)
    && \text{for any bounded function $f$ on $\mathbb Z^d_{\geq 0}$.}
\end{align*}
Notice that $A^*_\text{react}$ is the generator of the stochastic process $X(t)$.
Then, based on the Feynman-Kac formula, this adjoint equation has a unique solution given in the following proposition. 

\begin{proposition}\label{proposition solution of the adjoint equation reaction + death}
    Under \Cref{Assumption 1}, for any given $T$ and bounded function $g$, Eq. \eqref{eq. adjoint equation reaction + death} has a unique bounded solution with the expression given by 
    \begin{align}\label{eq solution of the adjoint equation reaction + death}
        \phi_t(x) = \mathbb E\left[g(X(T))\exp\left(\int_{t}^{T}-\lambda^{\text{death}}(X(s))ds\right) \bigg| X(t)=x\right],
        &&
        \forall (t,x)\in [0, T] \times \mathbb Z_{\geq 0}^d.
    \end{align}
\end{proposition}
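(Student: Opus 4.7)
The plan is to establish the formula by a standard Feynman--Kac/martingale argument and then derive uniqueness from the boundedness of the operator $B := A^*_{\text{react}} - \lambda^{\text{death}}$ on $\ell^{\infty}(\mathbb Z^d_{\geq 0})$. Under \Cref{Assumption 1}, $\sup_j \|\lambda^\text{react}_j\|_\infty < \infty$ and $\|\lambda^\text{death}\|_\infty < \infty$, so $B$ is a bounded linear operator on $\ell^{\infty}(\mathbb Z^d_{\geq 0})$. First I would check boundedness of the candidate in \eqref{eq solution of the adjoint equation reaction + death}: since $\lambda^\text{death}\geq 0$ and $g$ is bounded, $|\phi_t(x)| \leq \|g\|_\infty$ for every $t\in[0,T]$ and $x\in\mathbb Z^d_{\geq 0}$, so the candidate lies in $\ell^{\infty}(\mathbb Z^d_{\geq 0})$.

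Next I would verify the ODE. By the time-homogeneous Markov property of $X(s)$, the candidate satisfies the semigroup identity
\[
\phi_t(x) \;=\; \mathbb E\!\left[\exp\!\left(-\int_t^{t+h}\lambda^\text{death}(X(s))\,\mathrm ds\right)\phi_{t+h}(X(t+h))\,\Big|\,X(t)=x\right]
\]
for $0\leq t\leq t+h\leq T$. Conditioning on whether $X$ jumps in $[t,t+h]$ (with total jump rate uniformly bounded, so the probability of two or more jumps is $O(h^2)$) and using that $A^*_\text{react}$ is the generator of $X(s)$, a first-order expansion yields
\[
\phi_t(x) \;=\; \phi_{t+h}(x) + h\bigl[(A^*_\text{react}-\lambda^\text{death})\phi_{t+h}\bigr](x) + o(h),
\]
with the $o(h)$ term uniform in $x$. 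Sending $h\to 0^+$ gives $\dot\phi_t = -(A^*_\text{react}-\lambda^\text{death})\phi_t$, and the terminal condition $\phi_T = g$ holds by construction. Equivalently, one can apply Dynkin's formula to the process $M_s := \exp\!\bigl(-\int_t^s\lambda^\text{death}(X(\tau))\mathrm d\tau\bigr)\phi_s(X(s))$ on $s\in[t,T]$ and identify it as a martingale, so that $\mathbb E[M_T\mid X(t)=x] = M_t = \phi_t(x)$, which is exactly \eqref{eq solution of the adjoint equation reaction + death}.

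For uniqueness, since $B$ is a bounded operator on $\ell^{\infty}(\mathbb Z^d_{\geq 0})$, equation \eqref{eq. adjoint equation reaction + death} is a linear ODE in a Banach space with bounded generator. If $\phi^{1}$ and $\phi^{2}$ are two bounded solutions, their difference $\psi_t := \phi^{1}_t - \phi^{2}_t$ satisfies $\dot\psi_t = -B\psi_t$ with $\psi_T = 0$, and Gronwall's inequality applied to $t\mapsto\|\psi_t\|_\infty$ yields $\|\psi_t\|_\infty \leq \|\psi_T\|_\infty e^{\|B\|(T-t)} = 0$.

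The main technical point is the rigorous passage from the semigroup identity to the ODE, i.e.\ controlling the remainder uniformly in $x$. This is the only step that genuinely uses the structure of $X(s)$ rather than abstract linear-operator facts, and it relies crucially on the uniform boundedness of the reaction and death rates granted by \Cref{Assumption 1}, which guarantees that multi-jump contributions are of order $O(h^2)$ uniformly in the starting state. Everything else is a direct application of standard Markov-chain and ODE machinery.
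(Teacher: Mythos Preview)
Your proposal is correct and takes essentially the same approach as the paper: existence via the Feynman--Kac representation and uniqueness from the boundedness of $A^*_{\text{react}}-\lambda^{\text{death}}$ on $\ell^\infty$. The only difference is that the paper dispatches the existence part by citing a known Feynman--Kac theorem, whereas you sketch a self-contained derivation through the semigroup identity and a one-jump expansion (or Dynkin's formula); both arguments rest on the same rate-boundedness hypotheses from \Cref{Assumption 1}.
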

\begin{proof}
    The solution is given by the Feynman-Kac formula (see a version in \cite[Theorem 4]{karandikar1987feynman}).
    By the boundedness of all the rate functions (\Cref{Assumption 1}), the operator $\left(A^*_\text{react} - \lambda^\text{death} \right)$ is bounded, which implies the uniqueness of the solution. 
\end{proof}

Then, the solution of the original equation \eqref{eq. n_t with reaction + death} can be constructed based on this adjoint equation. 
In  \Cref{proposition adjoint property reaction + death}, we provide several conditions related to the adjoint equation for checking whether an $L^1 \left(\mathbb Z^d_{\geq 0}\right)$-valued process $u_t$ solves the population dynamics \eqref{eq. n_t with reaction + death}.
Further, by the Feynman-Kac representation, the only process satisfying these conditions (i.e., the unique solution of \eqref{eq. n_t with reaction + death}) is given by
\begin{align}\label{eq. solution of n_t reaction + death}
    n_t(x) ~=~ \|\mu\|_1\, \mathbb E\left[
        \mathbbold 1_{x} \left( X(t) \right)\exp\left(\int_{0}^{t}-\lambda^{\text{death}}(X(s))ds\right)
    \right]
\end{align}
for all $t\in[0, +\infty)$ and $x\in\mathbb Z^{d}_{\geq 0}$ (see \Cref{them solution of n_t reaction + death}).

\begin{proposition}\label{proposition adjoint property reaction + death}
    Denote $\phi^{T,g}_t$ as the bounded solution of \eqref{eq. adjoint equation reaction + death} with the terminal time $T$ and terminal condition $g$.
    Then  an $L^1 \left(\mathbb Z^d_{\geq 0}\right)$-valued time-differentiable process $u_t$ solves \eqref{eq. n_t with reaction + death}, if $u_0=\mu$ and
    $\frac{d}{d t}<\phi^{T,g}_t, u_t>=0$ for any $T>0$, $t\in[0,T]$, and bounded function $g(x)$.
\end{proposition}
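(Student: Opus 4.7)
The plan is to exploit the adjoint structure linking \eqref{eq. n_t with reaction + death} to \eqref{eq. adjoint equation reaction + death}. Using the product rule for the inner product and substituting the adjoint ODE for $\dot \phi^{T,g}_t$, I would write
\begin{align*}
\frac{d}{dt}\langle \phi^{T,g}_t, u_t \rangle
= \langle \dot \phi^{T,g}_t, u_t\rangle + \langle \phi^{T,g}_t, \dot u_t\rangle
= -\langle (A^*_\text{react}-\lambda^\text{death})\phi^{T,g}_t, u_t\rangle
+ \langle \phi^{T,g}_t, \dot u_t\rangle.
\end{align*}
Moving $A^*_\text{react}$ across the pairing using the adjoint relation $\langle A^*_\text{react}\phi, u\rangle = \langle \phi, A_\text{react} u\rangle$ (which only involves rearranging an absolutely convergent double sum, justified since $\phi^{T,g}_t$ is bounded, $u_t\in L^1(\mathbb Z^d_{\geq 0})$, and the reaction rates are bounded by \Cref{Assumption 1}), and noting that multiplication by $\lambda^\text{death}$ is self-adjoint, the identity collapses to
\begin{align*}
\frac{d}{dt}\langle \phi^{T,g}_t, u_t \rangle
= \big\langle \phi^{T,g}_t,\; \dot u_t - (A_\text{react} - \lambda^\text{death}) u_t\big\rangle.
\end{align*}

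The hypothesis makes the left-hand side vanish on $[0,T]$ for every $T>0$ and every bounded $g$. Evaluating at the terminal time $t=T$, where $\phi^{T,g}_T = g$, yields
\begin{align*}
\big\langle g,\; \dot u_T - (A_\text{react} - \lambda^\text{death}) u_T\big\rangle = 0
\qquad \text{for every bounded } g \text{ on } \mathbb Z^d_{\geq 0} \text{ and every } T>0.
\end{align*}
Choosing $g = \mathbbold 1_{x}$ for each $x\in\mathbb Z^d_{\geq 0}$ (an admissible bounded test function) collapses the pairing to a pointwise identity, giving $\dot u_T(x) = [(A_\text{react} - \lambda^\text{death}) u_T](x)$ for every $x$ and every $T>0$. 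Combined with the assumed initial condition $u_0 = \mu$, this is exactly \eqref{eq. n_t with reaction + death}.

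The only technical point I expect to need care is the adjoint rearrangement: one must swap two infinite sums indexed over $\mathbb Z^d_{\geq 0}$, which is the reason I need the $L^1$-regularity of $u_t$ together with the boundedness of $\phi^{T,g}_t$ and the rates. Apart from that, the argument is essentially a standard duality identity driven by \Cref{proposition solution of the adjoint equation reaction + death}; no new estimates are required, since the existence, uniqueness, and boundedness of $\phi^{T,g}_t$ are already provided by that proposition.
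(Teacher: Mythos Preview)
Your argument is correct and rests on the same duality idea as the paper, but the execution differs slightly. The paper works with finite differences: from the constancy of $\langle\phi^{T,g}_t,u_t\rangle$ it writes $\phi^{T,g}_{t_1}=e^{-(A^*_\text{react}-\lambda^\text{death})(t_1-t_2)}\phi^{T,g}_{t_2}$, moves the operator exponential across by adjointness, and concludes (after invoking the arbitrariness of $T$ and $g$, which amounts to taking $T=t_2$ so that $\phi^{T,g}_{t_2}=g$) that $u_{t_2}=e^{(A_\text{react}-\lambda^\text{death})(t_2-t_1)}u_{t_1}$, then differentiates. You instead apply the product rule to the pairing directly, move the generator (rather than the semigroup) across, and evaluate at $t=T$ to get $\langle g,\dot u_T-(A_\text{react}-\lambda^\text{death})u_T\rangle=0$ for all bounded $g$. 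Your route is marginally more direct since it avoids the operator exponential; the paper's route has the minor advantage that it does not need to justify the infinite-sum product rule (which in your case follows from $u_t$ being $L^1$-differentiable and $\phi^{T,g}_t$ being $L^\infty$-differentiable, both consequences of \Cref{Assumption 1}).
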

\begin{proof}
    The proof is presented in the appendix.
\end{proof}

\begin{theorem}\label{them solution of n_t reaction + death}
    The process $n_t$ given by \eqref{eq. solution of n_t reaction + death} is the only process satisfying the conditions in \Cref{proposition adjoint property reaction + death}, and, therefore, it is the only $L^1\left(\mathbb Z^{d}_{\geq 0}\right)$-valued process solving \eqref{eq. n_t with reaction + death}. 
\end{theorem}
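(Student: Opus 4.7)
The plan is to verify, in order, three items: (i)~the process $n_t$ defined in \eqref{eq. solution of n_t reaction + death} is an $L^1(\mathbb Z^d_{\geq 0})$-valued time-differentiable process with $n_0=\mu$; (ii)~$n_t$ satisfies the adjoint-pairing condition of \Cref{proposition adjoint property reaction + death}; and (iii)~any two processes satisfying the conditions of that proposition coincide. Combined with \Cref{proposition adjoint property reaction + death}, these three items deliver both assertions of the theorem.

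For (i), the initialization $X(0)\sim\mu/\|\mu\|_1$ gives $n_0(x)=\|\mu\|_1\,\mathbb P(X(0)=x)=\mu(x)$. Summing \eqref{eq. solution of n_t reaction + death} in $x$ and swapping sum with expectation (legitimate because the summands are non-negative and $\sum_{x}\mathbbold 1_{x}(X(t))=1$) yields $\|n_t\|_1\le \|\mu\|_1<+\infty$. Time-differentiability of $t\mapsto n_t(x)$ will follow from dominated convergence applied to the defining expectation, using the bounded total jump rate of $X(\cdot)$ guaranteed by \Cref{Assumption 1} and the fact that the killing factor is bounded in $[0,1]$ and Lipschitz in its upper limit.

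The heart of the argument is (ii). Pair $\phi^{T,g}_t$ against $n_t$ and exchange the sum over $x$ with the expectation (justified since $|\phi^{T,g}_t|\le\|g\|_\infty$ and $e^{-\int_0^t\lambda^{\text{death}}(X(s))\,\dd s}\le 1$):
\begin{equation*}
\langle \phi^{T,g}_t,n_t\rangle
= \|\mu\|_1\,\mathbb E\!\left[\phi^{T,g}_t(X(t))\exp\!\left(-\!\int_0^t\lambda^{\text{death}}(X(s))\,\dd s\right)\right].
\end{equation*}
Now substitute the Feynman-Kac formula for $\phi^{T,g}_t$ from \eqref{eq solution of the adjoint equation reaction + death} and apply the tower property together with the Markov property of $X(\cdot)$:
\begin{equation*}
\langle \phi^{T,g}_t,n_t\rangle
= \|\mu\|_1\,\mathbb E\!\left[g(X(T))\exp\!\left(-\!\int_0^T\lambda^{\text{death}}(X(s))\,\dd s\right)\right].
\end{equation*}
The right-hand side is independent of $t$, so $\frac{\dd}{\dd t}\langle \phi^{T,g}_t,n_t\rangle=0$ for every $T>0$, $t\in[0,T]$, and bounded $g$, which is precisely the hypothesis of \Cref{proposition adjoint property reaction + death}.

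For (iii), if $u_t$ and $v_t$ both meet the conditions of \Cref{proposition adjoint property reaction + death}, then $\langle\phi^{T,g}_T,u_T\rangle=\langle\phi^{T,g}_0,\mu\rangle=\langle\phi^{T,g}_T,v_T\rangle$ for every $T>0$ and bounded $g$. Since $\phi^{T,g}_T=g$, this gives $\langle g,u_T-v_T\rangle=0$ for all bounded $g$, forcing $u_T(x)=v_T(x)$ pointwise. To promote this to uniqueness of $L^1$ solutions of \eqref{eq. n_t with reaction + death}, one notes that the adjoint-pairing condition is also necessary: for any $L^1$ solution $u_t$, a direct calculation using the adjointness of $A^*_{\text{react}}$ gives $\frac{\dd}{\dd t}\langle\phi_t,u_t\rangle=\langle -(A^*_{\text{react}}-\lambda^{\text{death}})\phi_t,u_t\rangle+\langle\phi_t,(A_{\text{react}}-\lambda^{\text{death}})u_t\rangle=0$. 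The only genuine obstacle is bookkeeping: justifying the Fubini/dominated-convergence interchanges and checking time-differentiability of $n_t$ so it qualifies as an admissible input to \Cref{proposition adjoint property reaction + death}. Both are routine under the boundedness conditions in \Cref{Assumption 1}.
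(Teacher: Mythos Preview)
Your proof is correct and the central computation in step~(ii)---pairing $n_t$ with $\phi^{T,g}_t$, swapping sum and expectation, inserting the Feynman--Kac expression \eqref{eq solution of the adjoint equation reaction + death}, and using the tower property to obtain a $t$-independent quantity---is exactly what the paper does.

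The only point of departure is your handling of uniqueness in step~(iii). The paper simply invokes \Cref{proposition uniqueness of the mean dynamics} (the $L^1$ uniqueness result already established for the general system \eqref{eq n_t dynamics}): since any process satisfying the conditions of \Cref{proposition adjoint property reaction + death} solves \eqref{eq. n_t with reaction + death}, and that equation has at most one $L^1$ solution, both uniqueness claims follow at once. You instead argue uniqueness directly from the pairing---evaluating $\langle\phi^{T,g}_t,u_t\rangle$ at $t=0$ and $t=T$ to force $\langle g,u_T-v_T\rangle=0$---and then separately verify the converse of \Cref{proposition adjoint property reaction + death} to transfer this to uniqueness of $L^1$ solutions. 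Your route is more self-contained (it does not lean on \Cref{proposition uniqueness of the mean dynamics}) but slightly longer; the paper's route is more economical since the needed uniqueness was already proven. Either is perfectly valid.
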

\begin{proof}
    The proof is presented in the appendix. 
\end{proof}

The probabilistic representation \eqref{eq. solution of n_t reaction + death} allows for a fixed-budget simulation algorithm for the population dynamics \eqref{eq. n_t with reaction + death} (see \Cref{algorithm only death}), where only a fixed size of simulation samples for $X(t)$ are generated and used for constructing the solution. 
The algorithm has a complexity of $\mathcal{O}(N)$, where $N$ represents the number of simulation samples for $X(t)$, independent of the actual cell population size (i.e., $|n_t|_1$). 
This indicates that the method is significantly more efficient than agent-based simulation methods when dealing with large cell populations.

\begin{algorithm2e}
\SetAlgoLined
Input the initial conditions of $\mu(x)$,  sample size $N$, and final time $T$. \\
Simulate N trajectories of $X(t)$ up to time $T$ with initial states sampled from the distribution $\mu(x)/\|\mu\|_1$. Denote them by $\texttt{x}_i(t)$ ($i=1,\dots, N$). \\
Calculate weights $w_{i}(t) = \exp\left(\int_{0}^{t}-\lambda^{\text{death}}(\texttt{x}_{i}(s))ds\right)$. \\
Output the solution: $\hat n_t(x) = 
 \frac{\|\mu\|_1}{N}\sum_{i=1}^{N} \mathbbold 1_{x}(\texttt{x}_{i}(t))w_{i}(t)$.
\caption{Fixed-budget simulation algorithm for \eqref{eq. n_t with reaction + death}}
\label{algorithm only death}
\end{algorithm2e}

The above derivations (\Cref{proposition solution of the adjoint equation reaction + death}, \Cref{proposition adjoint property reaction + death}, and \Cref{them solution of n_t reaction + death}) demonstrate that the probabilistic representation \eqref{eq. solution of n_t reaction + death} is a direct consequence of the Feynman-Kac representation for the adjoint equation \eqref{eq. adjoint equation reaction + death}.
From this perspective, we can interpret \eqref{eq. solution of n_t reaction + death} as a Feynman-Kac-like representation for Eq. \eqref{eq. n_t with reaction + death}, and we therefore refer to \Cref{algorithm only death} as  the Feynman-Kac-inspired Gillespie's Stochastic Simulation Algorithm (FKG-SSA).

\begin{remark}
    We can also interpret \eqref{eq. solution of n_t reaction + death} from the biological perspective.
    When only chemical reactions and cell death exist, $X(t)$ describes the internal state of a living cell up to its death.
    After the cell death, we can view the cell as becoming an imaginal ``ghost" cell, whose internal state continues evolving according to the dynamics of $X(t)$.
    In this regard, the term $\exp\left(\int_{0}^{t}-\lambda^{\text{death}}(X(s))ds\right)$ represents the likelihood of a cell surviving up to time $t$, given the entire trajectory of its internal state, both as a normal cell and as a ``ghost" cell.
    Consequently, the term $\|\mu\|_1E\left[
        \mathbbold 1_{x} \left( X(t) \right)\exp\left(\int_{0}^{t}-\lambda^{\text{death}}(X(s))ds\right)
    \right]$ (on the right-hand side of \eqref{eq. solution of n_t reaction + death})
    represents the expected number of cells being in state $x$ and still alive, which corresponds to $n_t(x)$. 
\end{remark}

\subsection{Algorithm for systems with chemical reactions, cell death, and division}\label{section reaction + death + division}

When only the cell influx is absent, the population dynamics becomes
\begin{align}\label{eq. n_t with reaction, death, division}
    \frac{\dd }{\dd t} n_t
    = \left(A_\text{react} - \lambda^\text{death} + A_\text{div}\right) n_t
    && \text{with } n_0 = \mu. 
\end{align}
To the best of our knowledge, a Feynman-Kac-like representation for this equation has not been proposed before.
Here, we derive such a representation and provide a fixed-budget simulation algorithm for this population system.

Notice that the adjoint operator of $\left(A_\text{react} + A_\text{div}\right)$ does not correspond to any stochastic process, which invalidates the straightforward application of the strategy in \Cref{section reactions + death}.
Thus, we first reformulate \eqref{eq. n_t with reaction, death, division} so that that strategy directly applies here. 
We denote a new operator $\tilde A \triangleq A_\text{react}  + A_\text{div} - \bar \lambda^\text{div}$ (recall that $\bar \lambda^\text{div}(x) = \sum_{x'\in\mathbb Z_{\geq 0}^{d}} \sum_{x'' \in \mathbb Z_{\geq 0}^d}\lambda^\text{div}(x,x',x'')$).
Then, the differential equation \eqref{eq. n_t with reaction, death, division} can be rewritten as 
$\frac{\dd }{\dd t} n_t
     = \left(\tilde A - \lambda^\text{death} +  \bar \lambda^\text{div} \right) n_t$
with $n_0 = \mu$.
We further denote $\tilde A^{*}$ as the adjoint operator of $\tilde A$, which has the expression
\begin{align*}
    \left[\tilde A^* f\right] (x)
    =& \sum_{j=1}^{r}\left[
    f(x+v_{k})-f(x)\right] 
    \lambda_{j}^{\text{react}}(x)+ \sum_{ x'\in\mathbb Z_{\geq 0}^d} \Big[f( x') - f(x)\Big]
    \lambda^{\overline{\text{{div}}}}(x, x')
\end{align*}
for any bounded function $f$ on $\mathbb Z^d_{\geq 0}$, where $\lambda^{\overline{\text{{div}}}}(x, x')\triangleq \sum_{ {x}''\in\mathbb Z_{\geq 0}^d}\lambda^{\text{div}}(x, x',\tilde{x}'') + \lambda^{\text{div}}(x, {x}'', {x}')$.
Apparently, this adjoint operator $\tilde A^{*}$ is the generator of the stochastic process
\begin{align}\label{eq tilde X}
    \tilde X(t) =& \tilde X(0) + \sum_{j=1}^r\zeta_j \tilde R_{j}\left(\int_{0}^{t}\lambda_{j}^{\text{react}}\left(\tilde X(s)\right)ds\right)  \\
    &+
    \sum_{x\in\mathbb Z_{\geq 0}^d}\sum_{x'\in\mathbb Z_{\geq 0}^d}(x'-x)\tilde R_{x,x'}\left(\int_{0}^t 
    \mathbbold 1_{x}\left(\tilde X(s)\right) 
    \lambda^{\overline{\text{{div}}}}(x, x')
    \dd s \right) \notag 
\end{align}
where $\tilde R_{j}$ (for $j=1,\dots, r$) and  $ \tilde R_{x,x'}$ (for $x, x' \in \mathbb Z_{\geq 0}^d$) are independent unit rate Poisson processes, and $\tilde X(0)$ follows the distribution $\mu(x)/\|\mu\|_1$.
Under \Cref{Assumption 1}, the total transition rate of $\tilde X(t)$ is always upper bounded, and, therefore, it is almost surely non-explosive.

With these new notations, the adjoin equation of \eqref{eq. n_t with reaction, death, division} is given by 
\begin{align}\label{eq. adjoint equation reaction, death, division}
    \left\{
    \begin{array}{l}
        \dot 
        \phi_t
        =  -\left(\tilde A^* - \lambda^\text{death} + \bar \lambda^\text{div} \right) \phi_t \\
          \phi_T(x) = g(x)
    \end{array}
    \right.
    && \text{for all } t\in [0, T] \text{~and~} x\in\mathbb Z^d_{\geq 0}
\end{align}
where $T$ is a given terminal time, $g(x)$ is a given bounded function on $\mathbb Z^d_{\geq 0}$.
Based on the Feynman-Kac formula, this adjoint equation also has a unique solution (see \Cref{proposition solution of the adjoint equation reaction + death + division}).

\begin{proposition}\label{proposition solution of the adjoint equation reaction + death + division}
    Under \Cref{Assumption 1}, for any given $T$ and bounded function $g$, Eq. \eqref{eq. adjoint equation reaction, death, division} has a unique bounded solution with the expression given by 
    \begin{align}\label{eq solution of the adjoint equation reaction, death, division}
        \phi_t(x) = \mathbb E\left[g\left(\tilde X(T)\right)\exp\left(\int_{t}^{T} \bar \lambda^\text{div}\left(\tilde X(s)\right)-\lambda^{\text{death}}\left(\tilde X(s)\right) \dd s \right) \bigg| \tilde X(t)=x\right]
    \end{align}
    for all $t\in [0, T]$ and $x\in\mathbb Z_{\geq 0}^d$.
\end{proposition}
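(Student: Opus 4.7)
The plan is to mimic the strategy already used in \Cref{proposition solution of the adjoint equation reaction + death}, namely to invoke the Feynman--Kac formula for jump processes to produce a candidate solution, and then to argue uniqueness by showing that the governing operator is bounded on the Banach space of bounded functions, so that the adjoint equation reduces to a linear ODE in a Banach space whose solution is unique.

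First I would record that $\tilde X(t)$, as defined in \eqref{eq tilde X}, is a continuous-time Markov chain on $\mathbb Z_{\geq 0}^d$ whose infinitesimal generator coincides with $\tilde A^\ast$. The total outgoing jump rate from any state $x$ is
\[
\sum_{j=1}^{r}\lambda_j^{\text{react}}(x)+\sum_{x'\in\mathbb Z_{\geq 0}^{d}}\lambda^{\overline{\text{div}}}(x,x')
\;=\;\sum_{j=1}^{r}\lambda_j^{\text{react}}(x)+2\bar\lambda^{\text{div}}(x),
\]
which by \Cref{Assumption 1} is uniformly bounded in $x$. Consequently $\tilde X(t)$ is almost surely non-explosive, and $\tilde A^\ast$ is a bounded linear operator on the Banach space of bounded functions equipped with $\|\cdot\|_\infty$; adding the bounded multiplication operators $-\lambda^{\text{death}}$ and $\bar\lambda^{\text{div}}$ leaves the operator $(\tilde A^\ast-\lambda^{\text{death}}+\bar\lambda^{\text{div}})$ bounded as well.

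Next I would verify that the right-hand side of \eqref{eq solution of the adjoint equation reaction, death, division} is a bounded function on $[0,T]\times\mathbb Z_{\geq 0}^d$ and solves \eqref{eq. adjoint equation reaction, death, division}. Boundedness follows from $|g|\leq\|g\|_\infty$ together with the estimate
\[
\exp\!\left(\int_t^T\bar\lambda^{\text{div}}(\tilde X(s))-\lambda^{\text{death}}(\tilde X(s))\,\dd s\right)\leq \exp\!\bigl(T\|\bar\lambda^{\text{div}}\|_\infty\bigr),
\]
which is finite by \Cref{Assumption 1}. Existence of a solution then follows from applying a standard Feynman--Kac formula for bounded-generator jump Markov processes with a bounded potential (e.g.\ \cite[Theorem~4]{karandikar1987feynman}) to $\tilde X(t)$ with potential $V(x)=\lambda^{\text{death}}(x)-\bar\lambda^{\text{div}}(x)$: this is exactly the setting used in \Cref{proposition solution of the adjoint equation reaction + death}, only with $\tilde X$ replacing $X$ and the extra additive term $\bar\lambda^{\text{div}}$ absorbed into the potential.

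Finally I would obtain uniqueness from the boundedness of the generating operator. Any two bounded solutions $\phi_t^{(1)},\phi_t^{(2)}$ of \eqref{eq. adjoint equation reaction, death, division} with the same terminal value satisfy $\dot\psi_t=-(\tilde A^\ast-\lambda^{\text{death}}+\bar\lambda^{\text{div}})\psi_t$ with $\psi_T=0$ for $\psi_t\triangleq\phi_t^{(1)}-\phi_t^{(2)}$; since the operator is bounded on $(L^\infty,\|\cdot\|_\infty)$, Gr\"onwall's inequality (applied backward in time) forces $\psi_t\equiv 0$. The main subtlety, and the only point that really differs from \Cref{proposition solution of the adjoint equation reaction + death}, is checking that $\tilde A^\ast$ acts boundedly in spite of the sum over the infinite index set $\{x'\in\mathbb Z_{\geq 0}^d\}$; this is precisely the identity $\sum_{x'}\lambda^{\overline{\text{div}}}(x,x')=2\bar\lambda^{\text{div}}(x)$ combined with the boundedness of $\bar\lambda^{\text{div}}$, and once this point is made the rest of the argument is a direct transcription of the previous proposition.
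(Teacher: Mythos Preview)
Your proposal is correct and follows essentially the same approach as the paper: invoke the Feynman--Kac formula \cite[Theorem~4]{karandikar1987feynman} for the jump process $\tilde X$ with bounded potential to obtain the solution, and deduce uniqueness from the boundedness of $\tilde A^\ast-\lambda^{\text{death}}+\bar\lambda^{\text{div}}$ under \Cref{Assumption 1}. Your write-up is simply more detailed than the paper's two-sentence proof, in particular spelling out why the infinite sum in $\tilde A^\ast$ still yields a bounded operator via $\sum_{x'}\lambda^{\overline{\text{div}}}(x,x')=2\bar\lambda^{\text{div}}(x)$.
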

\begin{proof}
    The solution is given by the Feynman-Kac formula (see  \cite[Theorem 4]{karandikar1987feynman}).
    The uniqueness follows immediately from the boundedness of $\tilde A^*$ (suggested by \Cref{Assumption 1}).
\end{proof}

Following the strategy in \Cref{section reactions + death}, we can provide adjoint-equation-relevant conditions for checking whether an $L^1 \left(\mathbb Z^d_{\geq 0}\right)$-valued process $u_t$ solves Eq. \eqref{eq. n_t with reaction, death, division} (see \Cref{proposition adjoint property reaction + death + division}).
Moreover, we can further show that the only process satisfying these conditions
is given by 
\begin{align}\label{eq. solution of n_t reaction, death, division}
    n_t(x) ~=~ 
    \|\mu\|_1\,
    \mathbb E\left[
        \mathbbold 1_{x} \left( \tilde X(t) \right)\exp\left(\int_{0}^{t}
        \bar \lambda^\text{div}\left(\tilde X(s)\right) -\lambda^{\text{death}}\left(\tilde X(s)\right) \dd s\right)
    \right]
\end{align}
for all $t\in[0, +\infty)$ and $x\in\mathbb Z^{d}_{\geq 0}$, which therefore uniquely solves \eqref{eq. n_t with reaction, death, division} (see \Cref{them solution of n_t reaction + death + division}).

\begin{proposition}\label{proposition adjoint property reaction + death + division}
    Denote $\phi^{T,g}_t$ as the bounded solution of \eqref{eq. adjoint equation reaction, death, division} with the terminal time $T$ and terminal condition $g$.
    Then, an $L^1 \left(\mathbb Z^d_{\geq 0}\right)$-valued time-differentiable process $u_t$ solves \eqref{eq. solution of n_t reaction, death, division}, if $u_0=\mu$ and
    $\frac{d}{d t}<\phi^{T,g}_t, u_t>=0$ for any $T>0$, $t\in[0,T]$, and bounded function $g(x)$.
\end{proposition}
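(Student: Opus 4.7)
The plan is to mimic the argument for \Cref{proposition adjoint property reaction + death}, now with the enlarged operator $\tilde A - \lambda^{\text{death}} + \bar\lambda^{\text{div}}$ in place of $A_{\text{react}} - \lambda^{\text{death}}$. I would start from the product rule $\frac{d}{dt}<\phi^{T,g}_t, u_t> \;=\; <\dot\phi^{T,g}_t, u_t> + <\phi^{T,g}_t, \dot u_t>$, substitute the adjoint ODE \eqref{eq. adjoint equation reaction, death, division} for $\dot\phi^{T,g}_t$, and then transpose the operator across the bilinear pairing. Since $\phi^{T,g}_t$ is bounded by \Cref{proposition solution of the adjoint equation reaction + death + division} and $u_t\in L^{1}(\mathbb Z^d_{\geq 0})$, under \Cref{Assumption 1} the multi-indexed sums appearing in $\tilde A^{*}$ and $\tilde A$ are absolutely convergent, so Fubini yields $<\tilde A^{*}\phi, u> \;=\; <\phi, \tilde A u>$; the multiplication operators $\lambda^{\text{death}}$ and $\bar\lambda^{\text{div}}$ are manifestly self-dual under this pairing.

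Combining these steps with the hypothesis $\frac{d}{dt}<\phi^{T,g}_t, u_t>=0$ gives $<\phi^{T,g}_t, \dot u_t> \;=\; <\phi^{T,g}_t, (\tilde A - \lambda^{\text{death}} + \bar\lambda^{\text{div}}) u_t>$. Using the defining identity $\tilde A = A_{\text{react}} + A_{\text{div}} - \bar\lambda^{\text{div}}$, the right-hand operator collapses to $A_{\text{react}} + A_{\text{div}} - \lambda^{\text{death}}$, which is precisely the one driving \eqref{eq. n_t with reaction, death, division}. To convert this weak equality into the pointwise ODE, I specialize to $t=T$, where the terminal condition forces $\phi^{T,g}_T=g$, and then take $g = \mathbbold 1_{y}$ for each $y\in\mathbb Z^d_{\geq 0}$ (which is a bounded function and hence an admissible test choice). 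This yields $\dot u_T(y) = [(A_{\text{react}} - \lambda^{\text{death}} + A_{\text{div}}) u_T](y)$ for every $y$, and since $T>0$ is arbitrary, together with $u_0 = \mu$ this shows that $u_t$ satisfies \eqref{eq. n_t with reaction, death, division}.

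The main obstacle I anticipate is rigorously justifying the transposition $<\tilde A^{*}\phi, u> \;=\; <\phi, \tilde A u>$. The operator $\tilde A$ mixes three layers of summation over $\mathbb Z^d_{\geq 0}$ arising from reactions, division partners, and the $\bar\lambda^{\text{div}}$ correction, so verifying absolute convergence of the resulting iterated series requires combining the uniform boundedness of $\bar\lambda^{\text{div}}$ and the rate functions supplied by \Cref{Assumption 1} with the bounds $\|u_t\|_1<\infty$ and $\|\phi^{T,g}_t\|_\infty<\infty$. A secondary technicality is ensuring that $t\mapsto <\phi^{T,g}_t, u_t>$ is genuinely differentiable; this follows from the $L^{1}$-time-differentiability of $u_t$ asserted in the hypothesis and the fact that $\dot\phi^{T,g}_t$ is uniformly bounded on $[0,T]$, since $\tilde A^{*}$ sends bounded functions to bounded functions under \Cref{Assumption 1}.
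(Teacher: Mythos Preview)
Your argument is correct and arrives at the same conclusion as the paper, but the mechanics differ slightly from the paper's own proof (which, for \Cref{proposition adjoint property reaction + death + division}, is declared identical to that of \Cref{proposition adjoint property reaction + death}). The paper works at the level of \emph{finite differences}: it writes $\langle \phi^{T,g}_{t_1}, u_{t_1}\rangle - \langle \phi^{T,g}_{t_2}, u_{t_2}\rangle = 0$, inserts the explicit propagator $\phi^{T,g}_{t_1} = e^{-(\tilde A^* - \lambda^{\text{death}} + \bar\lambda^{\text{div}})(t_1-t_2)}\phi^{T,g}_{t_2}$, transposes the \emph{exponential} operator across the pairing, and then lets $t_2\to t_1^+$ to recover the ODE for $u$. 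You instead work infinitesimally, applying the product rule to $\langle \phi^{T,g}_t, u_t\rangle$ and transposing only the generator $\tilde A^*$. Your route is a touch more elementary---it avoids the operator exponential and needs only the adjointness of $\tilde A$, which is a single Fubini check---while the paper's finite-difference version sidesteps the need to justify the product rule for the bilinear pairing. Since you already address that technicality (bounded $\dot\phi^{T,g}_t$ against $L^1$-differentiable $u_t$), both arguments are complete; yours is arguably the cleaner packaging.
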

\begin{proof}
    The proof is almost the same as the one of \Cref{proposition adjoint property reaction + death}. 
    We leave it to readers.
\end{proof}

\begin{theorem}\label{them solution of n_t reaction + death + division}
    The process $n_t$ given by \eqref{eq solution of the adjoint equation reaction, death, division} satisfies the conditions in \Cref{proposition adjoint property reaction + death + division}, and, therefore, it is the only $L^1\left(\mathbb Z^{d}_{\geq 0}\right)$-valued process solving \eqref{eq. n_t with reaction, death, division}. 
\end{theorem}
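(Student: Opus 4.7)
The proof will mirror that of Theorem~\ref{them solution of n_t reaction + death} very closely, so my plan is to follow its four-step skeleton and to focus energy on the one new ingredient, namely the positive exponential factor $\exp(\int_0^t \bar\lambda^{\text{div}}(\tilde X(s))\,\dd s)$.

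First I would establish basic regularity of $n_t$ given by \eqref{eq. solution of n_t reaction, death, division}. Since $\bar\lambda^{\text{div}}$ and $\lambda^{\text{death}}$ are uniformly bounded by \Cref{Assumption 1}, the exponential weight is pointwise dominated by $e^{t\|\bar\lambda^{\text{div}}\|_\infty}$; summing over $x \in \mathbb Z^d_{\geq 0}$ and exchanging sum and expectation via Tonelli gives $\|n_t\|_1 \le \|\mu\|_1 e^{t\|\bar\lambda^{\text{div}}\|_\infty} < \infty$, so $n_t \in L^1(\mathbb Z^d_{\geq 0})$. Time-differentiability of $\langle \phi^{T,g}_t, n_t\rangle$ then follows by dominated convergence, using the boundedness of $\phi^{T,g}_t$ inherited from \Cref{proposition solution of the adjoint equation reaction + death + division}. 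Next, the initial condition is immediate: at $t=0$ the exponent vanishes and $\tilde X(0) \sim \mu/\|\mu\|_1$, so the right-hand side of \eqref{eq. solution of n_t reaction, death, division} evaluates to $\|\mu\|_1 \cdot \mu(x)/\|\mu\|_1 = \mu(x)$.

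The key step is to verify $\frac{\dd}{\dd t}\langle \phi^{T,g}_t, n_t\rangle = 0$ of \Cref{proposition adjoint property reaction + death + division}. My plan is to swap the sum over $x$ with the expectation in $n_t$ to obtain
\[
\langle \phi^{T,g}_t, n_t\rangle
= \|\mu\|_1\,\mathbb E\!\left[\phi^{T,g}_t(\tilde X(t))\,\exp\!\left(\int_0^t \bigl(\bar\lambda^{\text{div}}-\lambda^{\text{death}}\bigr)(\tilde X(s))\,\dd s\right)\right],
\]
then substitute the Feynman--Kac expression \eqref{eq solution of the adjoint equation reaction, death, division} for $\phi^{T,g}_t(\tilde X(t))$ and invoke the Markov property of $\tilde X$ to identify this as the $\mathcal F_t$-conditional expectation of $g(\tilde X(T))\exp(\int_t^T(\bar\lambda^{\text{div}}-\lambda^{\text{death}})(\tilde X(s))\,\dd s)$. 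The tower property then fuses the two time intervals, yielding
\[
\langle \phi^{T,g}_t, n_t\rangle
= \|\mu\|_1\,\mathbb E\!\left[g(\tilde X(T))\,\exp\!\left(\int_0^T \bigl(\bar\lambda^{\text{div}}-\lambda^{\text{death}}\bigr)(\tilde X(s))\,\dd s\right)\right],
\]
which is manifestly independent of $t$, so its derivative vanishes. With the hypotheses of \Cref{proposition adjoint property reaction + death + division} all verified, the proposition concludes that $n_t$ solves \eqref{eq. n_t with reaction, death, division}; uniqueness of the $L^1$-valued solution then follows from a standard Gronwall argument applied to the difference of two candidate solutions, using that the operator $A_\text{react}+A_\text{div}-\lambda^{\text{death}}$ is bounded on $L^1(\mathbb Z^d_{\geq 0})$ under \Cref{Assumption 1}.

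The main obstacle I anticipate is technical rather than conceptual: rigorously justifying the interchange of the sum over $\mathbb Z^d_{\geq 0}$ with the expectation, and the subsequent application of the Markov property to a random initial state of $\tilde X(t)$. Both are resolved by the combination of non-negative integrands, the $L^1$-bound on $\mu$ from \Cref{Assumption 1}, and the non-explosivity of $\tilde X$ noted after \eqref{eq tilde X}, which let Tonelli and the strong Markov property apply without extra hypotheses. Once these are handled, the rest of the argument is essentially the same telescoping used in the proof of \Cref{them solution of n_t reaction + death}, but with $\tilde X$ in place of $X$ and with the sign-indefinite effective rate $\bar\lambda^{\text{div}}-\lambda^{\text{death}}$ in place of $-\lambda^{\text{death}}$.
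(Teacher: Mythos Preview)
Your proposal is correct and follows precisely the approach the paper intends: the paper's own proof simply states that the argument is almost identical to that of \Cref{them solution of n_t reaction + death} and leaves the details to the reader, and you have faithfully executed exactly that argument with $\tilde X$ in place of $X$ and the modified weight $\bar\lambda^{\text{div}}-\lambda^{\text{death}}$ in place of $-\lambda^{\text{death}}$. The uniqueness step via a Gronwall argument on the bounded operator is also the content of \Cref{proposition uniqueness of the mean dynamics}, which the paper invokes for this purpose.
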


\begin{proof}
    The proof is almost the same as the one of \Cref{them solution of n_t reaction + death}.
    We leave it to readers.
\end{proof}

Similar to \eqref{eq. n_t with reaction + death}, the Feynman-Kac-like representation \eqref{eq. n_t with reaction, death, division} enables a fixed-budget simulation algorithm for the population dynamics \eqref{eq. n_t with reaction, death, division} (see \Cref{algorithm death and division}).
As in the previous subsection, we refer to this algorithm as the Feynman-Kac-inspired Gillespie's Stochastic Simulation Algorithm (FKG-SSA) for \eqref{eq. n_t with reaction, death, division}.
In detail, this algorithm generates a fixed number of simulation samples for $\tilde X(t)$ and computes their associated exponential terms $\exp\left(\int_{0}^{t}
        \bar \lambda^\text{div}\left(\tilde X(s)\right) -\lambda^{\text{death}}\left(\tilde X(s)\right) \dd s\right)$
which can be viewed as weights for these simulation samples. 
Then, \Cref{algorithm death and division} applies these weighted samples to construct the solution of \eqref{eq. n_t with reaction, death, division} based on \eqref{eq. solution of n_t reaction, death, division}.
This method again has a computational complexity of $\mathcal O(N)$ with $N$ the sample size, independent of the actual cell population (i.e., $\|n_t\|_1$).
Consequently, when applied to rapidly dividing large-scale populations, this method can be significantly more efficient than the agent-based simulation algorithms that track every individual cell.

\begin{algorithm2e}
\caption{FKG-SSA for \eqref{eq. n_t with reaction, death, division}}
\label{algorithm death and division}
\SetAlgoLined
Input the initial conditions of $\mu(x)$,  sample size $N$, and final time $T$. \;
Simulate N trajectories of $\tilde{X}(t)$ up to time T with initial states sampled from the distribution $\mu(x)/\|\mu\|_1$. Denote them by $\texttt{x}_i(t)$ ($i=1,\dots, N$). \\
Calculate weights $w_{i}(t) = \exp\left(\int_{0}^{t} \bar \lambda^\text{div}(\texttt{x}_{i}(s)) -\lambda^{\text{death}} (\texttt{x}_{i}(s))ds\right)$. \\
Output the solution: $ \hat n_{t}(x) =  \frac{\|\mu\|_1}{N}\sum_{i=1}^{N} \mathbbold 1_{x}(\texttt{x}_{i}(t))w_{i}(t)$.
\end{algorithm2e}


\begin{remark}\label{remark interpretation of the algorithm reaction + death + division}
    We can also interpret the representation \eqref{eq. solution of n_t reaction, death, division} from the biological perspective. 
    First, compared with $X(t)$ described in \eqref{eq. X(t)}, the process $\tilde X(t)$ contains additional terms in the second row of \eqref{eq tilde X}, representing cell division.
    This structure allows for $\tilde X(t)$ to track the internal state of an immortal cell lineage.
    Specifically, when the cell divides, the process $\tilde X(t)$ starts to track the internal state of one daughter cell. 
    Notably, this cell lineage differs from those modeled by \eqref{eq. stochastic system} (or equivalently \eqref{eq n_t dynamics});
    the cells described by $\tilde X(t)$ do not undergo death, and their division rate is twice that of the cells modeled in \eqref{eq. stochastic system}.
    These discrepancies between $\tilde X(t)$ and the target biological cells are compensated via the weight term in \eqref{eq. solution of n_t reaction, death, division}.
    Specifically, the term $\exp\left(\int_{0}^{t} -\lambda^{\text{death}}\left(\tilde X(s)\right) \dd s\right)$  represents the likelihood of this cell lineage surviving up to time $t$.
    The remaining part, $\exp\left(\int_{0}^{t} \bar \lambda^{\text{div}}\left(\tilde X(s)\right) \dd s\right)$,
    compensates for the inability to track both daughter cells and the mismatch in the division rates. 
    In particular, cells with higher division rates typically dominate the population \cite{piho2024feedback}. This is reflected in the algorithm, as it assigns greater rewards to cell lineages with faster division rates. 
\end{remark}

\subsection{Algorithm for the general cell population system \eqref{eq n_t dynamics}}
\label{section algorithm for the whole system}

This section focuses on the general population dynamics \eqref{eq n_t dynamics}.
Again, we first provide a Feynman-Kac-like representation for Eq. \eqref{eq n_t dynamics} and then utilize it to construct a fixed-budget simulation algorithm. 

We observe that \eqref{eq n_t dynamics} is a non-homogeneous linear ODE due to the influx term $\lambda^\text{in}(x)$, and its homogeneous version is presented and already solved in \Cref{section reaction + death + division}.
This suggests that Eq. \eqref{eq n_t dynamics} can be solved using the results provided in \Cref{section reaction + death + division}. 
Following this idea, we first write the solution of the non-homogeneous equation \eqref{eq n_t dynamics} under \Cref{Assumption 1} by 
\begin{align}\label{eq. general solution of the population dyanmics}
    n_t &= e^{\left(A_\text{react} - \lambda^\text{death} + A_\text{div} \right) t }\, \mu
    + \int_0^t e^{\left(A_\text{react} - \lambda^\text{death} + A_\text{div} \right) (t-s) } \lambda^\text{in} \, \dd s \notag \\
    &= e^{\left(A_\text{react} - \lambda^\text{death} + A_\text{div} \right) t } \,\mu
    + \int_0^t \sum_{z:\, \lambda^\text{in}(z) \neq 0 }  \lambda^\text{in}(z) e^{\left(A_\text{react} - \lambda^\text{death} + A_\text{div} \right) (t-s) }  \mathbbold 1_{z} \, \dd s
    && \forall t \geq 0,
\end{align}
where the first line follows immediately from the general solution to non-homogeneous ODEs, and the second line follows from the expression $\lambda^\text{in} = \sum_{z:\, \lambda^\text{in}(z) \neq 0 } \lambda^\text{in}(z) \mathbbold 1_{z}$.
We notice that $e^{\left(A_\text{react} - \lambda^\text{death} + A_\text{div} \right) t } \mu$ and $\lambda^\text{in} (z) e^{\left(A_\text{react} - \lambda^\text{death} + A_\text{div} \right) (t-s) } \mathbbold 1_{z}$ both solve the differential equation $\frac{\dd }{\dd t} u_t
    = \left(A_\text{react} - \lambda^\text{death} + A_\text{div}\right) u_t$ (i.e., Eq. \eqref{eq. n_t with reaction, death, division}) with the initial conditions being $u_0 = \mu$ and $u_s = \lambda^\text{in}(z) \mathbbold 1_{z}$, respectively.
Therefore, by \eqref{eq. solution of n_t reaction, death, division} and \Cref{them solution of n_t reaction + death + division}, we can provide Feynman-Kac-like representation for both quantities:
\begin{align*}
    &\left[e^{\left(A_\text{react} - \lambda^\text{death} + A_\text{div} \right) t } \mu\right] (x)
    ~=~ \|\mu\|_1\, \mathbb E\left[
        \mathbbold 1_{x} \left( \tilde X(t) \right)
        e^{\int_{0}^t \lambda^{D}\left(\tilde X(s)\right) \dd s}
        \right] 
\end{align*}
with $\lambda^{D}(x) \triangleq  \bar \lambda^\text{div}(x) - \lambda^\text{death}(x)$
and, for any $z$ where $\lambda^\text{in}(z) \neq 0$,
\begin{align*}
    \left[\lambda^\text{in}(z) e^{\left(A_\text{react} - \lambda^\text{death} + A_\text{div} \right) (t-s) } \mathbbold 1_{z}\right](x) 
    &= \lambda^\text{in}(z) \, \mathbb E\left[
        \mathbbold 1_{x} \left( \tilde X(t) \right)
        e^{\int_{s}^t \lambda^{D}\left(\tilde X(\tau)\right) \dd \tau}
        \Big| 
        \tilde X(s) = z
        \right] \\
    &= 
        \frac{\lambda^\text{in}(z) }{\mathbb P\left(\tilde X(s) = z\right)}
        \mathbb E\left[
        \mathbbold 1_{x} \left( \tilde X(t) \right)
        \mathbbold 1_{z} \left( \tilde X(s) \right)
        e^{\int_{s}^t \lambda^{D}\left(\tilde X(\tau)\right) \dd \tau}
        \right]
\end{align*}
where the last line follows from the fact that ${\mathbb P\left(\tilde X(s) = z\right)} \neq 0 $ for such state $z$ under \Cref{Assumption 1}--\ref{assumption 2} (see \Cref{proposition non-zero of the probability}).
Finally, by applying these representations to \eqref{eq. general solution of the population dyanmics}, we provide a Feynman-Kac-like representation for the solution of the general population dynamics \eqref{eq n_t dynamics}:
\begin{align}\label{eq. solution of n_t general case}
    &n_t(x) ~=~ 
    \|\mu\|_1\,
    \mathbb E\left[
        \mathbbold 1_{x} \left( \tilde X(t) \right)
        W_t
    \right] \\
     & W_t ~=~ e^{\int_{0}^t \lambda^{D}\left(\tilde X(s)\right) \dd s}
        + \int_{0}^{t} e^{\int_{s}^t \lambda^{D}\left(\tilde X(\tau)\right) \dd \tau}
        \frac{\lambda^\text{in}\left(\tilde X(s)\right)}{\|\mu\|_1\,\tilde p(t, \tilde X(s))}
        \dd s 
        && \text{and } W_0 =1
        \label{eq. weights of the final algorithm}
\end{align}
where $\lambda^{D}(x) \triangleq  \bar \lambda^\text{div}(x) - \lambda^\text{death}(x)$ and $\tilde p(t, z) \triangleq \mathbb P \left( \tilde X(t) = z\right).$
Notably, this weighting process $W_t$ can also be written as $\dot W_t = \left(\bar \lambda^\text{div}\left(\tilde X(t)\right) -\lambda^{\text{death}}\left(\tilde X(t)\right) \right)W_t + \frac{\lambda^\text{in}\left(\tilde X(t) \right)}{\|\mu\|_1 \, \tilde p\left(t, \tilde X(t) \right)}$.
The results of the above derivations are summarized in \Cref{them solution of n_t general case}.

\begin{proposition}\label{proposition non-zero of the probability}
    Under \Cref{Assumption 1}--\ref{assumption 2}, the stochastic process $\tilde X(t)$ given in \eqref{eq tilde X} satisfies
    $${\mathbb P\left(\tilde X(t) = z\right)} ~\geq~ e^{-t\left(\sum_{j=1}^r\|\lambda^\text{react}_j\|_\infty + 2\|\bar \lambda^\text{div}\|_\infty\right) } \mathbb P\left(\tilde X(0) = z\right) ~>~ 0$$ for all $t>0$ and $z$ in the support of $\mu(x)$ or the support of $\lambda^\text{in}(x)$.
\end{proposition}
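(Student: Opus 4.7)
The plan is to lower bound $\mathbb{P}(\tilde X(t)=z)$ by the probability that the process starts at $z$ and remains there throughout $[0,t]$, i.e., experiences no jump. Concretely, I would write
\begin{align*}
  \mathbb{P}\bigl(\tilde X(t)=z\bigr)
  \;\geq\; \mathbb{P}\bigl(\tilde X(s)=z\text{ for all } s\in[0,t]\bigr)
  \;=\; \mathbb{P}\bigl(\tilde X(0)=z\bigr)\,\mathbb{P}\bigl(\tau_z > t \,\big|\, \tilde X(0)=z\bigr),
\end{align*}
where $\tau_z$ is the first jump time of $\tilde X$ starting from $z$. Since $\tilde X$ is a continuous-time Markov chain with generator $\tilde A^{*}$, conditional on $\tilde X(0)=z$ the holding time $\tau_z$ is exponentially distributed with rate equal to the total outgoing intensity at $z$.

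The main computational step is to bound this total outgoing rate uniformly in $z$. From the explicit form of $\tilde A^{*}$, the rate is
\begin{align*}
  \sum_{j=1}^{r}\lambda^{\text{react}}_{j}(z)\;+\;\sum_{x'\in\mathbb{Z}^{d}_{\geq 0}}\lambda^{\overline{\text{div}}}(z,x').
\end{align*}
Using $\lambda^{\overline{\text{div}}}(z,x')=\sum_{x''}\bigl[\lambda^{\text{div}}(z,x',x'')+\lambda^{\text{div}}(z,x'',x')\bigr]$, a quick exchange of summation gives $\sum_{x'}\lambda^{\overline{\text{div}}}(z,x')=2\bar\lambda^{\text{div}}(z)\leq 2\|\bar\lambda^{\text{div}}\|_\infty$, while each reaction term is bounded by $\|\lambda^{\text{react}}_{j}\|_\infty$ by \Cref{Assumption 1}. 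Hence the holding rate is at most $C:=\sum_{j=1}^{r}\|\lambda^{\text{react}}_{j}\|_\infty+2\|\bar\lambda^{\text{div}}\|_\infty$, so $\mathbb{P}(\tau_z>t\mid \tilde X(0)=z)\geq e^{-Ct}$.

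Finally, to deduce strict positivity, I would invoke the initial distribution $\tilde X(0)\sim \mu/\|\mu\|_1$: if $z$ lies in the support of $\mu$, then $\mathbb{P}(\tilde X(0)=z)=\mu(z)/\|\mu\|_1>0$ immediately; if $z$ lies in the support of $\lambda^{\text{in}}$, then by \Cref{assumption 2} this support is a subset of $\mathrm{supp}(\mu)$, so the same conclusion holds. Combining these three ingredients yields the claimed two-sided inequality. I don't expect any real obstacle here; the argument is essentially a one-line estimate once the total jump rate is read off from the generator, and \Cref{assumption 2} is included precisely so that the initial measure covers the states that can be injected via influx.
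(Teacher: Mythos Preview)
Your proposal is correct and matches the paper's proof essentially line for line: the paper also bounds $\mathbb P(\tilde X(t)=z)$ below by $\mathbb P(\tilde X(s)=z,\ \forall s\in[0,t])$, evaluates this as $e^{-t(\sum_j\lambda^{\text{react}}_j(z)+2\bar\lambda^{\text{div}}(z))}\mathbb P(\tilde X(0)=z)$, and then invokes \Cref{Assumption 1} for the uniform bound and \Cref{assumption 2} for the support inclusion. Your explicit verification that $\sum_{x'}\lambda^{\overline{\text{div}}}(z,x')=2\bar\lambda^{\text{div}}(z)$ is a welcome detail that the paper leaves implicit.
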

\begin{proof}
    The proof is given in the appendix.
\end{proof}

\begin{theorem}\label{them solution of n_t general case}
    Under \Cref{Assumption 1}--\ref{assumption 2}, the process $n_t$ given in \eqref{eq. solution of n_t general case} is the unique $L^1\left(\mathbb Z^{d}_{\geq 0}\right)$-valued process solving \eqref{eq n_t dynamics}.
\end{theorem}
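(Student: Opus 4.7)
The plan is to follow the derivation already sketched in the paragraphs preceding the theorem and turn it into a rigorous argument in two stages: (i) verify that \eqref{eq. general solution of the population dyanmics} is the unique $L^1$-valued solution of \eqref{eq n_t dynamics}, and (ii) rewrite each term in \eqref{eq. general solution of the population dyanmics} via the Feynman-Kac representation established in \Cref{them solution of n_t reaction + death + division} to obtain \eqref{eq. solution of n_t general case}.

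First I would note that, by \Cref{Assumption 1}, all rate functions are bounded, so the operator $\mathcal A \triangleq A_{\text{react}}-\lambda^{\text{death}}+A_{\text{div}}$ is a bounded linear operator on the Banach space $L^1(\mathbb Z^d_{\geq 0})$, and $\lambda^{\text{in}} \in L^1(\mathbb Z^d_{\geq 0})$ as well. Thus \eqref{eq n_t dynamics} is a non-homogeneous linear ODE on a Banach space with bounded generator, and standard semigroup/variation-of-constants theory gives a unique $L^1$-valued solution, namely the Duhamel formula \eqref{eq. general solution of the population dyanmics}. This handles uniqueness and reduces the theorem to identifying that formula with \eqref{eq. solution of n_t general case}.

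Next I would represent each piece of \eqref{eq. general solution of the population dyanmics} probabilistically. The homogeneous part $e^{\mathcal A t}\mu$ solves \eqref{eq. n_t with reaction, death, division} with initial condition $\mu$, so by \Cref{them solution of n_t reaction + death + division} it equals $\|\mu\|_1\mathbb E\!\left[\mathbbold 1_x(\tilde X(t))\exp\!\left(\int_0^t\lambda^D(\tilde X(s))\,\dd s\right)\right]$ with $\lambda^D=\bar\lambda^{\text{div}}-\lambda^{\text{death}}$. For each $z$ with $\lambda^{\text{in}}(z)\neq 0$, the function $\lambda^{\text{in}}(z)\,e^{\mathcal A(t-s)}\mathbbold 1_z$ solves \eqref{eq. n_t with reaction, death, division} with initial condition $\lambda^{\text{in}}(z)\mathbbold 1_z$ at time $s$; applying \Cref{them solution of n_t reaction + death + division} again and invoking the time-homogeneous Markov property of $\tilde X(t)$ (which is legitimate because the law of $\tilde X(t-s)$ started at $z$ coincides with the conditional law of $\tilde X(t)$ given $\tilde X(s)=z$), this equals $\lambda^{\text{in}}(z)\,\mathbb E\!\left[\mathbbold 1_x(\tilde X(t))\exp\!\left(\int_s^t\lambda^D(\tilde X(\tau))\,\dd\tau\right)\,\big|\,\tilde X(s)=z\right]$.

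To recombine the integrals, I would rewrite each conditional expectation using the identity $\mathbb E[\,\cdot\mid\tilde X(s)=z]=\mathbb E[\mathbbold 1_z(\tilde X(s))\,\cdot\,]/\tilde p(s,z)$, where $\tilde p(s,z)=\mathbb P(\tilde X(s)=z)$. This is where \Cref{proposition non-zero of the probability} is essential: under \Cref{assumption 2}, every $z$ in the support of $\lambda^{\text{in}}$ lies in the support of $\mu$, so $\tilde p(s,z)>0$ for all $s>0$ and the denominator is well-defined. Summing over the (finite) support of $\lambda^{\text{in}}$, integrating in $s$ from $0$ to $t$, and using the pointwise identity $\sum_{z}\mathbbold 1_z(\tilde X(s))\,\lambda^{\text{in}}(z)/\tilde p(s,z)=\lambda^{\text{in}}(\tilde X(s))/\tilde p(s,\tilde X(s))$ inside the expectation yields exactly the weight process $W_t$ in \eqref{eq. weights of the final algorithm}, and hence \eqref{eq. solution of n_t general case}. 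Fubini's theorem (justified by boundedness of $\lambda^D$ and finite support plus $L^1$-integrability of $\lambda^{\text{in}}$) is used to interchange the time integral with the expectation.

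The main obstacle is the conditional-expectation manipulation in the last step: one must carefully invoke the Markov property of $\tilde X$, verify that the denominator $\tilde p(s,z)$ is strictly positive on the relevant support (this is precisely what \Cref{proposition non-zero of the probability} and \Cref{assumption 2} guarantee), and justify the interchange of summation/integration and expectation. Everything else reduces to Duhamel's formula plus a direct application of the results already proved in \Cref{section reaction + death + division}.
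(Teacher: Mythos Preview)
Your proposal is correct and follows essentially the same approach as the paper: the paper's proof simply points back to the derivations preceding the theorem (Duhamel formula plus the Feynman-Kac representation from \Cref{them solution of n_t reaction + death + division} applied term-by-term, with \Cref{proposition non-zero of the probability} justifying the division by $\tilde p(s,z)$), and invokes \Cref{proposition uniqueness of the mean dynamics} for uniqueness. You have spelled out these steps more carefully than the paper itself, but the route is identical.
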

\begin{proof}
    The derivations for \eqref{eq. solution of n_t general case} have already shown that this $n_t$ solves \eqref{eq n_t dynamics}.
    Eq. \eqref{eq. general solution of the population dyanmics} together with the boundedness of $A_\text{react}$,   $\lambda^\text{death}$, and  $A_\text{div} $ (implied by \Cref{Assumption 1}) suggests that $n_t$ has a finite $L^1$-norm for any $t>0$.
    The uniqueness follows from \Cref{proposition uniqueness of the mean dynamics}.
\end{proof}

The representation \eqref{eq. solution of n_t general case} enables a fixed-budget simulation algorithm for the general population dynamics \eqref{eq n_t dynamics} (see \Cref{algorithm death and division and influx}), which is again referred to as the FKG-SSA.
This algorithm first generates a fixed number of simulation samples for $\tilde X(t)$ and then uses them to approximate the distribution $\tilde p(t,x)$ (see Line 3 in \Cref{algorithm death and division and influx}).
These samples and the approximated distribution are further applied to compute the weights according to \eqref{eq. weights of the final algorithm} (see Line 4 in \Cref{algorithm death and division and influx}).
Finally, an approximation to the solution of \eqref{eq n_t dynamics} is established based on the Feynman-Kac-like representation \eqref{eq. solution of n_t general case}.
This algorithm again has a computational complexity of $\mathcal O (N)$
with N the simulation sample size.
Consequently, when $N$ is fixed, our algorithm always has a fixed computational cost, irrelevant to the actual cell population size $\|n_t\|_1$.

\begin{algorithm2e}
\caption{FKG-SSA for the general population dynamics \eqref{eq n_t dynamics}}
\label{algorithm death and division and influx}
\SetAlgoLined
Input the initial condition $\mu(x)$,  sample size $N$, and final time $T$. \\
Simulate N trajectories of $\tilde{X}(t)$ up to time T with initial states sampled from the distribution $\mu(x)/\|\mu\|_1$. Denote them by $\texttt{x}_i(t)$ ($i=1,\dots, N$).  \\
Approximate the distribution of $\tilde X(t)$: $\hat p(t,x) = \frac{1}{N}\sum_{i=1}^{N} \mathbbold 1_{x}(\texttt{x}_{i}(t)).$ \\
Calculate weights $w_i(t)$ by solving the differential equation $\dot w_i(t) = \left(\bar \lambda^\text{div}(\texttt{x}_i(t)) - \lambda^\text{death}(\texttt{x}_i(t))\right) w_i(t) + \frac{\lambda^\text{in}(\texttt{x}_i(t))}{\|\mu\|_1 \hat p(t, \texttt{x}_i(t))}$ with $w_i(0)=1$.  \\
Output the solution: $ \hat n_{t}(x) =  \frac{\|\mu\|_1}{N}\sum_{i=1}^{N} \mathbbold 1_{x}(\texttt{x}_{i}(t))w_{i}(t)$.
\end{algorithm2e}

In addition to the computational advantage, the algorithm is also reliable, as demonstrated by its convergence to the exact solution of \eqref{eq n_t dynamics} at a rate of $\frac{1}{\sqrt{N}}$ (see \Cref{them convergence of the FKG-SSA}).
This convergence result is non-trivial, as the interactions among the samples complicate the analysis.
Specifically, the simulated samples interact through the approximation of $\tilde p(t,x)$ and the weight computation as described in Line 3--4 of \Cref{algorithm death and division and influx}.
Moreover, the interaction term $\frac{\lambda^\text{in}(x)}{\|\mu\|_1 \hat p(t, x)}$ in the weight dynamics is not globally Lipschitz, which adds more challenges. 
To analyze this interacting particle system, we adopt some techniques from the mean-field system analysis
\cite{buckdahn2009mean, antonelli2002rate,bossy1997stochastic} (see the proof of this theorem in the appendix).

\begin{theorem}\label{them convergence of the FKG-SSA}
    Assume \Cref{Assumption 1}---\ref{assumption 2} hold, and let $T$ be the final time in \Cref{algorithm death and division and influx}.
    Then, for any $t\in[0,T]$, there exist positive constants $C_1$, $C_2^z$ (for $z$ in the support of $\lambda^\text{in}$), and $C_3$ such that the output of \Cref{algorithm death and division and influx} (denoted by $\hat n_t$) satisfies
    \begin{align*}
        \mathbb E\left[
            \|\hat n_t - n_t\|^2_2
        \right] \leq 
         C_1 N^2
        \left(
        \sum_{z:\,\lambda^\text{in}(z)\neq 0}
         e^{ -C^z_{2} N}
        \right)
        +
        \frac{C_3}{N}  
        && \forall t\geq 0
    \end{align*}
    where $n_t$ is the solution of \eqref{eq n_t dynamics}, and $N$ is the sample size of \Cref{algorithm death and division and influx}.
    By denoting $\lambda_\infty = \max\left\{\|\lambda^\text{react}_1\|_{\infty}, \dots, \|\lambda^\text{react}_r\|_{\infty}, \|\lambda^\text{death}\|_\infty, \|\bar\lambda^\text{div}\|_\infty, \|\lambda^\text{in}\|_\infty \right\}$, these positive constants are given by $C_1 = 2 \sum_{z:\,\lambda^\text{in}(z)\neq 0} \lambda^2_{\infty} t^2 e^{2 \lambda_\infty t}$, 
    $C_2^z = \frac{\mu^2(z)}{2 \|\mu\|^2_1} e^{-2(r+2)\lambda_\infty t}$, and 
    $$ C_3 = 2 \|\mu\|^2_1 \left(1 + \sum_{z: \, \lambda^\text{in}(z)\neq 0} \frac{\lambda_\infty}{\mu(z)} \right)^2e^{2(r+2)\lambda_\infty t}
    + 
    \sum_{z: \, \lambda^\text{in}(z)\neq 0}
     \frac{C_1}{4} \left(\frac{\|\mu\|_1}{\mu(z)}\right)^3 e^{3(r+2)\lambda_{\infty} t}. $$
\end{theorem}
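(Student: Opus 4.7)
The plan is to introduce a decoupled auxiliary particle system, compare the FKG-SSA output to it by a good-event analysis, and handle the decoupled system by a standard Monte Carlo argument. Define auxiliary weights $w_i^*(t)$ by the same ODE as in Line 4 of \Cref{algorithm death and division and influx} but with the empirical $\hat p(t, \cdot)$ replaced by the exact distribution $\tilde p(t, \cdot) = \mathbb P(\tilde X(t) = \cdot)$, i.e.
\begin{align*}
\dot w_i^*(t) = \bigl(\bar\lambda^{\text{div}}(\texttt{x}_i(t)) - \lambda^{\text{death}}(\texttt{x}_i(t))\bigr) w_i^*(t) + \frac{\lambda^{\text{in}}(\texttt{x}_i(t))}{\|\mu\|_1\, \tilde p(t, \texttt{x}_i(t))}, \qquad w_i^*(0) = 1.
\end{align*}
Since each $\texttt{x}_i$ is an i.i.d.\ copy of $\tilde X$ and $\tilde p$ is deterministic, the pairs $(\texttt{x}_i(\cdot), w_i^*(\cdot))$ are i.i.d. Setting $\hat n_t^*(x) = \frac{\|\mu\|_1}{N} \sum_i \mathbbold 1_x(\texttt{x}_i(t)) w_i^*(t)$, I would split the error by the triangle inequality, $\mathbb E \|\hat n_t - n_t\|_2^2 \leq 2 \mathbb E\|\hat n_t - \hat n_t^*\|_2^2 + 2 \mathbb E\|\hat n_t^* - n_t\|_2^2$.

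For the decoupled term $\mathbb E \|\hat n_t^* - n_t\|_2^2$, I would first use \Cref{proposition non-zero of the probability} to get the deterministic lower bound $\tilde p(s,z) \geq (\mu(z)/\|\mu\|_1) e^{-(r+2)\lambda_\infty s}$ on every $z$ in the support of $\lambda^\text{in}$, which implies a deterministic upper bound $w_i^*(t) \leq (1 + \sum_z \lambda_\infty/\mu(z))\, e^{(r+2)\lambda_\infty t}$ via Gronwall. Then, because \eqref{eq. solution of n_t general case} and \Cref{them solution of n_t general case} identify $\mathbb E[\mathbbold 1_x(\texttt{x}_i(t)) w_i^*(t)] = n_t(x)/\|\mu\|_1$, a standard variance computation gives $\mathbb E \|\hat n_t^* - n_t\|_2^2 \leq \frac{\|\mu\|_1^2}{N} \mathbb E\bigl[(w_1^*(t))^2\bigr]$, which produces the first summand of the stated $C_3$.

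The hard part is the coupling term $\mathbb E\|\hat n_t - \hat n_t^*\|_2^2$, which requires handling the non-globally-Lipschitz interaction $1/\hat p$. I would introduce the good event
\[
A_N \;=\; \bigcap_{z : \lambda^\text{in}(z)\neq 0}\bigl\{\,\hat p(s,z) \geq \tfrac{1}{2}\tilde p(s,z)\text{ for all }s\in[0,T]\,\bigr\},
\]
and work separately on $A_N$ and $A_N^c$. On $A_N$ one has $|1/\hat p(s,z) - 1/\tilde p(s,z)| \leq 2|\hat p - \tilde p|/\tilde p^2$, so differencing the ODEs for $w_i$ and $w_i^*$ and applying Gronwall gives a sample-wise bound $|w_i(t) - w_i^*(t)| \leq C(T)\sup_{s,z} |\hat p(s,z) - \tilde p(s,z)|/\tilde p(s,z)^2$; Hoeffding's inequality on each coordinate $\hat p(s,z) - \tilde p(s,z)$ (which is a sum of $N$ bounded i.i.d.\ indicators) then controls $\mathbb E[\|\hat n - \hat n^*\|_2^2 \mathbbold 1_{A_N}]$ by $O(1/N)$, contributing the remaining piece of $C_3$. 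On $A_N^c$, one only has the worst-case deterministic bound $w_i(t), w_i^*(t) \leq C'(T) N$ (since $\hat p \geq 1/N$ whenever positive), so $\|\hat n_t - \hat n_t^*\|_2^2 \leq \tilde C(T) N^2$; this loss is compensated by the exponentially small probability $\mathbb P(A_N^c) \leq \sum_z 2 e^{-N\tilde p(\cdot,z)^2/2}$ obtained by Hoeffding together with the $\tilde p$-lower bound, yielding exactly $C_1 N^2 \sum_z e^{-C_2^z N}$.

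The main obstacle is the Gronwall step on $A_N$: the ODE for $w_i - w_i^*$ depends on all other samples through $\hat p$, and the Lipschitz argument is only valid on the good event but must hold \emph{uniformly in $s\in[0,T]$}. I would address this by exploiting the piecewise-constant nature of $\hat p(s, z)$ (it only changes at the jump times of $\{\texttt{x}_i\}$) and using a Doob-type maximal inequality or a union bound over the finitely many jump times to upgrade pointwise Hoeffding bounds to the uniform bound required by the Gronwall step. A secondary technical issue is that $A_N$ itself is defined via a supremum, so care is needed to ensure the event is measurable and that the $\tilde p^{-2}$ factor --- which blows up as $t$ grows --- is controlled by the finite terminal time $T$ and the lower bound from \Cref{proposition non-zero of the probability}; this is where the $e^{3(r+2)\lambda_\infty t}$ and $(\|\mu\|_1/\mu(z))^3$ factors in $C_3$ originate.
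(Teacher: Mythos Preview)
Your decomposition is exactly the paper's: define decoupled weights $w_i^*$ using the true $\tilde p$ in place of $\hat p$, set $\hat n_t^*$ accordingly, and split the error as $2\mathbb E\|\hat n_t-\hat n_t^*\|_2^2 + 2\mathbb E\|\hat n_t^*-n_t\|_2^2$. Your treatment of the decoupled term (deterministic bound on $w_i^*$ via \Cref{proposition non-zero of the probability}, then i.i.d.\ variance) is also the same.

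The difference is in how you control the coupling term $\mathbb E\|\hat n_t-\hat n_t^*\|_2^2$. You work on a \emph{time-uniform} good event $A_N=\bigcap_z\{\hat p(s,z)\geq\tfrac12\tilde p(s,z)\ \forall s\in[0,T]\}$, apply Gronwall on $A_N$, and then need a maximal inequality or union bound over jump times to get exponential smallness of $\mathbb P(A_N^c)$. This is the ``main obstacle'' you identify. The paper avoids it entirely: instead of Gronwall, it writes the Duhamel formula
\[
w_i(t)-w_i^*(t)=\int_0^t e^{\int_s^t\lambda^D(\texttt{x}_i(\tau))d\tau}\,\frac{\lambda^{\text{in}}(\texttt{x}_i(s))}{\|\mu\|_1}\Bigl(\frac{1}{\hat p(s,\texttt{x}_i(s))}-\frac{1}{\tilde p(s,\texttt{x}_i(s))}\Bigr)ds,
\]
squares, applies Jensen to pull the time integral outside the square, bounds $\|\hat n_t-\hat n_t^*\|_2^2\leq\|\mu\|_1^2\sup_i(\Delta w_i)^2$, and then uses Fubini to swap $\mathbb E$ and $\int_0^t ds$. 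The good/bad split is then done \emph{pointwise in $s$}: for each fixed $s$ and $z$ one estimates $\mathbb E[\mathbbold 1(\hat p(s,z)>0)(1/\hat p(s,z)-1/\tilde p(s,z))^2]$ by splitting into $\{\hat p(s,z)<\tfrac12\tilde p(s,z)\}$ (bounded by $N^2$ times a Hoeffding probability) and its complement (bounded by $\mathrm{Var}(\hat p)/\tilde p^4$). No uniformity in $s$ is ever needed, and the stated constants drop out directly.

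Your route would work, but the maximal-inequality step is genuinely delicate (the number of jump times scales like $N$, and $\tilde p(s,z)$ varies continuously while $\hat p(s,z)$ is piecewise constant), and it would likely not reproduce the explicit constants in the statement. The paper's Duhamel--Jensen--Fubini maneuver is both simpler and sharper; it is worth internalizing as the standard device for turning a pathwise interacting-particle bound into a pointwise-in-time moment bound.
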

\begin{proof}
    {The proof is in the appendix}
\end{proof}

\begin{remark}
    Compared with \eqref{eq solution of the adjoint equation reaction, death, division}, the Feynman-Kac-like representations \eqref{eq. solution of n_t general case} and \eqref{eq. weights of the final algorithm} contain an additional term, $\int_{0}^{t} e^{\int_{s}^t \lambda^{D}\left(\tilde X(\tau)\right) \dd \tau}
        \frac{\lambda^\text{in}\left(\tilde X(s)\right)}{\|\mu\|_1\,\tilde p(t, \tilde X(s))}
        \dd s$, in the weight.
    From a biological perspective, this term compensates for the inability of $\tilde X(t)$ to model cell influx. 
\end{remark}

\begin{remark}
    The advantages of our FKG-SSA over 
    the agent-based simulation approach lie in three aspects.
    First, it only simulates a fixed number of modified cell systems $\tilde X(t)$, avoiding 
    the need to exhaustively track all the individual cells.
    Second, the simulation of $\tilde X(t)$ can be easily parallelized and has a linear computational complexity with the sample size $N$, significantly improving the
    agent-based method's 
    quadratic complexity with respect to the population size (see \Cref{section current challenges}).
    This implies that even when the FKG-SSA uses a sample size equivalent to the actual cell population, it remains considerably more efficient. 
    Third, while the agent-based method requires repeated simulation of the whole cell population $Y_t$ to achieve an accurate estimate of $n_t$ (according to the law of large numbers), our approach does not have this additional layer of computation, further enhancing efficiency.
\end{remark}

\subsection{Algorithm with resampling}
\label{section algorithm + restarting}

Essentially, our FKG-SSA is an importance sampling algorithm, where the samples are generated from the distribution of $\tilde X(t)$ for computing another (unnormalized) distribution $n_t$.
Thus, the accuracy of our method can be evaluated by the effective sample size (ESS), defined by $\text{ESS}\triangleq {\left( \sum_{i=1}^N w_i(t)\right)^2}/{\left( \sum_{i=1}^N w^2_i(t) \right)}$, which is commonly used for importance sampling algorithms \cite{liu2001monte}.
The ESS provides an interpretation that the importance sampling algorithm is approximately equivalent to the inference based on the ESS perfect samples from the target distribution \cite{doucet2009tutorial}.
Particularly when the ESS is small, only a few samples have large weights, and the final estimate depends almost only on these few samples, leading to inaccurate results. 
In contrast, when the ESS is large, all the samples have similar weights and contribute equally to the final estimate. 

In our algorithm, the ESS can decrease rapidly in time. 
To tackle this problem, we additionally introduce the resampling strategy to our algorithm. 
Resampling is a common approach to address the small ESS problem (also known as weight degeneracy) \cite{doucet2009tutorial,bain2009fundamentals}, which works by recursively replicating large-weight samples and discarding those with small weights in time.  
Tailored to our approach, since the FKG-SSA (\Cref{algorithm death and division and influx}) has a sampling step at the beginning, resampling is equivalent to periodically restarting the algorithm with the output $\hat n_t$ from the end of each time period serving as the initial condition for the next period.
The FKG-SSA with this resampling/restarting strategy is provided in \Cref{algorithm resampling}.
Notably, to ensure that \Cref{assumption 2} holds in each time period, we add the term $\lambda^\text{in}(x)/N$ to the output of the previous step when preparing the initial condition for the next time period (see Line 4 in \Cref{algorithm resampling}).
Since all the weights are equal at the beginning of each time period and the weights should not diverge significantly within a relatively small time interval, this strategy can control the ESS, thereby ensuring improved accuracy. 

\begin{algorithm2e}
\SetAlgoLined
Input the initial condition $\mu(x)$,  sample size $N$, final time $T$, and an increasing series of restarting times $t_1, \dots, t_M$ ($\in[0,T]$). \\
Denote $t_0 = 0$ and $t_{M+1}= T$. \\
\For{$k=0, \dots, M$}{
    $\bar \mu(x) = \mu(x)$ if $k=0$, else $\bar \mu(x) = \hat n_{t_k}(x) + \frac{\lambda^\text{in}(x)}{N}$. \hfill \tcp{Set initial conditions} 
    Run \Cref{algorithm death and division and influx} with initial condition $\bar \mu(x)$, the sample size $N$, and the final time $t_{k+1}-t_{k}$. Denote its solution by $\tilde n_t$ for $t\in(0,t_{k+1}-t_k].$ \\
    $\hat n_t = \tilde n_{t-t_k}$ for $t\in (t_k, t_{k+1}]$.
    \hfill\tcp{Approximated solution in $(t_k, t_{k+1}]$}
}
\caption{FKG-SSA with resampling/restarting}
\label{algorithm resampling}
\end{algorithm2e}

This refined algorithm is also convergent (see \Cref{them convergence of the FKG-SSA with resampling}), demonstrating its reliability.
The proof employs a relatively complicated math induction scheme due to the intricate dependence between the one-step error and the initial condition (e.g., see \Cref{them convergence of the FKG-SSA}).
We devote the entire \Cref{section proof the convergence of the FKG-SSA with resampling} to this proof. 

\begin{theorem}\label{them convergence of the FKG-SSA with resampling}
    Assume \Cref{Assumption 1}---\ref{assumption 2} hold, and let $T$ be the final time in \Cref{algorithm resampling}.
    Then, for any $t\in[0,T]$, there exist positive constants $\bar C_1$ and $\bar C_2$ (for $z$ in the support of $\lambda^\text{in}$) such that the output of \Cref{algorithm resampling} (denoted by $\hat n_t$) satisfies
    \begin{align*}
        \mathbb E\left[
            \|\hat n_t - n_t\|^2_2
        \right] ~\leq ~
         \bar C_1 N^2
        e^{-\bar C_2 N}
        \,+\,
        \frac{\bar C_1}{N}  
    \end{align*}
    where $n_t$ is the solution of \eqref{eq n_t dynamics}, and $N$ is the sample size of \Cref{algorithm death and division and influx}.
\end{theorem}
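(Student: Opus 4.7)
The plan is to proceed by induction on the restart index $k = 0, 1, \ldots, M$, establishing at each step a bound of the exact target form $\mathbb E[\|\hat n_{t_k} - n_{t_k}\|_2^2] \leq \bar C_1 N^2 e^{-\bar C_2 N} + \bar C_1/N$ with constants independent of $k$ (depending only on $T$, $\lambda_\infty$, $\mu$, and $\lambda^{\text{in}}$). The base case $k=0$ is trivial. To cover times $t$ lying strictly inside an interval $(t_k, t_{k+1}]$, one additional application of \Cref{them convergence of the FKG-SSA} after the final inductive step closes the claim.

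For the inductive step, I would write the decomposition $\hat n_{t_{k+1}} - n_{t_{k+1}} = (\hat n_{t_{k+1}} - n^{(k)}_{\Delta t_k}) + (n^{(k)}_{\Delta t_k} - n_{t_{k+1}})$, where $\Delta t_k = t_{k+1}-t_k$ and $n^{(k)}_s$ denotes the exact solution of \eqref{eq n_t dynamics} on $[0,\Delta t_k]$ started from $\bar\mu_k = \hat n_{t_k} + \lambda^{\text{in}}/N$. The linearity of \eqref{eq n_t dynamics} in the initial condition, together with the fact that both $n_{t_{k+1}}$ and $n^{(k)}_{\Delta t_k}$ share the same inhomogeneous forcing $\lambda^{\text{in}}$, gives $n^{(k)}_{\Delta t_k} - n_{t_{k+1}} = e^{(A_{\text{react}} - \lambda^{\text{death}} + A_{\text{div}})\Delta t_k}(\bar\mu_k - n_{t_k})$, whose $L^2$-norm is a deterministic factor times $\|\bar\mu_k - n_{t_k}\|_2 \leq \|\hat n_{t_k} - n_{t_k}\|_2 + \|\lambda^{\text{in}}\|_2/N$, and is therefore controlled by the inductive hypothesis.

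The main obstacle is bounding the sampling term $\mathbb E\|\hat n_{t_{k+1}} - n^{(k)}_{\Delta t_k}\|_2^2$ by conditioning on $\mathcal F_{t_k}$ and invoking \Cref{them convergence of the FKG-SSA} with initial condition $\bar\mu_k$. The catch is that the constants $C_2^z$ there scale like $\bar\mu_k(z)^2/\|\bar\mu_k\|_1^2$, and the reset only guarantees $\bar\mu_k(z) \geq \lambda^{\text{in}}(z)/N$, yielding $C_2^z = O(N^{-2})$ and hence $e^{-C_2^z N} = 1 - O(N^{-1})$, which is too weak to close the induction. I would circumvent this by introducing the good event $G_k = \{\hat n_{t_k}(z) \geq n_{t_k}(z)/2 \text{ for every } z \in \mathrm{supp}(\lambda^{\text{in}})\}$. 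A separate positivity statement, analogous to \Cref{proposition non-zero of the probability}, shows that $\inf_{t \in [0,T]} n_t(z) > 0$ for $z \in \mathrm{supp}(\lambda^{\text{in}})$ because of the influx forcing; on $G_k$ this makes $\bar\mu_k(z)$ deterministically bounded below and $\|\bar\mu_k\|_1$ deterministically bounded above, so that $e^{-C_2^z N}$ is genuinely exponentially small.

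The complementary event is handled by crude deterministic $L^2$-bounds on $\|\hat n_{t_{k+1}}\|_2$ (from the explicit weight dynamics with bounded rates) and on $\|n^{(k)}_{\Delta t_k}\|_2$ (from \Cref{proposition uniqueness of the mean dynamics}), combined with $\mathbb P(G_k^c) = O(N^2 e^{-\bar C_2 N} + 1/N)$, which follows from Chebyshev's inequality applied to the inductive hypothesis together with the uniform lower bound on $n_{t_k}(z)$. The most delicate part is the bookkeeping: each inductive step inflates the constants multiplicatively by factors of order $e^{\lambda_\infty \Delta t_k}$, and all such factors must telescope over the $M+1$ restart intervals so that a single pair $(\bar C_1, \bar C_2)$ works uniformly. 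This is precisely the ``relatively complicated math induction scheme'' flagged in the paper, and it hinges on the observation that the resampling step not only preserves the structure of the one-step bound but also, via the $\lambda^{\text{in}}/N$ regularization, prevents the $G_k$-lower-bound constant from degrading across iterations.
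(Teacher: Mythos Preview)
Your proposal is correct in its overall architecture and shares the same skeleton as the paper's proof: induction over the restart index, the decomposition of $\hat n_t - n_t$ into a ``sampling error'' (output of \Cref{algorithm death and division and influx} minus the exact solution started from the random initial datum $\bar\mu_k$) and a ``propagated error'' (exact flows from two nearby initial data), and a conditional application of \Cref{them convergence of the FKG-SSA} on each subinterval. You also correctly identify the central difficulty, namely that the constants $C_2^z$ in \Cref{them convergence of the FKG-SSA} degenerate if one only uses the crude lower bound $\bar\mu_k(z)\ge\lambda^{\text{in}}(z)/N$.

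Where you diverge from the paper is in the device used to control the random initial condition. The paper does \emph{not} split into a good event $G_k=\{\hat n_{t_k}(z)\ge n_{t_k}(z)/2\}$ and its complement. Instead, it carries along the induction five parallel claims, two of which (Claims~4 and~5) are bounds on $\mathbb E\big[e^{-Cq_z^2}\big]$ and $\mathbb E\big[q_z^{-\ell}\big]$ for $q_z=\frac{\hat n_{t_k}(z)+\lambda^{\text{in}}(z)/N}{\|\hat n_{t_k}+\lambda^{\text{in}}/N\|_1}$ and $\ell\in\{2,3\}$. These are exactly the functionals of $\bar\mu_k$ that appear when one takes the expectation of the (random) constants $C_1,C_2^z,C_3$ in \Cref{them convergence of the FKG-SSA}. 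The one-step propagation of Claims~4 and~5 is done via two auxiliary propositions that combine Hoeffding's inequality on the empirical measure $\hat p$ with deterministic two-sided bounds on $\|\hat n_{t_k}\|_1$ (Claim~3). Your Chebyshev-based good/bad event argument is a legitimate alternative: it is more elementary and avoids tracking the extra moment-type quantities, at the price of requiring the lower bound $\inf_{t\in[0,T]}n_t(z)>0$ for $z\in\mathrm{supp}(\lambda^{\text{in}})$ (which indeed holds, since $z\in\mathrm{supp}(\mu)$ by \Cref{assumption 2} and the Feynman--Kac weight is bounded below). The paper's route, by contrast, never compares $\hat n_{t_k}(z)$ to $n_{t_k}(z)$ pointwise and instead works entirely with the ratios $q_z$, which makes the bookkeeping of constants somewhat cleaner but the induction heavier. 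One small correction: the constants do not ``telescope'' across intervals; they compound multiplicatively, and the finiteness of $M$ is what keeps the final $(\bar C_1,\bar C_2)$ uniform over $[0,T]$.
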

\begin{proof}
    The proof is provided in \Cref{section proof the convergence of the FKG-SSA with resampling}.
\end{proof}

\section{Numerical Examples}\label{section numerical examples}

We present a couple of biologically relevant numerical examples to illustrate our method. 
The algorithms were performed on a server equipped with a 2.70GHz Intel Xeon Platinum 8270 CPU.

\subsection{Transcriptional feedback model of protein expression}

We consider a transcriptional feedback model where protein synthesis inhibits the cell division rate and promotes its own production rate \cite{piho2024feedback}. 
In this model, each cell has a chemical species $S_1$ (protein) and undergoes two reactions: the protein production ($\emptyset \to S_1$) and protein degradation ($S_1 \to \emptyset$).
Their propensities are respectively $\lambda^{\text{prod}}(x) = \alpha+{k_1}/{\left[\left(\frac{K_1}{x}\right)^{2} +1\right]}$ 
and $\lambda^{\text{deg}}(x) = \delta_\text{deg}\, x$.
Moreover, all the cells can divide and die with a protein-dependent division rate $\lambda^\text{div}(x, x', x'') = \frac{x!}{x'!x''!}\frac{1}{2^x}{k_2}/{\left[(\frac{x}{K_2})^{4} +1\right]} $ (for $x=x'+x''$)
and a constant death rate $\lambda^\text{death}(x)=\delta_\text{death}$. 
Finally, we assume that these cells originate from stem cells with internal state $x=0$, which manifests as cell influx $\lambda^\text{in}(x) = \delta_\text{in}\, \mathbbold 1_{0}$.
Here, $\alpha$, $k_1$, $K_1$, $k_2$, $K_2$, $\delta_\text{deg}$, $\delta_\text{death}$, and $\delta_\text{in}$ are system parameter whose values are chosen as $\alpha = 588$, $k_1 = 5600$, $K_1 = 140$, $k_2 = 40$, $K_2 = 16.46$, $\delta_\text{deg}=25$, 
$\delta_\text{death}=1$, and $\delta_\text{in}=10$.
A graphic illustration of this system is given in \Cref{Fig example 1}.A.
In this example, we let the system initially contain ten cells with internal states $x=0$, and we intend to compare our method with the classical agent-based method.

The numerical results are presented in \Cref{Fig example 1}.
Here, we use the solution of the FSP with a truncated state space $\{0, 1, \dots, 50\}$ as the benchmark (see the blue curves in \Cref{Fig example 1}.C).
Since this truncated space is relatively small, the FSP is efficient with a CPU time of less than 1 minute. 
In contrast, the agent-based algorithm is significantly more inefficient, requiring a CPU time of 27 days to simulate 50 samples for a reasonably good estimation (\Cref{Fig example 1}.C). 
As shown in \Cref{Fig example 1}.B, our FKG-SSAs are far more efficient than the agent-based simulation model; to achieve the same level of accuracy, our methods use only about 1/1000 of the time.
Additionally, \Cref{Fig example 1}.B shows that the squared errors of our approaches decay at the rate of $\frac{1}{\text{CPU Time}}$ (or equivalently ${1}/{N}$), which aligns with our theoretical results in \Cref{them convergence of the FKG-SSA} and \Cref{them convergence of the FKG-SSA with resampling}.
Furthermore, for the same time cost, the FKG-SSA utilizing resampling is more accurate than the FKG-SSA without resampling (\Cref{Fig example 1}.B), due to the larger ESS achieved by the former (\Cref{Fig example 1}.C).
Overall, the results demonstrate the superior performance of our algorithm over the agent-based method and support all our theoretical results.

\begin{figure}[htbp]
	\centering
	\includegraphics[width=1\linewidth]{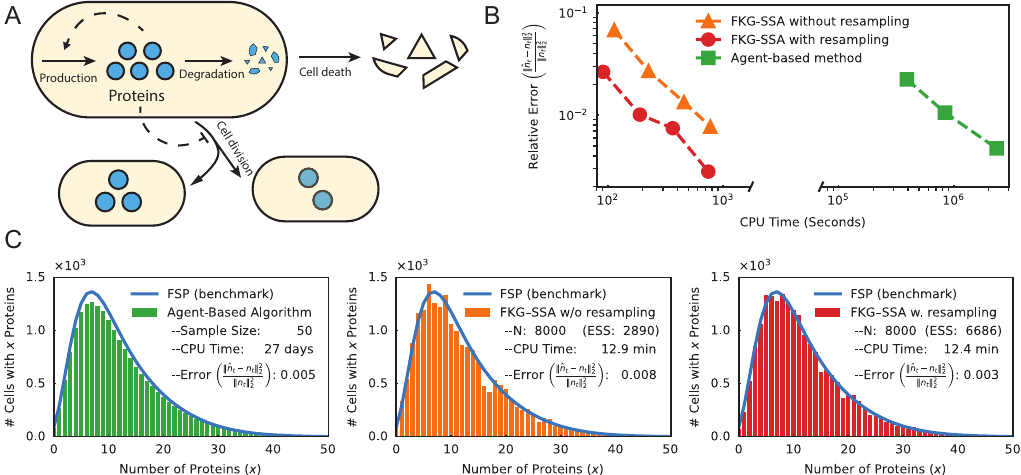}
	\caption{\textbf{Protein expression model.} (A) Graphic illustration of the system. (B) Comparison of different approaches to simulating the mean dynamics  \eqref{eq n_t dynamics}. 
    The plot shows their performance in approximating the solution at $T=0.25$.
    The solution obtained from the finite-state projection (FSP) with the truncated state space $\{x \in\mathbb Z_{\geq 0}| 0\leq x\leq 50\}$ is used as the benchmark. 
    The FKG-SSA with resampling performs resampling every 0.05 units of time.
    (C) Approximated solutions of \eqref{eq n_t dynamics} at time $T=0.25$. }
	\label{Fig example 1}
\end{figure}

\subsection{Cancer-immune co-evolution model of mutation accumulation}
Next, we consider a cancer-immune co-evolution model \cite{chen2024frequency, lakatos2020evolutionary} where cancer cells accumulate antigenic and immune-escape mutations that affect cell death rates.
A graphic illustration of this model is given in \Cref{Fig example 2}.A.
In this example, each cell has a 3-dimensional state vector $x = (x_1, x_2, x_3)^{\top}$, representing the accumulated antigenic mutations, accumulated antigenicity, and the presence of immune-escape mutations. 
The accumulation of antigenic mutations results in increased antigenicity, which raises the chance of the cell being recognized by the immune system and subsequently leads to a higher cell death rate.
In contrast, the presence of immune-escape mutations protects the cell from immune system attacks, allowing for a decreased death rate.
Consequently, the death rate is modeled by $\lambda^\text{death}(x) = k^\text{death}_1 + k^\text{death}_2 x_2 (1-x_3)$, where $k^\text{death}_1$ represents the basal death rate, and $k^\text{death}_2 x_2 x_3$ accounts for additional death rate caused by the attack from the immune system.
In addition, all the cells have a constant division rate $\bar \lambda^\text{div}(x) = k^\text{div}$, which is independent of their internal states. 
At each division, the daughter cells can acquire additional antigenic mutations, with the number of mutations following a Poisson distribution $\text{Pois}(k_{A})$.
The antigenic mutations exhibit significant variability in their associated antigenicity \cite{luksza2017neoantigen}.
To capture this variability, the antigenicity of each newly accumulated antigenic mutation is modeled using a geometric distribution with mean $1/p_{A}$. 
Thus, given the number of newly accumulated antigenic mutations (denoted by $\Delta x_1$) in a daughter cell, its newly obtained antigenicity follows the negative binomial distribution $\text{NB}(\Delta x_1, p_A)$.
Additionally, the daughter cells acquire the immune-escape mutation (if their mother cell does not have it) with a probability $p_\text{IE}$.
With these notations, the probability of obtaining two daughter cells of internal states $x'$ and $x''$ (in a specific order) given the mother cell's internal state $x$ is expressed as 
\begin{align*}
    p^\text{div}(x',x''|x)
    =& \mathbb P_\text{Pois}(x'_1 -x_1)
    \, \mathbb P_\text{NB}(x'_2 -x_2 | x'_1 -x_1)\, p_\text{IE}^{x'_3\, \mathbbold 1_0(x_3)} (1-p_\text{IE})^{(1-x'_3) \mathbbold 1_0(x_3)} \\
    & \times 
    \mathbb P_\text{Pois}(x''_1 -x_1)
    \, \mathbb P_\text{NB}(x''_2 -x_2 | x''_1 -x_1)\, p_\text{IE}^{x''_3\, \mathbbold 1_0(x_3)} (1-p_\text{IE})^{(1-x''_3) \mathbbold 1_0(x_3)}
\end{align*}
where $\mathbb P_\text{Pois}$ is the Poisson distribution with mean $k_A$, and $\mathbb P_\text{NB}(\cdot | \Delta x_1)$ is the negative binomial distribution with the number of successes $\Delta x_1$ and the success probability $p_A$.
Therefore, the division/mutation rate $\lambda^\text{div}(x,x',x'')$ has the form 
$\lambda^\text{div}(x,x',x'') 
    = \bar \lambda^\text{div}(x) \,p^\text{div}(x',x''|x)
    = k^\text{div}\,p^\text{div}(x',x''|x).$
In this model, we let the system initially contain {10} cells with internal states $x=(0,0,0)^{\top}$ and choose the parameters as {$k^\text{div} = 0.5$, $k^\text{death}_1 = 0.1$, $k^\text{death}_2 = 
0.072$, $k_A = 0.5$, $p_A = 1/3$, and $p_\text{IE} = 10^{-4}$}.
Here, we aim to compare our method with the classical agent-based method and the FSP approach.

\begin{figure}[htbp]
	\centering
	\includegraphics[width=1\linewidth]{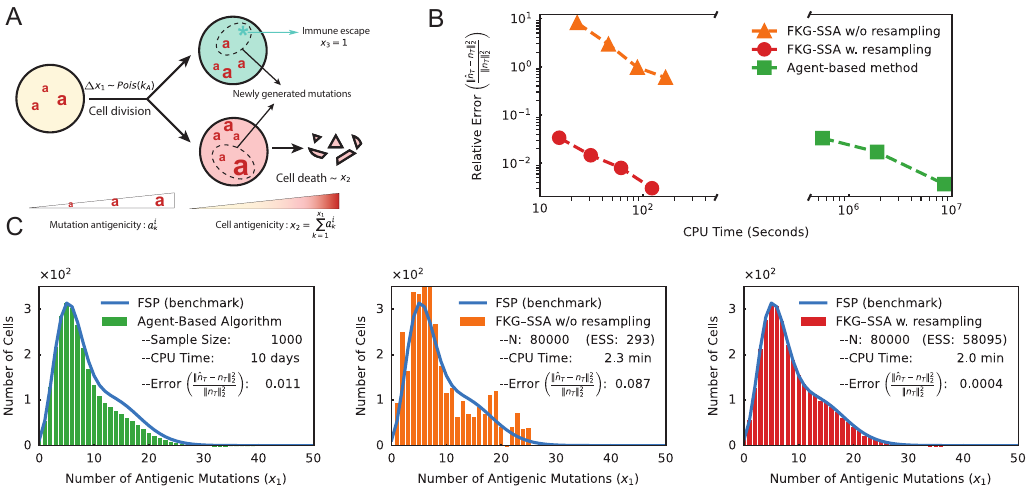}
	\caption{\textbf{Cancer-immune co-evolution model.}  (A) Graphic illustration of the co-evolution system. 
    (B) Comparison of different approaches to simulating the mean dynamics  \eqref{eq n_t dynamics}. 
    The plot shows their performance in approximating the solution at $T=30$.
    The solution obtained from the finite-state projection (FSP) with the truncated state space $\{0,1, \dots, 50\} \times \{0,1, \dots, 200\} \times \{0, 1\}$ is used as the benchmark. The CPU time for the FSP was around 11 hours. 
    The FKG-SSA with resampling performs resampling every 3 units of time.
    (C) Approximated marginal solutions of \eqref{eq n_t dynamics} at time $T=30$.}
	\label{Fig example 2}
\end{figure}

The numerical results are presented in \Cref{Fig example 2}.
Again, we use the solution of the FSP (with a truncated state space $\{0,1, \dots, 50\} \times \{0,1, \dots, 200\} \times \{0, 1\}$) as the benchmark. 
In this example, the size of the truncated state space exceeds $2\times 10^4$, resulting in the FSP taking approximately 11 hours to compute.
The agent-based simulation algorithm is even less efficient, requiring a CPU time of 10 days to simulate 1000 samples for an accurate result (see \Cref{Fig example 2}.B and \Cref{Fig example 2}.C).
In contrast, our approaches are significantly more efficient than these two conventional approaches (see \Cref{Fig example 2}.B and \Cref{Fig example 2}.C).
Specifically, the FKG-SSA with resampling only needs 1/10,000 of the time required by the agent-based algorithm to achieve the same level of accuracy (\Cref{Fig example 2}.B). 
Without resampling, the FKG-SSA can lead to an extremely small effective sample size (ESS); see the middle panel in \Cref{Fig example 2}.C where the ESS is only 293 despite simulating 80,000 trajectories.
This reduction in ESS renders the FKG-SSA without resampling less accurate compared to the one with resampling (\Cref{Fig example 2}.B and \Cref{Fig example 2}.C), which underscores the necessity of having the resampling strategy in the FKG-SSA framework.  
Moreover, \Cref{Fig example 2}.B illustrates that the squared errors of our approaches decrease at the rate of $\frac{1}{\text{CPU Time}}$ (or equivalently ${1}/{N}$), which agree with our theoretical results \Cref{them convergence of the FKG-SSA} and \Cref{them convergence of the FKG-SSA with resampling}.
In summary, this example demonstrates the superior performance of our algorithms and confirms the validity of our theoretical results.

\section{Conclusion}\label{section conclusion}

Efficient simulation of growing cell populations is crucial for investigating living biological systems. 
Despite advancements over the past few decades, state-of-the-art approaches still face inefficiencies due to the increasing number of cells or the high dimensionality of the system. 
To address these challenges, this paper proposed a novel fixed-budget simulation approach based on a Keynman-Kac-like representation of the population dynamics (see \Cref{them solution of n_t general case}). 
This approach, named the Feynman-Kac-inspired Gillespie's  Stochastic Simulation Algorithm (FKG-SSA), always employs a fixed number of parallelly simulated cells for Monte Carlo computation of the system, leading to a constant computational complexity regardless of the actual cell population size. 
Furthermore, this method has guaranteed convergence properties (see \Cref{them convergence of the FKG-SSA} and \Cref{them convergence of the FKG-SSA with resampling}), demonstrating its accuracy and reliability in practical applications. 
We have also illustrated its superior performance with a couple of biologically relevant numerical examples.
Overall, the proposed FKG-SSA effectively addresses the challenges of simulating growing cell populations, providing a solid foundation for better analysis of biological systems. 

There are numerous topics deserving further exploration in future work.
First, the whole framework can be further extended to include more physiological mechanisms, such as cellular spatial motions, quorum sensing, and cell-to-cell interactions, all of which are essential for practical applications. 
Second, model reduction approaches, such as the diffusion model \cite{gillespie2000chemical} and the tau-leaping method \cite{gillespie2001approximate}, can also be incorporated into the framework to further improve the efficiency. 
Third, the efficiency of our approach underscores its good potential for uncovering key biological mechanisms from experimental data. 
Nevertheless, effectively integrating biological data with our method remains an open problem requiring further investigation.
Furthermore, the current FKG-SSA only computes the mean dynamics of the cell population $Y_t$.
Developing similar approaches for analyzing the variance dynamics of $Y_t$ would provide deeper insights into the biological system.
In summary, this paper opens the door to a range of compelling problems.

\renewcommand{\thesubsection}{\Alph{section}.\arabic{subsection}}
\renewcommand{\theproposition}{A\arabic{proposition}}

\appendix

\section{Proofs of the results}

Here, we provide the proofs of the results in this paper.
We dedicate the entire \Cref{section proof the convergence of the FKG-SSA with resampling} to the relatively complicated proof of \Cref{them convergence of the FKG-SSA with resampling}.

\begin{proof}[\textbf{Proof of \Cref{proposition non-explosivity}}]
    Under \Cref{Assumption 1}, $Y_t$ is always a non-negative integer-valued function up to the explosion time.
    Let $\tau_c = \min\{t~|~ \|Y_t\|_{1} \geq c\}$.
    By Dynkin's formula \cite[p. 133]{dynkin1965markov},  we have 
    \begin{align*}
        \mathbb E\left[\|Y_{t\wedge \tau_c }\|_1\right]
        &= \|Y_0\|_1 + \sum_{x\in\mathbb S}\mathbb E\left[\int_{0}^{t\wedge \tau_c}
        Y_s(x)\left(
        \sum_{x',x''\in\mathbb S} \lambda^\text{div}(x,x',x'')
        -\lambda^\text{death}(x) 
        \right)
        + \lambda^\text{in}(x)
        ~\dd s
        \right]
    \end{align*}
    By the boundedness of $\lambda^\text{react}_j(x)$, $\lambda^\text{death}(x)$, 
    $\lambda^{in}(x)$,
    $\bar \lambda^\text{div}(x)$, and $\|\lambda^{in}(x)\|_1$
    (\Cref{Assumption 1}),  there exists a positive constant $C$ such that 
    \begin{align*}
        \mathbb E\left[\|Y_{t\wedge \tau_c }\|_1\right]
        \leq \|Y_0\|_1 +  
         C \mathbb E\left[\int_{0}^{t\wedge \tau_c}
        \|Y_s(x)\|_1 + 1 \dd s \right]
        \leq \|Y_0\|_1 +  
         C \int_{0}^{t}
        \mathbb E\left[\|Y_{s\wedge \tau_{t_c}}(x)\|_1 \right]+ 1 \dd s .
    \end{align*}
    By Gronwall’s inequality, we have   
    \begin{align}\label{eq. boundedness of the L1 norm of Y}
        \mathbb E\left[\|Y_{t\wedge \tau_c }\|_1\right] \leq 
        \left(\|Y_0\|_1 + Ct \right)e^{Ct}
        && \forall t\geq 0.
    \end{align}
    Furthermore, by Markov's inequality, we have that for $t\geq 0$,
    \begin{align*}
        \lim_{c\to\infty} \mathbb E\left[\mathbbold 1\left(\tau_c<t\right)\right]
        ~=~
        \lim_{c\to\infty} \mathbb P(\tau_c<t)
        ~=~ \lim_{c\to\infty} \mathbb P\left(\|Y_{t\wedge \tau_c}\|_1 \geq c\right)
        ~\leq~ \lim_{c\to\infty} \frac{\mathbb E\left[\|Y_{t\wedge \tau_c }\|_1\right]}{c} ~=~ 0
    \end{align*}
    where the last equality follows from \eqref{eq. boundedness of the L1 norm of Y}.
    The dominated convergence theorem suggests
    \begin{align}\label{eq. tau infity smaller than any t}
        \mathbb P\left(\lim_{c\to\infty} \tau_c < t \right)
        ~=~ \mathbb E\left[
        \lim_{c\to\infty}\mathbbold 1\left(
            \tau_c<t
        \right)
        \right]
        ~=~ \lim_{c\to\infty} \mathbb E\left[
        \mathbbold 1\left(
            \tau_c<t
        \right)
        \right]
        ~=~0
        && \forall t\geq 0.
    \end{align}
    We define $\tau_{\infty} = \lim_{c\to\infty} \tau_c$.
    Notice that $\{ \tau_\infty<t \text{ for some $t>0$}\} = \limsup_{N\to\infty} \{\tau_\infty< N \}$ with $N$ being integers. 
    Also, we have that $\sum_{N=1}^\infty \mathbb P\left(\tau_{\infty} < N \right)
    = 0 $ due to \eqref{eq. tau infity smaller than any t}.
    Then, by the Borel-Cantelli lemma, we have $\mathbb P \left(\tau_\infty<t \text{ for some $t>0$} \right) = 0$, which implies $\tau_\infty = + \infty$ almost surely (i.e., non-explosivity).
    Finally, by applying the Fatou's lemma to \Cref{eq. boundedness of the L1 norm of Y}, we have $\mathbb E\left[\|Y_t\|_1\right] \leq \lim_{c\to c}\mathbb E\left[\|Y_{t\wedge \tau_c }\|_1\right] \leq 
        \left(\|Y_0\|_1 + Ct \right)e^{Ct}< +\infty$ for any $t\geq 0$.
\end{proof}

\begin{proof}[\textbf{Proof of \Cref{proposition uniqueness of the mean dynamics}}]
    By \Cref{proposition non-explosivity}, $\mathbb E\left[Y_t\right]$ exists at any time $t$.
    By straightforward calculation, we can show that it solves \eqref{eq n_t dynamics}. 
    Then, we only need to show the uniqueness. 
    Let $\rho_1(t)$ and $\rho_2(t)$ be any two
    solutions of \eqref{eq n_t dynamics} that also satisfy  $\rho_1(t),\rho_2(t) \in  L^1\left( \mathbb Z^{d}_{\geq 0}\right)$ for all $t\geq 0$.
    By denoting $\rho_3(t) = \rho_1(t) - \rho_2(t)$, we have that for all $t\geq 0$,  $\rho_3(t) \in  L^1\left( \mathbb Z^{d}_{\geq 0}\right)$ and 
    $\frac{\dd }{\dd t}\rho_3(t)
        = \left(A_\text{react} - \lambda^\text{death} + A_\text{div}\right) \rho_3(t)$.
    Based on \Cref{Assumption 1}, the operator $\left(A_\text{react} - \lambda^\text{death} + A_\text{div}\right)$ from $ L^1\left( \mathbb Z^{d}_{\geq 0}\right)$ to itself is bounded.
    Therefore, there exists a constant $C$ such that  $\frac{\dd}{\dd t}\left\|
            \rho_3(t)
        \right\|_1
        \leq C \left\|
            \rho_3(t)
        \right\|_1
        $ for all $ t\geq 0$.
    Notice that $\left\|
            \rho_3(0)
        \right\|_1=0$.
    Then, by Gronwall's inequality, we can conclude that $\left\|
            \rho_3(t)
        \right\|_1 = 0$ for all $t\geq 0$,
        implying $\rho_1(t) = \rho_2(t)$ for all $t\geq 0$.
    This proves the uniqueness of the solution.
\end{proof}

\begin{proof}[\textbf{Proof of \Cref{proposition adjoint property reaction + death}}]
    The only thing we need to prove is that the proposed $u_t$ satisfies $\frac{\dd}{ \dd t} u_t = \left(A_\text{react} - \lambda^\text{death} \right) u_t$.
    By the invariance of
    $\left<\phi^{T,g}_t, u_t\right>$ over time, we have  
    \begin{align*}
       0 &=  \left< \phi^{T, g}_{t_1} \,,\,  u_{t_1}\right> - \left< \phi^{T, g}_{t_2} \,,\,  u_{t_2}\right> \\
       & = \left< \phi^{T, g}_{t_1} - \phi^{T,g}_{t_2} \,,\,  u_{t_1}\right>
        - \left< \phi^{T, g}_{t_2}\,,\,  u_{t_2} - u_{t_1}\right> \\
        & = \left< \text{e}^{-\left(A^*_\text{react} - \lambda^\text{death} \right)(t_{1}-t_{2})} \phi^{T,g}_{t_2}\,,\,  u_{t_1}\right>
        - \left< \phi^{T, g}_{t_2} \,,\,  u_{t_2} - u_{t_1}\right>
        && \text{($\phi^{T,g}_{t}$ solves \eqref{eq. adjoint equation reaction + death})} \\
        & = \left< \phi^{T,g}_{t_2}\,,\,  \text{e}^{-\left(A_\text{react} - \lambda^\text{death} \right)(t_{1}-t_{2})}u_{t_1} - u_{t_2} - u_{t_1} \right>
        && \text{(adjoint operators)}
    \end{align*}
    for any $T>0$, $0\leq t_1< t_2 \leq T$, and bounded function $g$.
    By the arbitrariness of $T$ and $g$, we have that $\text{e}^{-\left(A_\text{react} - \lambda^\text{death} \right)(t_{1}-t_{2})}u_{t_1} - u_{t_2} - u_{t_1} = 0$ for any $t_2>t_1\geq 0$.
    Thus, we have
    \begin{align*}
        \frac{\dd}{\dd t}
        u_{t_1} 
        = \lim_{t_2\to t_1^+} \frac{u_{t_2} - u_{t_1}}{t_2 -t_1}
        =  \lim_{t_2\to t_1^+} \frac{\text{e}^{-\left(A_\text{react} - \lambda^\text{death} \right)(t_{1}-t_{2})}u_{t_1}}{t_2 -t_1}
        = \left(A_\text{react} - \lambda^\text{death} \right)u_{t_1},
        && \forall t_1\geq 0,
    \end{align*}
    which proves the result. 
\end{proof}

\begin{proof}[\textbf{Proof of \Cref{them solution of n_t reaction + death}}]
    The uniqueness of the solution is guaranteed by \Cref{proposition uniqueness of the mean dynamics}.
    Moreover, the process $n_t$ defined by \eqref{eq. solution of n_t reaction + death} is time differentiable and satisfies $n_0 = \mu$.
    Then, we only need to check whether the value of $<\phi^{T,g}_t, n_t>$ is unchanged over time for any $T>0$ and bounded function $g$.
    Based on \eqref{eq. solution of n_t reaction + death}, we have
    \begin{align*}
        &<\phi^{T,g}_t, n_t>\\
        & =\|\mu\|_1 
        \mathbb E\left[
        \sum_{x\in\mathbb Z_{\geq 0}^d}  
        \phi^{T,g}_t (x)\,
        \mathbbold 1_{x} \left( X(t) \right)\,
        \exp\left(\int_{0}^{t}-\lambda^{\text{death}}(X(s))ds\right)
        \right]
        \qquad \text{(Fubini's theorem)}\\
        &=\|\mu\|_1 
        \mathbb E\left[ 
        \phi^{T,g}_t (X(t))\,
        \exp\left(\int_{0}^{t}-\lambda^{\text{death}}(X(s))ds\right)
        \right] \\
        &=\|\mu\|_1 
        \mathbb E\left[ 
        \mathbb E\left[ g(X(T))e^{\int_{t}^{T}-\lambda^{\text{death}}(X(s))ds} \bigg| \mathcal F_t\right]
        \exp\left(\int_{0}^{t}-\lambda^{\text{death}}(X(s))ds\right)
        \right]
        \qquad \text{(by \eqref{eq solution of the adjoint equation reaction + death})}\\
        & = \|\mu\|_1 
        \mathbb E\left[ g(X(T))
        \mathbb E\left[
        \exp\left(\int_{0}^{T}-\lambda^{\text{death}}(X(s))ds\right)
        \bigg| \mathcal F_t\right]
        \right] \\
        & = \|\mu\|_1 
        \mathbb E\left[ g(X(T))
        \exp\left(\int_{0}^{T}-\lambda^{\text{death}}(X(s))ds\right)
        \right]
        \qquad\qquad
        \text{(Law of total expectation)}
    \end{align*}
    for any $T>0$, $t\in[0,T]$, and bounded function $g$.
    This suggests that the value of the inner product $<\phi^{T,g}_t, n_t>$ does not change over time, which proves the result. 
\end{proof}

\begin{proof}[\textbf{Proof of \Cref{proposition non-zero of the probability}}]
    For any $z$ in support of $\mu(x)$ and any $t>0$, we have 
    \begin{align*}
        \mathbb P\left(\tilde X(t) = z\right) 
        ~\geq~
        \mathbb P\left(\tilde X(s) = z, \forall s \in [0,t] \right)
        ~=~ e^{-t\left(\sum_{j=1}^n\lambda^\text{react}(z) + 2 \bar \lambda^\text{div}(z)\right)} P\left(\tilde X(0) = z\right).
    \end{align*}
    This, together with the boundedness of the rate functions (see \Cref{Assumption 1}), proves the result for any $z$ in support of $\mu(x)$.
    Since the support of $\lambda^\text{in}(x)$ is a subset of the support of $\mu(x)$, this also proves the result for the states within the support of $\lambda^\text{in}(x)$.
\end{proof}

\begin{proof}[\textbf{Proof of \Cref{them convergence of the FKG-SSA}}]
    Following the techniques in \cite{antonelli2002rate}, we construct new weights (denoted as $\tilde w_i(t)$) for the samples in \Cref{algorithm death and division and influx} by solving the differential equation
    \begin{align*}
       \frac{\dd }{\dd t} \tilde w_i(t)
       = \left(\bar \lambda^\text{div}\left(\texttt{x}_i(t)\right) - \lambda^\text{death}\left(\texttt{x}_i(t)\right)\right) \tilde w_i(t)
       + \frac{\lambda^\text{in}(\texttt{x}_i(t))}{\|\mu\|_1\,\tilde p(t, \texttt{x}_i(t))}
       && \text{and } \tilde w(0) = 1.
    \end{align*}
    Since the rate functions are bounded \Cref{Assumption 1}, each of these ODEs has a unique solution. 
    Now, we establish a new estimate of the solution $n_t$ by $\tilde n_t(x) = \frac{\|\mu\|_1}{N}\sum_{i=1}^{N} \mathbbold 1_{x}(\texttt{x}_{i}(t))\tilde w_{i}(t)$.
    In the following, we analyze the convergence of \Cref{algorithm death and division and influx} based on the inequality
    \begin{align}\label{eq. for comparing the convergence of FKG-SSA}
        \mathbb E\left[
            \|\hat n_t - n_t\|^2_2
        \right]
        ~\leq ~
        2\,\mathbb E\left[
            \|\hat n_t - \tilde n_t\|^2_2
        \right]
        + 
        2\,\mathbb E\left[
            \| \tilde n_t - n_t\|^2_2
        \right].
    \end{align}

    First, we focus on $\mathbb E\left[
            \| \tilde n_t - n_t\|^2_2
        \right]$.
    Notice that we use the exact distribution $\tilde p(t, x)$ when constructing $\tilde w_i(t)$.
    Therefore, $\left(\texttt{x}_{1}(t), \tilde w_{1}(t)\right), \dots, \left(\texttt{x}_{N}(t), \tilde w_{N}(t)\right)$ are independent and identically distributed according to the distribution of $\left(W_t, \tilde X(t)\right)$. (Recall that $\tilde X(t)$ and $W_t$ are defined in \eqref{eq tilde X} and \eqref{eq. weights of the final algorithm}, respectively).
    Based on \Cref{assumption 2} and \Cref{proposition non-zero of the probability}, we have the relation $\frac{\lambda^\text{in}(x)}{\|\mu\|_1\,\tilde p(t, x)} \leq \frac{\lambda_\infty}{\mu(x)} e^{(r+2)\lambda_\infty t}$ for all $x$ in the support of $\lambda^\text{in}$;
    here $\lambda_\infty = \max\left\{\|\lambda^\text{react}_1\|_{\infty}, \dots, \|\lambda^\text{react}_r\|_{\infty}, \|\lambda^\text{death}\|_\infty, \|\bar\lambda^\text{div}\|_\infty, \|\lambda^\text{in}\|_\infty \right\}$.
    Then, applying this to \eqref{eq. weights of the final algorithm}, we further have $0<W_t< \left(1 + \sum_{\{x: \, \lambda^\text{in}(x)\neq 0\}} \frac{\lambda_\infty}{\mu(x)} \right)e^{(r+2)\lambda_\infty t}$ almost surely.
    This result suggests that the random variable $\mathbbold 1_x\left(\tilde X(t)\right) W_t $ is almost surely bounded. 
    By combining this boundedness with the independence and identical distributedness of $\left(\texttt{x}_{1}(t), \tilde w_{1}(t)\right), \dots, \left(\texttt{x}_{N}(t), \tilde w_{N}(t)\right)$, we have $\mathbb E\left[ \left(\tilde n_t(x) - n_t(x)\right)^2 \right]
      = \|\mu\|^2_1 \frac{\text{Var}\left(\mathbbold 1_x\left(\tilde X(t)\right) W_t  \right) }{N}$
    and furthermore
    \begin{align*}
      \frac{\mathbb E\left[ \left(\tilde n_t(x) - n_t(x)\right)^2 \right]}{\|\mu\|^2_1}
      \leq \frac{\mathbb E\left( \mathbbold 1_x\left(\tilde X(t)\right) W_t^2  \right) }{N} 
      \leq \frac{\tilde p(t,x)\,\left(1 + \sum_{\{x: \, \lambda^\text{in}(x)\neq 0\}} \frac{\lambda_\infty}{\mu(x)} \right)^2e^{2(r+2)\lambda_\infty t}}{N}
    \end{align*}
    for any $x\in\mathbb Z^{d}_{\geq 0}$ and $t\geq 0$, where the first inequality follows from the fact that the second moment is no less than the variance.
    Then, by Fubini's theorem, we further conclude 
    \begin{align}\label{eq. convergence proof of FKG-SSA the second part}
         \mathbb E\left[
            \| \tilde n_t - n_t\|^2_2
        \right]
        = \sum_{x\in\mathbb Z^{d}_{\geq 0}} \mathbb E\left[ 
            \left(\tilde n_t(x)- n_t(x)\right)^2
        \right] 
        \leq \frac{\bar C_1}{N}
        && \forall t\geq 0
    \end{align}
    with $\bar C_1 = \|\mu\|^2_1 \left(1 + \sum_{\{x: \, \lambda^\text{in}(x)\neq 0\}} \frac{\lambda_\infty}{\mu(x)} \right)^2e^{2(r+2)\lambda_\infty t}$.

    Next, we analyze $\mathbb E\left[
            \|\hat n_t - \tilde n_t\|^2_2
        \right]$.
    First, by definition, we can obtain 
    \begin{align}\label{eq. bounds of the absolution value of hat n_t and tilde n_t}
        \left\|\hat n_t - \tilde n_t\right\|_2^2
        &= \sum_{x\in\mathbb Z^{d}_{\geq 0}}\left(\frac{\|\mu\|_1}{N} \sum_{i=1}^{N} \mathbbold 1_{x}(\texttt{x}_{i}(t))\big(w_{i}(t)-\tilde w_{i}(t)\big)\right)^2  \notag\\
        &\leq |\mu\|_1^2 \sum_{x\in\mathbb Z^{d}_{\geq 0}} \left( \sum_{i=1}^{N} \frac{ \mathbbold 1_{x}(\texttt{x}_{i}(t))}{N} \right)^2
        \left(\sup_{i}\left|w_{i}(t)-\tilde w_{i}(t)\right|\right)^2 \notag \\
        &\leq \|\mu\|_1^2
        \left(\sup_{i}\left|w_{i}(t)-\tilde w_{i}(t)\right|\right)^2
    \end{align}
    where the last line follows from H\"older's inequality $\|\hat p(t,\cdot) \|_2^2 \leq \|\hat p(t,\cdot) \|_1 \|\hat p(t,\cdot) \|_\infty = 1$.
    This then draws our attention to the quantity $\sup_{i}\left|w_{i}(t)-\tilde w_{i}(t)\right|$.
    Let's denote $\Delta w_i(t) = w_{i}(t)-\tilde w_{i}(t)$.
    By solving the differential equations associated with these weights, we can obtain 
    \begin{align}\label{eq. Delta w}
        \left(\Delta w_i(t)\right)^2 &= \left[\int_{0}^{t} e^{\int_{s}^t \lambda^{D}\left(\tilde X(\tau)\right) \dd \tau}
        \frac{\lambda^\text{in}\left(\texttt{x}_{i}(t) \right)}{\|\mu\|_1}
        \left(\frac{1}{\hat p(t, \texttt{x}_{i}(s))} - 
        \frac{1}{\tilde p(t, \texttt{x}_{i}(s))}\right)
        \dd s.\right]^2 \\
        & \leq 
        \frac{\lambda^2_{\infty} e^{2\lambda_\infty t}}{\|\mu\|_1^2}
        \bar C_2 t
        \int_{0}^{t} 
        \sum_{z:\,\lambda^\text{in}(z)\neq 0 \,\&\, \hat p(t,z)>0 }
        \left(\frac{1}{\hat p(t, z)} - 
        \frac{1}{\tilde p(t, z)}\right)^2
        \dd s \notag
    \end{align}
    where the second line follows from Jensen's inequality, and $\bar C_2$ is the 
    cardinality of the support of $\lambda^\text{in}$.
    By denoting two intervals $I^1_{t,z} = \left(0, \frac{1}{2}\tilde p(t,z)\right)$ and $I^2_{t,z} = \left[\frac{1}{2}\tilde p(t,z), 1\right]$, we also have 
    \begin{align}
        &\mathbb E\left[ 
        \mathbbold 1\big( \hat p(t, z) > 0 \big)
        \left(\frac{1}{\hat p(t, z)} - 
        \frac{1}{\tilde p(t, z)}\right)^2
        \right] \notag\\
        &=\mathbb E\left[ 
        \mathbbold 1\left( \hat p(t, z) \in I^{1}_{t,z} \right)
        \left(\frac{1}{\hat p(t, z)} - 
        \frac{1}{\tilde p(t, z)}\right)^2
        \right]  
        +
        \mathbb E\left[ 
        \mathbbold 1\left( \hat p(t, z) \in I^{2}_{t,z} \right)
        \left(\frac{1}{\hat p(t, z)} - 
        \frac{1}{\tilde p(t, z)}\right)^2
        \right] \notag\\
        &\leq N^2\, \mathbb P \left( \hat  p(t,z)- \tilde p(t,z)< -\frac{1}{2}\tilde p(t,z)  \right)
        + \mathbb E\left[ 
        \frac{\left(\tilde p(t, z) - \hat p(t, z)\right)^2}{4\left(\tilde p(t, z)\right)^4}
        \right] \notag\\
        &\leq N^2\, e^{-\frac{1}{2}\left(\tilde p(t,z)\right)^2 N}
        + \frac{1 - \tilde p(t, z)}{4N\left(\tilde p(t, z)\right)^3}
        \label{eq. second moments of 1/p}
    \end{align}
    where the third line follows from that $1 /\hat p(t, z)<N$ when $\hat p(t, z)> 0$, and the last line follows from Hoeffding's inequality \cite{hoeffding1963probability}. 
    Then, combining this with \eqref{eq. bounds of the absolution value of hat n_t and tilde n_t} and \eqref{eq. Delta w}, 
    we have 
    \begin{align*}
        \mathbb E\left[\left|\hat n_t - \tilde n_t\right|_2^2\right]
        &\leq \bar C_2 \lambda^2_{\infty} t e^{2 \lambda_\infty t} \, \mathbb E\left[
        \int_{0}^{t} 
        \sum_{z:\,\lambda^\text{in}(z)\neq 0 \,\&\, \hat p(s,z)>0 }
        \left(\frac{1}{\hat p(s, z)} - 
        \frac{1}{\tilde p(s, z)}\right)^2
        \dd s \right] \\ 
        &\leq  \bar C_2 \lambda^2_{\infty} t e^{2 \lambda_\infty t} 
        \sum_{z:\,\lambda^\text{in}(z)\neq 0}
        \int_{0}^{t}
        \mathbb E\left[
        \mathbbold 1\big( \hat p(s, z) > 0 \big)
        \left(\frac{1}{\hat p(s, z)} - 
        \frac{1}{\tilde p(s, z)}\right)^2
        \right]\dd s \\
        &\leq \bar C_2 \lambda^2_{\infty} t e^{2 \lambda_\infty t} 
        \sum_{z:\,\lambda^\text{in}(z)\neq 0}
        \int_{0}^{t}
        N^2\, e^{-\frac{1}{2}\left(\tilde p(s,x)\right)^2 N}
        + \frac{1 - \tilde p(s, z)}{4 N\left(\tilde p(s, z)\right)^3}  \dd s\\
        &\leq  \bar C_2 \lambda^2_{\infty} t^2 e^{2 \lambda_\infty t} 
        \sum_{z:\,\lambda^\text{in}(z)\neq 0}
        \left(
        N^2\, e^{- N\frac{\mu^2(z)}{2 \|\mu\|^2_1} e^{-2(r+2)\lambda_\infty t}  }
        + \frac{1 }{4N} \left(\frac{\|\mu\|_1}{\mu(z)}\right)^3 e^{3(r+2)\lambda_{\infty} t}\right)
    \end{align*}
    where the first line follows from \eqref{eq. bounds of the absolution value of hat n_t and tilde n_t} and \eqref{eq. Delta w}, the second line follows from Fubini's theorem, the third line follows from \eqref{eq. second moments of 1/p},
    and the last line follows from \Cref{proposition non-zero of the probability}.
    Finally, applying this and \eqref{eq. convergence proof of FKG-SSA the second part} to \eqref{eq. for comparing the convergence of FKG-SSA}, we prove the result. 
\end{proof}

\subsection{Proving \Cref{them convergence of the FKG-SSA with resampling}} \label{section proof the convergence of the FKG-SSA with resampling}

This subsection is devoted to proving \Cref{them convergence of the FKG-SSA with resampling}, i.e., the convergence of the FKG-SSA when resampling is presented.
First, we introduce a couple of propositions on the estimation of \Cref{algorithm death and division and influx}, which will be used later in the main proof. 

\begin{proposition}\label{proposition estimat of n_t L1 norm}
    Assume \Cref{Assumption 1}---\ref{assumption 2} hold. Let $T$ be the final time in \Cref{algorithm death and division and influx}, $N$ the sample size in this algorithm, and $\hat n_t$ the output of this algorithm.
    Then, there hold
    $$  \|\mu\|_1 \,
    e^{-\lambda_\infty T }
    ~\leq~ \|\hat n_T\|_1 
    ~\leq~ \|\mu\|_1 \,
    e^{\lambda_\infty T } \left(1+ T \|\lambda^\text{in}\|_1\right)
    $$
    with $\lambda_\infty = \max\left\{\|\lambda^\text{react}_1\|_{\infty}, \dots, \|\lambda^\text{react}_r\|_{\infty}, \|\lambda^\text{death}\|_\infty, \|\bar\lambda^\text{div}\|_\infty, \|\lambda^\text{in}\|_\infty \right\}$
    and 
    \begin{align*}
        &\mathbb P
        \left( \frac{\hat n_T(z)+\lambda^{in}(z)/N}{\|\hat n_T+\lambda^{in}/N||_1} 
        \leq  
        \frac{C_0\mu(z)}{2\|\mu\|_1}
        \right)
        \leq \exp\left(-N\,\frac{\mu^2(z)}{2\|\mu\|_1^2} \, e^{-2(r+2)\lambda_\infty T}   \right)
    \end{align*}
    for any $z$ in the support of $\lambda^\text{in}$, where $C_0$ is a positive constant given by
    \begin{align*}
        C_0 = \frac{e^{-(r+3)\lambda_{\infty}T}}{e^{\lambda_\infty T } \left(1+ T \|\lambda^\text{in}\|_1 \right) + \|\lambda^\text{in}\|_1/ \|\mu\|_1}.
    \end{align*}
\end{proposition}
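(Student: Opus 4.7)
The plan is to handle the two assertions in succession, exploiting throughout that the trajectories $\texttt{x}_1(t),\ldots,\texttt{x}_N(t)$ simulated in the FKG-SSA are i.i.d.\ samples from the law of $\tilde X(t)$: the interaction between particles in \Cref{algorithm death and division and influx} enters only through the weights, so the trajectories themselves remain mutually independent and standard concentration tools still apply.

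\textbf{The $L^1$-norm bounds.} I would first rewrite $\|\hat n_T\|_1=(\|\mu\|_1/N)\sum_{i=1}^N w_i(T)$, reducing everything to control of the total weight $S(t):=\sum_{i=1}^N w_i(t)$ with $S(0)=N$. The lower bound is immediate, because $\bar\lambda^\text{div}-\lambda^\text{death}\ge -\lambda_\infty$ and the inhomogeneous term in the weight ODE is non-negative, so $w_i(t)\ge e^{-\lambda_\infty t}$ pointwise. For the upper bound, the key is the telescoping identity
\[ \sum_{i=1}^N \frac{\lambda^\text{in}(\texttt{x}_i(t))}{\|\mu\|_1\,\hat p(t,\texttt{x}_i(t))}=\frac{N\|\lambda^\text{in}\|_1}{\|\mu\|_1}, \]
obtained by grouping the samples according to their common state $z$ and using $\sum_i\mathbb 1_z(\texttt{x}_i(t))=N\hat p(t,z)$ to cancel the $1/\hat p$ factor exactly. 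Combined with $\bar\lambda^\text{div}-\lambda^\text{death}\le \lambda_\infty$, this yields $\dot S(t)\le \lambda_\infty S(t)+N\|\lambda^\text{in}\|_1/\|\mu\|_1$, from which Gronwall together with $(e^{\lambda_\infty T}-1)/\lambda_\infty\le T e^{\lambda_\infty T}$ delivers the stated upper bound (the $\|\mu\|_1^{-1}\le 1$ factor produced by the inhomogeneous term is absorbed since $\mu$ is integer-valued with $\|\mu\|_1\ge 1$).

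\textbf{The concentration inequality.} For the numerator I would reuse the lower bound on $w_i$ to get $\hat n_T(z)\ge \|\mu\|_1 e^{-\lambda_\infty T}\hat p(T,z)$; for the denominator, the additivity $\|\hat n_T+\lambda^\text{in}/N\|_1=\|\hat n_T\|_1+\|\lambda^\text{in}\|_1/N$ (non-negativity of both functions), the $L^1$-upper bound just proved, and $N\ge 1$ together give $\|\hat n_T+\lambda^\text{in}/N\|_1\le \|\mu\|_1\bigl[e^{\lambda_\infty T}(1+T\|\lambda^\text{in}\|_1)+\|\lambda^\text{in}\|_1/\|\mu\|_1\bigr]$. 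A direct substitution, plugging in the bound $\tilde p(T,z)\ge (\mu(z)/\|\mu\|_1)e^{-(r+2)\lambda_\infty T}$ from \Cref{proposition non-zero of the probability}, shows that on the event $\{\hat p(T,z)\ge \tilde p(T,z)/2\}$ the ratio in the proposition already exceeds $C_0\mu(z)/(2\|\mu\|_1)$. It then remains to control the complementary event: since $N\hat p(T,z)=\sum_{i=1}^N \mathbb 1_z(\texttt{x}_i(T))$ is a sum of $N$ i.i.d.\ Bernoulli($\tilde p(T,z)$) variables, one-sided Hoeffding gives
\[ \mathbb P\bigl(\hat p(T,z)<\tilde p(T,z)/2\bigr)\le \exp\!\bigl(-N\,\tilde p(T,z)^2/2\bigr), \]
and substituting the lower bound on $\tilde p(T,z)$ reproduces exactly the exponent in the statement.

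\textbf{Main obstacle.} The only genuinely non-routine ingredient is the telescoping identity underlying the $L^1$-upper bound: the particles enter the inhomogeneous term of the weight ODE both through $\lambda^\text{in}(\texttt{x}_i)$ and through $\hat p(t,\texttt{x}_i)$, and the precise cancellation of the two is what prevents $\sum_i w_i(t)$ from blowing up superlinearly in $N$. Once this is in hand, the concentration step is a single application of Hoeffding, and the particular value of $C_0$ is simply what must appear to make the constants line up when the numerator and denominator estimates are combined.
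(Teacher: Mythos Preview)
Your proposal is correct and follows the same approach as the paper: the telescoping cancellation of $1/\hat p$ for the $L^1$ upper bound (the paper writes out the variation-of-constants formula for each $w_i$ and then sums, you apply Gronwall to $S(t)$ directly, but the core step is identical), and the same reduction of the concentration estimate to Hoeffding for $\hat p(T,z)$ via the pointwise bound $w_i\ge e^{-\lambda_\infty T}$. One minor correction: your telescoping ``identity'' is in general only the inequality $\le N\|\lambda^{\text{in}}\|_1/\|\mu\|_1$, since states $z$ in the support of $\lambda^{\text{in}}$ with $\hat p(t,z)=0$ contribute nothing to the left side---but this is the direction you need, so the argument stands.
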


\begin{proof}
    First, we analyze $\|\hat n_T\|$.
    By solving the ODE for $w_i(t)$ in \Cref{algorithm death and division and influx}, we have 
    \begin{align}\label{eq. expression of w_i}
        w_i(t) = e^{\int_{0}^t \lambda^{D}(\texttt{x}_{i}(s)) \dd s}
        + \int_{0}^{t} \left(e^{\int_{s}^t\lambda^{D}(\texttt{x}_{i}(s)) \dd s}\right) \frac{\lambda^\text{in}(\texttt{x}_{i}(s))}{\hat p(t, \texttt{x}_{i}(s))} \dd s.
    \end{align}
    This also suggests that $e^{-\lambda_\infty T}<w_i(t)\leq e^{\lambda_\infty T}\left(1+\int_0^t \frac{\lambda^\text{in}(\texttt{x}_{i}(s))}{\hat p(t, \texttt{x}_{i}(s))} \dd s \right)$.
    Furthermore, by the expression $\hat n_t(x) = \frac{\|\mu\|_1}{N}\sum_{i=1}^{N} \mathbbold 1_{x}(\texttt{x}_{i}(t))w_{i}(t)$ (see \Cref{algorithm death and division and influx}), we can conclude $\|\hat n_t\|_1 = \frac{\|\mu\|_1}{N}\sum_{i=1}^N w_i(t)$, 
    $\|\hat n_T\|_1 \geq \|\mu\|_1 \,
    e^{-\lambda_\infty T }$,
    and
    \begin{align*}
        \|\hat n_T\|_1 
        &\leq \frac{\|\mu\|_1}{N}
        e^{\lambda_\infty T}
        \sum_{i=1}^N \left(1 + \int_0^t \frac{\lambda^\text{in}(\texttt{x}_{i}(s))}{\hat p(t, \texttt{x}_{i}(s))} \dd s \right)\\
        &= 
         \frac{\|\mu\|_1}{N}
        e^{\lambda_\infty T}
        \left(N + \int_0^t \sum_{z:\,\lambda^\text{in}(z)\neq 0 \,\&\, \hat p(s,z)>0 }
        N \lambda^\text{in}(z)
        \dd s \right) \\
        &\leq \|\mu\|_1\,
        e^{\lambda_\infty T}
        \left(1+T \|\lambda^\text{in}\|_1
        \right).
    \end{align*}
    
    Next, we analyze $\frac{\hat n_T(z)+\lambda^{in}(z)/N}{\|\hat n_T+\lambda^{in}/N||_1}$.
    Formula \eqref{eq. expression of w_i} suggests that $w_i(T) \geq e^{-\lambda_{\infty}T}$.
    Combining this with the definition of $\hat n_t(x)$, we can conclude that 
    $\hat n_T(z) \geq \|\mu\|_1 \hat p(T,z) e^{-\lambda_\infty T}$  for any $z$ in the support of $\lambda^\text{in}$ and furthermore
    $\left\{ \hat n_T(z) + \lambda^\text{in}(z)/N \leq  \frac{\mu(z)}{2} e^{-(r+3)\lambda_{\infty}T} \right\}
        \subset 
        \left\{
        \hat p(T,z)\leq \frac{\mu(z)}{2\|\mu\|_1}
        e^{-(r+2)\lambda_{\infty}T} 
    \right\}$.
    This together with the upper bound of $\|n_T\|_1$ suggests 
    \begin{align*}
        \left\{ \frac{\hat n_T(z)+\lambda^{in}(z)/N}{\|\hat n_T+\lambda^{in}/N||_1} \leq  \frac{C_0\mu(z)}{2\|\mu\|_1} \right\}
        &\subset \left\{ {\hat n_T(z)+\lambda^{in}(z)/N}\leq  \frac{\mu(z)}{2} e^{-(r+3)\lambda_{\infty}T} \right\}\\
        &\subset 
        \left\{
        \hat p(T,z)\leq \frac{\mu(z)}{2\|\mu\|_1}
        e^{-(r+2)\lambda_{\infty}T} 
        \right\}\\
        &\subset 
        \left\{
        \hat p(T,z)\leq \frac{1}{2}
        \tilde p(T,z) 
    \right\}.
    \qquad\qquad\quad
    \text{(\Cref{proposition non-zero of the probability})}
    \end{align*}
    Finally, combing this with Hoeffding's inequality, we have 
    \begin{align*}
        \mathbb P
        \left( \frac{\hat n_T(z)+\lambda^{in}(z)/N}{\|\hat n_T+\lambda^{in}/N||_1} \leq  \frac{C_0\mu(z)}{2\|\mu\|_1} \right)
        &
        \leq \mathbb P\left(
        \hat p(T,z)\leq \frac{1}{2}
        \tilde p(T,z) 
        \right) 
        \\
        & \leq 
        e^{-\frac{1}{2} \tilde p^2(T,z) N } 
        \qquad\qquad\qquad\quad
        \text{(Hoeffding's inequality)}
        \\
        & \leq \exp\left(-\frac{\mu^2(z)}{2\|\mu\|_1^2}e^{-2(r+2)\lambda_\infty T} N  \right)
        \quad
        \text{(by \Cref{proposition non-zero of the probability})}
    \end{align*}
    for any $z$ in the support of $\lambda^\text{in}$.
\end{proof}

\begin{proposition}\label{proposition bounds of the hat n}
    Assume \Cref{Assumption 1}---\ref{assumption 2} hold. Let $T$ be the final time in \Cref{algorithm death and division and influx}, $N$ the sample size in this algorithm, $\hat n_t$ the output of this algorithm, and $C_0$ the constants defined in \Cref{proposition estimat of n_t L1 norm}.
    Then, for any $z$ in the support of $\lambda^\text{in}$ and positive constant $C$, there hold 
    \begin{itemize}
        \item $\mathbb E\left[\exp\left\{-C \left(\frac{\hat n_T(z)+\lambda^\text{in}(z)/N}{\|\hat n_T + \lambda^\text{in}/N\|_1}\right)^2\right\}\right]
        \leq 
        \exp\left(-N \,\frac{\mu^2(z)}{2\|\mu\|_1^2}\,e^{-2(r+2)\lambda_\infty T}\right)
        + e^{-C\left(\frac{C_0\mu(z)}{2\|\mu\|_1}\right)^2}
        $
        \item $\mathbb E\left[ \left(\frac{\hat n_T(z)+\lambda^\text{in}(z)/N}{\|\hat n_T + \lambda^\text{in}/N\|_1}\right)^{-\ell}\right]\leq 
        \left(\frac{e^{-(r+3)\lambda_{\infty}T}\|\mu\|_1}{C_0\,\lambda^\text{in}(z)}\right)^\ell
        N^{\ell}
        \exp\left(-N \,\frac{\mu^2(z)}{2\|\mu\|_1^2}\,e^{-2(r+2)\lambda_\infty T}\right)
        + \left(\frac{C_0\mu(z)}{2\|\mu\|_1}\right)^{-\ell}$
        for $\ell = 2, 3$.
    \end{itemize}
    
\end{proposition}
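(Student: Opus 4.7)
The plan is to prove both bounds by the same \emph{split-into-two-cases} scheme based on the event
\[
  A \,\triangleq\, \left\{ \frac{\hat n_T(z)+\lambda^{\text{in}}(z)/N}{\|\hat n_T + \lambda^{\text{in}}/N\|_1} \leq \frac{C_0\mu(z)}{2\|\mu\|_1} \right\},
\]
whose probability is controlled by the second estimate in \Cref{proposition estimat of n_t L1 norm}:
$\mathbb P(A) \le \exp\!\bigl(-N\mu^2(z) e^{-2(r+2)\lambda_\infty T}/(2\|\mu\|_1^2)\bigr)$. On $A^c$ the integrand is bounded deterministically (by a quantity involving the threshold $C_0\mu(z)/(2\|\mu\|_1)$), while on $A$ the integrand can blow up but this is tamed by the small probability of $A$, provided we pair it with a deterministic upper bound on the integrand that holds almost surely.

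For the first bullet, on $A^c$ we have $\exp\{-C\cdot \mathrm{ratio}^2\} \le \exp\{-C(C_0\mu(z)/(2\|\mu\|_1))^2\}$ by monotonicity; on $A$ we use the trivial bound $\exp\{-C\cdot \mathrm{ratio}^2\} \le 1$. Combining them yields
\[
  \mathbb E\!\left[e^{-C\cdot \mathrm{ratio}^2}\right]
  \,\le\, \mathbb P(A) + e^{-C(C_0\mu(z)/(2\|\mu\|_1))^2},
\]
and then substituting the bound on $\mathbb P(A)$ from \Cref{proposition estimat of n_t L1 norm} gives the stated inequality.

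For the second bullet, the task is to deal with the explosive integrand $\mathrm{ratio}^{-\ell}$ on $A$. The key observation is that an almost sure polynomial-in-$N$ upper bound on $\mathrm{ratio}^{-\ell}$ is available: since $\hat n_T(z)\ge 0$ we have $\hat n_T(z)+\lambda^{\text{in}}(z)/N \ge \lambda^{\text{in}}(z)/N$, and from the first estimate in \Cref{proposition estimat of n_t L1 norm} together with the definition of $C_0$, $\|\hat n_T+\lambda^{\text{in}}/N\|_1 \le \|\mu\|_1 e^{\lambda_\infty T}(1+T\|\lambda^{\text{in}}\|_1) + \|\lambda^{\text{in}}\|_1/N \le \|\mu\|_1 e^{-(r+3)\lambda_\infty T}/C_0$. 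Hence $\mathrm{ratio} \ge C_0\lambda^{\text{in}}(z)/(N\|\mu\|_1 e^{-(r+3)\lambda_\infty T})$ a.s., so $\mathrm{ratio}^{-\ell} \le N^\ell \bigl(e^{-(r+3)\lambda_\infty T}\|\mu\|_1/(C_0\lambda^{\text{in}}(z))\bigr)^\ell$ a.s. On $A^c$ we again use $\mathrm{ratio}^{-\ell} \le (C_0\mu(z)/(2\|\mu\|_1))^{-\ell}$. Writing
\[
  \mathbb E[\mathrm{ratio}^{-\ell}] = \mathbb E[\mathrm{ratio}^{-\ell}\mathbbold 1_A] + \mathbb E[\mathrm{ratio}^{-\ell}\mathbbold 1_{A^c}]
\]
and applying the a.s.\ bound on $A$ together with $\mathbb P(A)\le \exp(-N\mu^2(z)e^{-2(r+2)\lambda_\infty T}/(2\|\mu\|_1^2))$ produces exactly the claimed inequality.

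The only mildly delicate step is the derivation of the deterministic lower bound on $\mathrm{ratio}$ used in the second bullet: it requires verifying the algebraic identity $\|\mu\|_1 e^{\lambda_\infty T}(1+T\|\lambda^{\text{in}}\|_1) + \|\lambda^{\text{in}}\|_1 = \|\mu\|_1 e^{-(r+3)\lambda_\infty T}/C_0$ from the definition of $C_0$, which is a one-line calculation. Everything else is bookkeeping, so I do not anticipate a genuine obstacle.
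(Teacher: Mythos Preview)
Your proposal is correct and follows essentially the same approach as the paper: split according to the event $A=\{q_z\le C_0\mu(z)/(2\|\mu\|_1)\}$, bound the integrand trivially on $A$ (by $1$ in the first bullet, by the a.s.\ lower bound $q_z\ge \lambda^{\text{in}}(z)/(N\|\hat n_T+\lambda^{\text{in}}/N\|_1)$ in the second), bound it by the threshold value on $A^c$, and then invoke \Cref{proposition estimat of n_t L1 norm} for both $\mathbb P(A)$ and $\|\hat n_T\|_1$. The algebraic identity you flag is exactly what the paper (implicitly) uses to rewrite the coefficient in the second bullet.
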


\begin{proof}
    We denote a random variable $q_z \triangleq \frac{\hat n_T(z)+\lambda^\text{in}(z)/N}{\|\hat n_T + \lambda^\text{in}/N\|_1}$ for any $z$ in the support of $\lambda^\text{in}$.
    Then, the first result can be proven by
    \begin{align*}
        \mathbb E\left[e^{-C q_z^2}
        \right]
        &\leq
        \mathbb E\left[
        \mathbbold 1\left( q_z \leq \frac{C_0\mu(z)}{2\|\mu\|_1} \right)
        e^{-C q_z^2}
        \right] 
        + 
        \mathbb E\left[
        \mathbbold 1\left( q_z > \frac{C_0\mu(z)}{2\|\mu\|_1} \right)
        e^{-C q_z^2}
        \right] \\
        &\leq \mathbb P\left(q_z\leq \frac{C_0\mu(z)}{2\|\mu\|_1}\right)
        + e^{-C\left(\frac{C_0\mu(z)}{2\|\mu\|_1}\right)^2} \\
        &\leq \exp\left(-N \,\frac{\mu^2(z)}{2\|\mu\|_1^2}\,e^{-2(r+2)\lambda_\infty T}\right)
        + e^{-C\left(\frac{C_0\mu(z)}{2\|\mu\|_1}\right)^2}
    \end{align*}
    where the last line follows from 
    \Cref{proposition estimat of n_t L1 norm}.
    For $\ell = $ $2$ or $3$, we similarly have 
    \begin{align*}
        &\mathbb E\left[ q_z^{-\ell}\right] 
        \leq 
        \mathbb E\left[ 
        \mathbbold 1\left( q_z\leq \frac{C_0\mu(z)}{2\|\mu\|_1} \right)
        q_z^{-\ell}\right]  
        + 
        \mathbb E\left[ 
        \mathbbold 1\left( q_z > \frac{C_0\mu(z)}{2\|\mu\|_1} \right)
        q_z^{-\ell}\right] \\
        & \leq 
        \left(\frac{\lambda^\text{in}(z)/N}{\|\hat n_T\| + \|\lambda^\text{in}\|}\right)^{-\ell}
        \mathbb P\left(
            q_z \leq \frac{C_0\mu(z)}{2\|\mu\|_1}
        \right)
        +\left(\frac{C_0\mu(z)}{2\|\mu\|_1}\right)^{-\ell}\\
        & \leq
        \left(\frac{\|\mu\|_1 \,
    e^{\lambda_\infty T } \left(1+ T \|\lambda^\text{in}\|_1 \right)+ \|\lambda^\text{in}\|_1}{\lambda^\text{in}(z)}\right)^\ell
        N^{\ell}
        \exp\left(-N \,\frac{\mu^2(z)}{2\|\mu\|_1^2}\,e^{-2(r+2)\lambda_\infty T}\right)
        +\left(\frac{C_0\mu(z)}{2\|\mu\|_1}\right)^{-\ell}
    \end{align*}
    where the last line follows from the upper bounds of $\|\hat n_T\|_{1}$ (see \Cref{proposition bounds of the hat n}).
    This proves the second result. 
\end{proof}

With these propositions, we can now prove \Cref{them convergence of the FKG-SSA with resampling}.

\begin{proof}[\textbf{Proof of \Cref{them convergence of the FKG-SSA with resampling}}]

First, we introduce the scheme of this proof. 
Here, we denote $n_t$ as the unique solution of \eqref{eq n_t dynamics}, $\hat n_t$ as the output of \Cref{algorithm resampling}, and $N$ as the sample size in \Cref{algorithm resampling}.
In addition, we introduce an auxiliary process $\bar n_t$, which is defined as $\bar n_t = n_t$ in the time interval $t\in[0,t_1]$. 
In other time periods $(t_k, t_{k+1}]$ (with $k \in \left\{1, \dots, M \right\}$), $\bar n_t$ is defined as the unique solution of the differential equation
\begin{align}\label{eq. definition of bar n}
    \dot u_t
    = \left(A_\text{react} - \lambda^\text{death} + A_\text{div}\right) u_t + \lambda^\text{in}   \quad \text{for $t\in (t_k, t_{k+1}]$}
    && \text{and}
    && u_{t_k} = \hat n_{t_k} + \lambda^\text{in}/N.
\end{align}
The well-definiteness of $\bar n_t$ is guaranteed by \Cref{proposition uniqueness of the mean dynamics} under the assumed conditions. 
Then, we can decompose the error of $\hat n_t$ by 
\begin{align}\label{eq error decomposition for the algorithm with resampling}
    \mathbb E
    \left[\|\hat n_t - n_t\|_2^2\right]
    ~\leq~ 
    2\,\mathbb E\left[\|\hat n_t - \bar n_t\|_2^2\right]
    + 
    2\,\mathbb E\left[
        \|\bar n_t - n_t\|_2^2
    \right].
\end{align}
In the following, we prove the theorem by analyzing the two terms on the right-hand side of this inequality. 
Specifically, we use mathematical induction to prove the following claims.
\begin{enumerate}[label=\textit{Claim \arabic*}, leftmargin=.8in]
    \item \label{claim the error of bar n}
    For any $t\in[0,T]$, there exist positive constants $\bar C_{t,1}$ and $\bar C_{t,2}$ such that  
        $$\mathbb E\left[\|\bar n_t -  n_t\|_2^2\right]\leq \bar C_{t,1} N^2e^{-\bar C_{t,2}N}
    + \frac{\bar C_{t,1}}{N}.$$
    \item \label{claim the error of hat n} 
    For any $t\in[0,T]$, there exist positive constants $\bar C_{t,1}$ and $\bar C_{t,2}$ such that 
    $$\mathbb E\left[\|\hat n_t - \bar n_t\|_2^2\right]\leq \bar C_{t,1} N^2 e^{-\bar C_{t,2}N}
    + \frac{\bar C_{t,1}}{N}.$$
    \item \label{claim boundedness of the L1 norm}
    For any $k\in\{1,\dots, M+1\}$, there exist positive constants $C^\text{U}_k$ and $C^\text{L}_k$ such that $C^\text{L}_{k} \leq \|\hat n_{t_k}\|_1 \leq C^\text{U}_{k}$.
    \item \label{claim expectation of the exponential term}
    For any $k\in\{1,\dots, M+1\}$, there exist positive constants $\bar C_{k,1}$ and $\bar C_{k,2}$ such that
    \begin{align*}
        \mathbb E\left[\exp\left\{-C \left(\frac{\hat n_{t_k}(z)+\lambda^\text{in}(z)/N}{\|\hat n_{t_k} + \lambda^\text{in}/N\|_1}\right)^2\right\}\right]
        \leq 
        \bar C_{k,1} e^{-\bar C_{k,2} N}
        + e^{-C \bar C_{k,2}}
    \end{align*}
    for any $z$ in the support of $\lambda^\text{in}$ and any positive constant $C$.
    \item \label{claim expectation of 1/p term}
    For any $k\in\{1,\dots, M+1\}$, there exist positive constants $\bar C_{k,1}$ and $\bar C_{k,2}$ such that
    \begin{align*}
        \mathbb E\left[ \left(\frac{\hat n_{t_k}(z)+\lambda^\text{in}(z)/N}{\|\hat n_{t_k} + \lambda^\text{in}/N\|_1}\right)^{-\ell}\right]\leq 
        \bar C_{k,1} N^{\ell} e^{-\bar C_{k,2} N}
        + \bar C_{k,1}
    \end{align*}
    for any $z$ in the support of $\lambda^\text{in}$ and $\ell = $ $2$ or $3$.
\end{enumerate}
Once they are proven, the theorem follows immediately from \ref{claim the error of bar n}, \ref{claim the error of hat n}, and \eqref{eq error decomposition for the algorithm with resampling}.

Next, we prove all the claims using mathematical induction. 
First, for any $t\in[0,t_1]$, there holds $\bar n_t = n_t$ by definition, and, thus, \ref{claim the error of bar n} automatically holds.
Since in this region $\hat n_t$ is the output of \Cref{algorithm death and division and influx} with initial condition $\mu$, \Cref{them convergence of the FKG-SSA} directly proves \ref{claim the error of hat n} for $t\in [0,t_1]$.
Moreover, for $k=1$, \ref{claim boundedness of the L1 norm} follows immediately from \Cref{proposition estimat of n_t L1 norm}, and 
\ref{claim expectation of the exponential term} and \ref{claim expectation of 1/p term} follow from \Cref{proposition bounds of the hat n} and finiteness of the support of $\lambda^\text{in}$ (\Cref{assumption 2}). 

Then, we show that \ref{claim the error of bar n} and \ref{claim the error of hat n} are valid in the interval $[t_{k}, t_{k+1}]$, and that the remaining claims hold for index $k+1$, provided that \ref{claim the error of bar n} and \ref{claim the error of hat n} hold for $t\in[0, t_k]$ and the other claims are true for index $k$.

For \ref{claim the error of bar n}, we can first conclude $\frac{\dd}{\dd t} \left(\bar n_t - n_t\right)
= \left(A_\text{react} - \lambda^\text{death} + A_\text{div}\right) \left(\bar n_t - n_t\right)
$
for $t\in (t_k, t_{k+1}]$ according to \eqref{eq n_t dynamics} and \eqref{eq. definition of bar n}.
Notice that the operator $\left(A_\text{react} - \lambda^\text{death} + A_\text{div}\right) $ is bounded due to the boundedness of the rate functions (\Cref{Assumption 1}). 
By denoting the $L^1$-norm of this operator by $C_\text{op}$, we have 
\begin{align*}
   \|\bar n_t - n_t\|_1
   &= \left \| e^{ \left(A_\text{react} - \lambda^\text{death} + A_\text{div}\right) (t-t_k)} \left(\hat n_{t_k} + \lambda^\text{in}/N-n_{t_k} \right) \right\|_1 \\
   &\leq e^{C_\text{op} (t-t_k)}
   \left(\| \hat n_{t_k} -n_{t_k}\|_1
   + \frac{\|\lambda^\text{in}\|_1}{N}\right)
   && \forall t\in (t_k, t_{k+1}].
\end{align*}
and, furthermore,
\begin{align*}
    \mathbb E\left[
        \|\bar n_t - n_t\|^2_1
    \right]
    &\leq e^{2 C_\text{op} (t-t_k)}
    \left(
    2\,\mathbb E\left[
        \|\hat n_{t_k} - n_{t_k}\|^2_1
    \right]
    + \frac{2\|\lambda^\text{in}\|_1^2}{N^2}
    \right) 
    && \forall t\in (t_k, t_{k+1}].
\end{align*}
Combining this with \ref{claim the error of hat n} for $t=t_k$, we prove \ref{claim the error of bar n} for $t\in (t_k, t_{k+1}]$.

For \ref{claim the error of hat n}, we first notice that $\hat n_t$ (for $t \in (t_k, t_{k+1}]$) is the output of \Cref{algorithm death and division and influx} with the initial condition $\hat n_{t_k} + \lambda^\text{in}/N$, and $\bar n_t$ is the solution of \eqref{eq. definition of bar n} with the same initial condition. 
Also, the support of this initial condition contains the support of $\lambda^\text{in}$.
Therefore, by \Cref{them convergence of the FKG-SSA}, we can find positive $\tilde C_{t,1}$ and $\tilde C_{t,2}$ for any given $t\in(t_k, t_{k+1}]$ such that 
\begin{align*}
    &\mathbb E\left[\|\hat n_t - \bar n_t\|_2^2 \,\big|\, \hat n_{t_k} \right] \leq 
    \tilde C_{t,1} N^2 \left(
    \sum_{z:\,\lambda^\text{in}(z)\neq 0}
    e^{-\tilde C_{t,2}\,N \,\left(\frac{\hat n_{t_k}(z) + \lambda^\text{in}(z)/N}{\|\hat n_{t_k} + \lambda^\text{in}/N\|_1}\right)^2}\right) \\
    & ~~\qquad + \frac{\tilde  C_{t,1}}{N}
    \left(
    1 + \sum_{z:\,\lambda^\text{in}(z)\neq 0} 
    \left(\frac{\hat n_{t_k}(z) + \lambda^\text{in}(z)/N}{\|\hat n_{t_k} + \lambda^\text{in}/N\|_1}\right)^{-2}
    + \sum_{z:\,\lambda^\text{in}(z)\neq 0} 
    \left(\frac{\hat n_{t_k}(z) + \lambda^\text{in}(z)/N}{\|\hat n_{t_k} + \lambda^\text{in}/N\|_1}\right)^{-3}
    \right).
\end{align*}
By combining this with \ref{claim expectation of the exponential term} and \ref{claim expectation of 1/p term} (for index $k$), we can further find positive constants $\hat C_{k,1}$ and $\hat C_{k,2}$ such that 
\begin{align*}
    &\mathbb E\left[\|\hat n_t - \bar n_t\|_2^2 \right]\\
    &\leq \tilde C_{t,1} N^{2}
    \left( \sum_{z:\,\lambda^\text{in}(z)\neq 0} \hat C_{k,1} e^{-\hat C_{k,2} N }
    + e^{-\hat C_{k,2}\tilde C_{t,2} N}
    \right)\\
    &\quad + \frac{\tilde C_{t,1}}{N}
    \left[
        1
        + 
        \sum_{z:\,\lambda^\text{in}(z)\neq 0}
        \left(
            \hat C_{k,1} N^2 e^{-\hat C_{k,2} N} + \hat C_{k,1}
        \right)
        +
        \sum_{z:\,\lambda^\text{in}(z)\neq 0}
        \left(
            \hat C_{k,1} N^3 e^{-\hat C_{k,2} N} + \hat C_{k,1}
        \right)
    \right]\\
    &\leq  
    \tilde C_{t,1} \left(\sum_{z:\,\lambda^\text{in}(z)\neq 0} \left(1+ 3 \hat C_{k,1}\right)\right)
    N^{2} e^{- \min\{\hat C_{k,2}\,,\,\hat C_{k,2}\tilde C_{t,2}  \}N}
    + \frac{\tilde C_{t,1}\left(1 + 2\sum_{z:\,\lambda^\text{in}(z)\neq 0}
    \hat C_{k,1}
    \right)}{N}
\end{align*}
where the last line follows from the relation $N<N^2$.
This proves \ref{claim the error of hat n} for $t\in(t_{k},t_{k+1}]$.

For \ref{claim boundedness of the L1 norm}, we can conclude by \Cref{proposition estimat of n_t L1 norm} that 
\begin{align*}
     \left\|\hat n_{t_k} + \frac{\lambda^\text{in}}{N}\right\|_1 \,
    e^{-\lambda_\infty (t_{k+1}-t_{k}) }
    ~\leq~ \|\hat n_{t_{k+1}}\|_1 
    ~\leq~ \left\|\hat n_{t_k} + \frac{\lambda^\text{in}}{N} \right\|_1  \,
    e^{\lambda_\infty (t_{k+1}-t_{k})} \left[1+ (t_{k+1}-t_{k} )\|\lambda^\text{in}\|_1 \right].
\end{align*}
Combining this with \ref{claim boundedness of the L1 norm} (for index $k$), we can conclude 
\begin{align*}
    C^{L}_k e^{-\lambda_\infty (t_{k+1}-t_{k}) }
    ~\leq~ \|\hat n_{t_{k+1}}\|_1 
    ~\leq~
    \left(
        C^{U}_k + \|\lambda^\text{in}\|_1
    \right)
    e^{\lambda_\infty (t_{k+1}-t_{k})} \left[1+ (t_{k+1}-t_{k} )\|\lambda^\text{in}\|_1 \right]
\end{align*}
which proves \ref{claim boundedness of the L1 norm} for index $k+1$.

For \ref{claim expectation of the exponential term}, we first define a positive random variable $\tilde C_0$ by
\begin{align}\label{eq. definition of tilde C_0}
   \tilde C_0
    \triangleq 
    \frac{e^{-(r+3)\lambda_{\infty}(t_{k+1}-t_k)}}{e^{\lambda_\infty (t_{k+1}-t_k) } \left[1+ (t_{k+1}-t_k) \|\lambda^\text{in}\|_1 \right] + \frac{\|\lambda^\text{in}\|_1}{ \|\hat n_{t_k} + \lambda^\text{in}/N\|_1}}.
\end{align}
According to \eqref{claim boundedness of the L1 norm}, the quantity $ \|\hat n_{t_k} + \lambda^\text{in}/N\|_1$ is bounded both above and below, with its lower bound strictly greater than zero.
Therefore, the random variable $\tilde C_0$ is lower bounded.
We denote its lower bound by $\tilde C_0^{L}$. 
Then, according to \Cref{proposition bounds of the hat n}, we can find positive constant $\hat C$ such that 
\begin{align*}
    \mathbb E\left[\left. e^{-C \left(\frac{\hat n_{t_{k+1}}(z)+\lambda^\text{in}(z)/N}{\|\hat n_{t_{k+1}} + \lambda^\text{in}/N\|_1}\right)^2}
    \right| \hat n_{t_k}
    \right]
    &\leq 
    e^{-\hat C \left(\frac{\hat n_{t_{k}}(z)+\lambda^\text{in}(z)/N}{\|\hat n_{t_{k}} + \lambda^\text{in}/N\|_1}\right)^2 N }
    + e^{-C \left(\frac{\tilde C_0 \left(\hat n_{t_{k}}(z)+\lambda^\text{in}(z)/N\right)}{2 \|\hat n_{t_{k}} + \lambda^\text{in}/N\|_1}\right)^2 }\\
    & \leq 
    e^{-\hat C \left(\frac{\hat n_{t_{k}}(z)+\lambda^\text{in}(z)/N}{\|\hat n_{t_{k}} + \lambda^\text{in}/N\|_1}\right)^2 N }
    + e^{-\frac{C \left(\tilde C^{L}_0\right)^2}{4} \left(\frac{\hat n_{t_{k}}(z)+\lambda^\text{in}(z)/N}{\|\hat n_{t_{k}} + \lambda^\text{in}/N\|_1}\right)^2 }
\end{align*}
for any $z$ in the support of $\lambda^\text{in}$ and any positive constant $C$.
Then, by applying \ref{claim expectation of the exponential term} (with index $k$) to this inequality,
we further have 
\begin{align*}
    \mathbb E\left[e^{-C \left(\frac{\hat n_{t_{k+1}}(z)+\lambda^\text{in}(z)/N}{\|\hat n_{t_{k+1}} + \lambda^\text{in}/N\|_1}\right)^2}
    \right]
    &\leq 
    \bar C_{k,1} e^{-\bar C_{k,2}N}
    + e^{-\bar C_{k,2} \hat C N}
    + \bar C_{k,1} e^{-\bar C_{k,2}N}
    + e^{-\bar C_{k,2}\frac{C \left(\tilde C_0^L\right)^2}{4} }\\
    &\leq 
    \left(2 \bar C_{k,1} +1 \right)
    e^{-\min\{\bar C_{k,2} \,,\,\bar C_{k,2} \hat C\} N }
    + e^{-C \frac{\bar C_{k,2} \left(\tilde C_0^L\right)^2}{4} }
\end{align*}
where $\bar C_{k,1}$ and $\bar C_{k,2}$ are the constants given in \ref{claim expectation of the exponential term}.
This proves \ref{claim expectation of the exponential term} for index $k+1$.

For \ref{claim expectation of 1/p term}, we can prove it similarly as for \ref{claim expectation of the exponential term}.
\Cref{proposition bounds of the hat n} and the finiteness of the support of $\lambda^\text{in}$ suggest that there exist positive constants $\tilde C'_1$ and $\tilde C_2$ such that 
\begin{align*}
    &\mathbb E\left[\left.  \left(\frac{\hat n_{t_{k+1}}(z)+\lambda^\text{in}(z)/N}{\|\hat n_{t_{k+1}} + \lambda^\text{in}/N\|_1}\right)^{-\ell}
    \right| \hat n_{t_k}
    \right]\\
    &\leq 
    \left(\frac{\tilde C'_1 \|n_{t_k}+\lambda^\text{in}/N\|_1}{\tilde C_0}\right)^\ell
    N^{\ell}
    e^{-\tilde C_2 N \left(\frac{\hat n_{t_{k}}(z)+\lambda^\text{in}(z)/N}{\|\hat n_{t_{k}} + \lambda^\text{in}/N\|_1}\right)^2}
    + \left(\frac{2}{\tilde C_0}\right)^{\ell}
    \left(\frac{\hat n_{t_{k+1}}(z)+\lambda^\text{in}(z)/N}{\|\hat n_{t_{k+1}} + \lambda^\text{in}/N\|_1}\right)^{-\ell}\\
    &\leq 
    \left(\frac{\tilde C'_1 \|n_{t_k}+\lambda^\text{in}/N\|_1}{\tilde C_0^{L}}\right)^\ell
    N^{\ell}
    e^{-\tilde C_2 N \left(\frac{\hat n_{t_{k}}(z)+\lambda^\text{in}(z)/N}{\|\hat n_{t_{k}} + \lambda^\text{in}/N\|_1}\right)^2}
    + \left(\frac{2}{\tilde C_0^{L}}\right)^{\ell}
    \left(\frac{\hat n_{t_{k+1}}(z)+\lambda^\text{in}(z)/N}{\|\hat n_{t_{k+1}} + \lambda^\text{in}/N\|_1}\right)^{-\ell}
\end{align*}
for any $z$ in the support of $\lambda^\text{in}$ and $\ell\in\{2,3\}$.
Here, $\tilde C_0$ is defined in \eqref{eq. definition of tilde C_0}, and $\tilde C_0^{L}$ is its lower bound, as discussed in the previous paragraph. 

Then, by applying \ref{claim expectation of the exponential term} and \ref{claim expectation of 1/p term} to this inequality, we can further find constants $\tilde C_{k,1}$ and $\tilde C_{k,2}$ such that 
\begin{align*}
    &\mathbb E\left[ \left(\frac{\hat n_{t_{k+1}}(z)+\lambda^\text{in}(z)/N}{\|\hat n_{t_{k+1}} + \lambda^\text{in}/N\|_1}\right)^{-\ell}
    \right]\\
    &\leq \tilde C_1 N^{\ell}\left(\tilde C_{k,1}\, e^{-\tilde C_{k,2} N} + e^{-\tilde C_2 \tilde C_{k,2} N}\right)
     + \tilde C_1 \left(
    \tilde C_{k,1}N^{\ell}\, e^{-\tilde C_{k,2} N} + \tilde C_{k,1}
    \right)\\
    & \leq 
    \tilde C_1  \left(2\tilde C_{k,1}+1\right)N^{\ell} e^{-\min\{\tilde C_{k,2}\,,\, \tilde C_2 \tilde C_{k,2}\}N}
    + \tilde C_1 \tilde C_{k,1}
\end{align*}
for any $z$ in the support of $\lambda^\text{in}$ and $\ell\in\{2,3\}$.
This proves \ref{claim expectation of 1/p term} for index $k+1$.

By combining all these results, we prove all the claims in the entire time interval $[0,T]$ and for all the index $k\in\{0, 1, \dots, M\}.$
The theorem then follows immediately from \ref{claim the error of bar n}, \ref{claim the error of hat n}, and \eqref{eq error decomposition for the algorithm with resampling}
\end{proof}
\bibliographystyle{siamplain}
\bibliography{references}

\end{document}